\documentclass[11pt]{article}

\usepackage{fullpage}
\usepackage[font={small,it}]{caption}
\usepackage{subcaption}
\usepackage[usenames,dvipsnames]{xcolor}
\usepackage[compact]{titlesec}

\usepackage{typearea}
\paperwidth 8.5in \paperheight 11in
\typearea{14}

\makeatletter
\setlength{\parindent}{0mm}
\addtolength{\partopsep}{-2mm}
\setlength{\parskip}{5pt plus 1pt}
\addtolength{\textheight}{35pt}
\addtolength{\footskip}{-20pt}
\makeatother

\usepackage[linktocpage=true,
	pagebackref=true,colorlinks,
	linkcolor=BrickRed,citecolor=blue,hypertexnames=false]
	{hyperref}
	\usepackage{latexsym,graphicx,epsfig,color}
\usepackage{amsfonts,amssymb,amsmath,amsthm,amstext}
\usepackage{libertine}
\usepackage[libertine]{newtxmath}
\usepackage{enumitem}
\setlist{topsep=3pt, itemsep=0pt, leftmargin=*}
\usepackage{url,setspace}
\usepackage{multirow}
\usepackage[most]{tcolorbox}
\usepackage{rotating}
\usepackage{makeidx}
\usepackage{accents}
\usepackage{xspace}
\usepackage{algorithm}
\usepackage[noend]{algpseudocode}
\usepackage{bm}
\usepackage{changepage} 	
\usepackage{thmtools,thm-restate}
\usepackage{cleveref}
\usepackage{nicefrac}

\Crefname{algocf}{Algorithm}{Algorithms}
\crefname{algocfline}{line}{lines}
\Crefname{invariant}{Invariant}{Invariants}
\Crefname{claim}{Claim}{Claims}
\Crefname{subclaim}{Subclaim}{Subclaims}

\newcommand{\IGNORE}[1]{}
\allowdisplaybreaks


\makeindex

 \setlength{\parindent}{0pt}
 \setlength{\parskip}{5pt plus 1pt}

\newtheorem{theorem}{Theorem}[section]
\newtheorem{claim}[theorem]{Claim}

\newtheorem{lemma}[theorem]{Lemma}

\theoremstyle{definition}

\newtheorem{definition}[theorem]{Definition}

\usepackage{thmtools,thm-restate} 





\newcommand{\E}{\mathbb{E}}






\newcommand{\poly}{\operatorname{poly}}






\renewcommand{\emptyset}{\varnothing}



\newcommand{\R}{\mathbb{R}}

\newcommand{\floor}[1]{{\left\lfloor#1\right\rfloor}}
\newcommand{\ceil}[1]{{\left\lceil#1\right\rceil}}

\newcommand{\rmbig}{{\mathrm{big}}}
\newcommand{\rmsmall}{{\mathrm{small}}}
\newcommand{\fail}{{\mathrm{fail}}}

\newcommand{\bfx}{{\mathbf{x}}}
\newcommand{\bfy}{{\mathbf{y}}}

\newcommand{\Z}{\mathbb{Z}}

\newcommand{\calA}{{\mathcal A}}
\newcommand{\calC}{{\mathcal C}}
\newcommand{\calI}{{\mathcal I}}

\newcommand{\bfT}{\mathbf{T}}
\newcommand{\bfr}{\mathbf{r}}
\newcommand{\bfV}{\mathbf{V}}

\newcommand{\rmleft}{{\mathrm{left}}}
\newcommand{\rmright}{{\mathrm{right}}}

\usepackage[%
    font={small,sf},
    labelfont=bf,
    format=hang,    
    format=plain,
    margin=0pt,
    width=0.8\textwidth,
]{caption}
\usepackage[list=true]{subcaption}

\newcommand{\cost}{{\mathrm{cost}}}
\newcommand{\height}{{\mathsf{height}}}

\usepackage{mathtools,zref-savepos,xspace}

\newcommand{\bmat}{\ensuremath{\mathsf{bMat}}\xspace}
\newcommand{\obmp}{\ensuremath{\mathsf{ObMwRC}}\xspace}
\newcommand{\ogf}{\ensuremath{\mathsf{OGNF}}\xspace}
\newcommand{\olb}{\ensuremath{\mathsf{OLBwR}}\xspace}

\newcommand{\Opt}{\mathbf{Opt}}



\title{Online Unrelated-Machine Load Balancing and Generalized Flow with Recourse}

\author{
	Ravishankar Krishnaswamy\thanks{	(rakri@microsoft.com) 	Microsoft Research India.	}
		\and Shi Li \thanks{ (shil@cs.buffalo.edu) 		Department of Computer Science and Engineering, 		Universify of Buffalo.}
		\and Varun Suriyanarayana \thanks{        (varun.suriyanarayana@cornell.edu)        Cornell University.}
}


\begin{document}



\date{}

\maketitle







\thispagestyle{empty}
\begin{abstract}
We consider the online unrelated-machine load balancing problem with recourse, where the algorithm is allowed to re-assign prior jobs.  We give a $(2+\epsilon)$-competitive algorithm for the problem with $O_\epsilon(\log n)$ amortized recourse per job. This is the first $O(1)$-competitive algorithm for the problem with reasonable recourse, and the competitive ratio nearly matches the long-standing best-known offline approximation guarantee. 
We also show an $O(\log\log n/\log\log\log n)$-competitive algorithm for the problem with $O(1)$ amortized recourse. 
The best-known bounds from prior work are $O(\log\log n)$-competitive algorithms with $O(1)$ amortized recourse due to \cite{gupta2014maintaining}, for the special case of the restricted assignment model.  

Along the way, we design an algorithm for the online generalized network flow problem (also known as network flow problem with gains) with recourse. In the problem, any edge $uv$ in the network has a gain parameter $\gamma_{uv} > 0$ and $\theta$-units of flow sent across $uv$ from $u$'s side becomes $\gamma_{uv} \theta$ units of flow on the $v$'th side.  In the online problem, there is one sink, and sources come one by one. Upon arrival of a source, we need to send 1 unit flow from the source. A recourse occurs if we change the flow value of an edge. We give an online algorithm for the problem with recourse at most $O(1/\epsilon)$ times the optimum cost for the instance with capacities scaled by $\frac{1}{1+\epsilon}$.  The $(1+\epsilon)$-factor improves upon the corresponding $(2+\epsilon)$-factor of \cite{gupta2014maintaining}, which only works for the ordinary network flow problem.  As an immediate corollary, we also give an improved algorithm for the online $b$-matching problem with reassignment costs.
\end{abstract}

\newpage

\setcounter{page}{1}

\section{Introduction}
\label{sec:introduction}
Load balancing is one of the fundamental problems in online algorithms, due to its real-world motivations, as well as its clean formulation leading to the development of several techniques in online algorithms. In this paper we study the power of \emph{recourse/re-assignments} for the online load balancing problem \emph{on unrelated machines}. In the \olb (Online unrelated machine Load Balancing with Recourse) problem, we are given a set $M$ of $m$ machines, and $n$ jobs $[n]$ arrive online. Job $t \in [n]$, which arrives at time $t$, if assigned to machine $i$, would induce a load of $p_{it}$ on the  machine. The goal is to assign jobs to machines to minimize the maximum load of any machine, which is the sum of loads of jobs assigned to it. The algorithm can also re-assign prior jobs, and we separately track the \emph{recourse} to be the total number of re-assignments over the course of arrivals. 

The above is a very natural problem to study, since jobs/demands typically arrive in an online manner, and moreover, real-world systems (see, e.g. \cite{alakeel09}) often \emph{migrate jobs} between servers to achieve better balance. However, since migrating jobs is a disruptive operation, we seek to minimize the total number of movements while also ensuring nearly balanced assignments at all times. We formalize the twin objectives as follows. 

\begin{definition}[$\alpha$-competitive, $\beta$-amortized recourse Algorithms]
Let $\Opt^t$ denote the maximum load of any machine in an optimal assignment for the first $t$ jobs. Then, we say that an algorithm is $\alpha$-competitive with $\beta$-amortized recourse if the maximum load over all machines of the algorithm's assignment is most $\alpha \, \Opt^t$, and the total number of re-assignments done 
over the course of the first $t$ job arrivals is at most $\beta t$.
\end{definition}

This problem has received significant attention since the 1990s. The classical online load balancing problem has the same model as \olb, with the restriction that the online algorithm's assignments are irrevocable, i.e., no recourse is allowed. A series of works introduced several elegant ideas culminating with tight $\Theta(\log m)$-competitive algorithms \cite{AspnesAFPW97}, with matching lower bound for any randomized online algorithm. The power of recourse has also been studied for the load balancing problem, albeit from the perspective of handling job departures~\cite{westbrook2000load}, where it is evident that no online algorithm can have non-trivial competitive ratios without recourse. To the best of our knowledge, \cite{gupta2014maintaining} were the first to study the power of recourse to getting improved guarantees for job arrivals (the same setting as ours), and showed that $O(1)$-amortized recourse can yield $O(\log \log mn)$-competitive algorithms for online load balancing for the \emph{restricted assignment problem} (where each job can only be assigned to a subset $N(j)$ of machines, but it has the same processing time of $p_j$ on any of these machines). This bound represents an exponential improvement when compared to the $\Omega(\log m)$ lower bounds on the competitive ratio for the same model when no recourse is allowed.  
\emph{The central focus of this paper is in trying to understand if we can get similar improvements for the \olb on unrelated machines.} We obtain the following results:

\begin{theorem} \label{thm:unrelated-constant-approx}
For any constant $\epsilon > 0$, there is an efficient deterministic $(2+\epsilon)$-competitive algorithm for \olb on unrelated machines with $O\left(\frac{\log n \log(1/\epsilon)}{\epsilon^5}\right)$-amortized recourse. 
\end{theorem}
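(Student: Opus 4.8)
\emph{Proof plan.} The plan is to reduce \olb to maintaining a \emph{fractional} assignment with a new online generalized-flow algorithm, and then to round that assignment online; the classical factor $2$ of Lenstra--Shmoys--Tardos reappears in the rounding, and the extra $\epsilon$ comes from a guess-and-double wrapper. \textbf{Guessing $\Opt$.} Since $\Opt^t$ is non-decreasing in $t$, we maintain a threshold $\tau$ that is a power of $1+\epsilon$; after a standard preprocessing (on each machine, round up the sizes of jobs that are a $1/\poly(n)$ fraction of the current optimum, which costs an additive $\epsilon\,\Opt^t$) we may assume the relevant sizes span a polynomially bounded range, so over the whole input $\tau$ takes only $O(\epsilon^{-1}\log n)$ values. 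We always set $\tau$ to the smallest power of $1+\epsilon$ for which the reduction below can route all current jobs, re-initializing when that feasibility test fails and charging the re-initialization cost to the jobs currently present.

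\textbf{Reduction to generalized flow.} For a fixed $\tau$ we build a gain network: a source $s_t$ for every arrived job $t$, a node $u_i$ for every machine $i$, and a single sink $r$; for each pair with $p_{it}\le(1+\epsilon)\tau$ an edge $s_t\to u_i$ of unit capacity, unit cost, and \emph{gain} $\gamma=p_{it}$; and for each machine an edge $u_i\to r$ of gain $1$ and capacity $(1+\epsilon)\tau$. A feasible flow that ships one unit out of every source is exactly a fractional assignment $x$ with $\sum_i x_{it}=1$, with $x_{it}>0$ only when $p_{it}\le(1+\epsilon)\tau$, and with fractional machine loads $\sum_t p_{it}x_{it}\le(1+\epsilon)\tau$; scaling all capacities by $\tfrac1{1+\epsilon}$ yields the ``$\tau$-capacity'' network. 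Job arrivals are fed as source arrivals, and we run the online generalized network flow algorithm developed in this paper on the \emph{unscaled} network: it keeps a feasible flow whenever one exists and changes edge flow values in total at most $O(1/\epsilon)$ times the optimum cost of the $\tau$-capacity instance. With $\tau$ chosen minimal so that the $\tau$-capacity network routes all jobs, that optimum cost is at most the number $t$ of jobs (route each job at unit cost on the machine it uses in an optimal makespan-$\le\tau/(1+\epsilon)$ assignment when one exists), so the total fractional recourse is $O(t/\epsilon)$, i.e.\ $O(1/\epsilon)$ amortized; and minimality forces $\tau<(1+\epsilon)\Opt^t$, since $\Opt^t\le\tau/(1+\epsilon)$ would already let the previous threshold's smaller-capacity network route everything.

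\textbf{Online rounding and wrap-up.} We then maintain, online, an integral rounding of the fractional assignment --- via a dyadic/bucketing scheme with $O(\log(1/\epsilon))$ levels, so that each unit of fractional change triggers only $\poly(1/\epsilon)$ integral reassignments --- with the invariant that the integral load of each machine exceeds its fractional load by at most one assigned job; that job is edge-eligible and hence has size $\le(1+\epsilon)\tau$, so the integral makespan is at most (fractional makespan) $+(1+\epsilon)\tau\le 2(1+\epsilon)\tau<2(1+\epsilon)^2\,\Opt^t=(2+O(\epsilon))\,\Opt^t$; rescaling $\epsilon$ gives $(2+\epsilon)$-competitiveness, and everything is kept deterministic. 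Tracking the recourse through the three components --- $O(1/\epsilon)$-amortized fractional recourse from the flow algorithm, an $O(\epsilon^{-1}\log n)$ overhead from re-initializing across the $O(\epsilon^{-1}\log n)$ guesses, and a $\poly(1/\epsilon)$-with-a-$\log(1/\epsilon)$ blow-up from the rounding --- yields the claimed $O\!\big(\tfrac{\log n\,\log(1/\epsilon)}{\epsilon^{5}}\big)$ amortized bound.

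\textbf{Main obstacle.} Two steps carry the weight. First, designing the online generalized-flow algorithm with the stated $O(1/\epsilon)$-versus-$(1+\epsilon)$-scaled guarantee --- improving the $2+\epsilon$ factor of \cite{gupta2014maintaining} and handling gains rather than ordinary flow --- requires a careful augmenting-structure/continuous-update argument. Second, and most delicate here, is turning the fractional flow into an integral assignment online: the rounding must simultaneously (i) translate each unit of fractional recourse into only $\poly(1/\epsilon)$ integral reassignments and (ii) guarantee that the rounded load never exceeds the fractional load by more than one edge-eligible (hence $\le(1+\epsilon)\tau$) job, both held in an amortized, online sense. The crux is to make the flow algorithm's updates ``rounding-friendly'' --- supported on few augmenting paths/cycles per step --- so that the flow data structure and the rounding data structure compose cleanly; given that, the guessing wrapper and the makespan arithmetic are routine.
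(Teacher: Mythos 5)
The overall template you describe — maintain a $(1+\epsilon)$-competitive \emph{fractional} assignment by reducing to online generalized flow, then round online, with a guess-and-double wrapper for $T^*$ — is the same as the paper's. However, the rounding step, which you yourself flag as the crux, contains a genuine gap that also causes the recourse accounting to come out wrong.

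Specifically, you assert that the online rounding can be carried out ``via a dyadic/bucketing scheme with $O(\log(1/\epsilon))$ levels, so that each unit of fractional change triggers only $\poly(1/\epsilon)$ integral reassignments,'' with the invariant that each machine's integral load exceeds its fractional load by at most one edge-eligible job. That invariant is exactly the Shmoys--Tardos ``$+$one big job'' guarantee, and the paper achieves it by maintaining a 2-level partition into buckets and segments, creating a vertex of an auxiliary bipartite graph $H$ for each segment, and then running an online bipartite-matching algorithm on $H$ that reroutes along shortest augmenting paths whenever a segment is inserted or deleted. Crucially, the expansion guarantee available in $H$ is only $|N_H(A)| \geq \frac{|A|}{1-\epsilon}$, so shortest augmenting paths can have length $\Theta(\log_{1/(1-\epsilon)} n) = \Theta(\log n / \epsilon)$; consequently each segment update costs $O(\log n/\epsilon)$ reassignments, and each unit of fractional change costs $O(1/\epsilon^2)$ segment updates, giving $O(\log n/\epsilon^3)$ integral reassignments per unit of fractional change — not $\poly(1/\epsilon)$. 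You give no alternative construction that avoids the augmenting-path cascade, and the ``$+$one job'' invariant is precisely what forces it: if the target machine/segment is already at quota you must evict a job, and the eviction chain is an augmenting path whose length is governed by the expansion, not by $\epsilon$ alone.

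This also throws off the recourse arithmetic. In the paper, the $\log n$ factor in the final bound comes from the rounding (the augmenting-path length), while the guess-and-double wrapper contributes only $O(\log(1/\epsilon)/\epsilon)$ because the paper freezes stale jobs via an offline 2-approximation once the optimum grows by a $1/\epsilon$ factor, limiting each job to $O(\log_{1+\epsilon}(1/\epsilon))$ phase re-runs rather than $O(\epsilon^{-1}\log n)$ re-initializations. You instead attribute the $\log n$ to the naive guess-and-double and ask the rounding to be $\log n$-free; to make the stated bound work under that allocation you would need both the unjustified $\poly(1/\epsilon)$ rounding claim and a wrapper that still costs $O(\epsilon^{-1}\log n)$, and neither half is established. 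To fix the argument, use the online bipartite-matching data structure with the bucket/segment partition for the rounding (accepting its $O(\log n/\epsilon)$ per-update cost), and replace the naive guess-and-double with a staged freeze scheme so that the wrapper only contributes $O(\log(1/\epsilon)/\epsilon)$.
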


Note that we are able to get a competitive ratio bound nearly matching the best known offline approximation factor~\cite{ST93,lenstra1990approximation} for this classical problem, with a small amount of recourse.

\begin{theorem} \label{thm:unrelated-constant-recourse}
There is an efficient randomized $O(\log \log n/ \log \log \log n)$-competitive algorithm for \olb on unrelated machines with $O(1)$-amortized recourse. 
\end{theorem}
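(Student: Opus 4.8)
Since $\Opt^t$ is nondecreasing in $t$, I would run the algorithm in \emph{epochs} controlled by a guess $\lambda$ of the current optimal makespan that is only ever doubled: an epoch runs at a fixed $\lambda$, and the moment a newly arrived job cannot be scheduled with makespan $O(\lambda)$ we certify $\Opt^t > \lambda$, double $\lambda$, and begin the next epoch (so $\lambda \le 2\,\Opt^t$ always and the competitive ratio is affected only by a constant). Inside one epoch the plan is to model the problem as an instance of online generalized flow with a single sink: each arrived job $t$ is a source that must push one unit of flow, there is an arc from $t$ to machine $i$ with gain $p_{it}$ whenever $p_{it} \le \lambda$ and no arc otherwise, and an arc from $i$ to the sink of capacity $\Theta(\lambda)$. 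An integral unit-flow in this network is precisely an assignment of makespan $O(\lambda)$, and, exactly as in the argument underlying Lenstra--Shmoys--Tardos rounding, an integral assignment of makespan $\lambda + \max p \le 2\lambda$ can be extracted from any fractional flow.

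The engine is the paper's online generalized-flow-with-recourse algorithm: fed the sources of the current epoch one at a time, it maintains a feasible fractional flow with total recourse $O(1/\epsilon)$ times the optimum cost of the instance whose capacities are scaled by $1/(1+\epsilon)$; since routing the at most $n$ unit sources through this two-layer network costs $O(n)$, this is $O(1)$ amortized recourse \emph{within an epoch}, provided the scaled instance stays feasible, i.e.\ provided $\lambda \ge (1+\epsilon)\Opt^t$, which we guarantee by being slightly conservative when we double. It then remains to turn the maintained fractional flow into an integral assignment \emph{online}: I would use an incremental form of the iterative-rounding / cycle-cancelling argument, in which a unit change of fractional flow is absorbed by rerouting along a single alternating cycle or path, so that the integral recourse is within a constant factor of the fractional recourse. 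Randomization would enter here (and/or in the bucketing below), for instance to pick rounding thresholds or bucket boundaries that behave well against an oblivious adversary. Taken together, the ingredients so far already give, for a \emph{fixed, known} optimum, an $O(1)$-competitive assignment with $O(1)$ amortized recourse.

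The actual difficulty --- and the entire gap between \Cref{thm:unrelated-constant-approx} and the present theorem --- is how one moves between epochs (equivalently, how one glues together the $\Theta(\log n)$ relevant job-size scales). Re-solving the whole generalized-flow instance at every epoch boundary is what \Cref{thm:unrelated-constant-approx} does; because there can be $\Theta(\log n)$ epochs and each re-solve reroutes $\Theta(n)$ units, it spends $\Theta(\log n)$ amortized recourse. To get down to $O(1)$ amortized recourse we must re-solve only rarely; between re-solves we would \emph{freeze} the assignment built so far and invoke the engine only on the jobs that arrived since the last re-solve, placing them on a fresh $O(\lambda)$-budget slice of each machine. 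Each such freeze-and-append step is free in recourse but costs a constant factor in makespan, since an optimal assignment of the enlarged instance need not extend the frozen one, so a chain of such steps accumulates a makespan blow-up proportional to its length. The plan is to arrange the re-solves along a \emph{recursive} schedule that partitions the $\Theta(\log n)$ scales into groups growing by a $\Theta(\log\log n)$ factor per level, so that the recursion has depth $\Theta(\log\log n/\log\log\log n)$ --- which is also the accumulated makespan blow-up --- while the re-solves stay rare enough to keep the amortized recourse $O(1)$. The crux of the proof is making this accounting rigorous: bounding the makespan of a chain of freeze-and-append steps, keeping the online rounding consistent across a group, and checking that randomization makes the per-level losses \emph{add} rather than compound. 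The $\log\log\log n$-factor improvement over the $O(\log\log mn)$ bound of \cite{gupta2014maintaining} is ultimately what the gains/generalized-flow refinement buys us, since it lets a single flow solve absorb a whole geometric range of job sizes rather than one power-of-two size class at a time.
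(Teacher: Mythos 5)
Your proposal misattributes where the $O(\log\log n/\log\log\log n)$ factor comes from, and as a result the plan does not match the structure of the actual proof (and, in my view, would not work).

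You claim that for a fixed, known $T^*$, an incremental iterative-rounding / cycle-cancelling argument already gives $O(1)$-competitive integral assignments with $O(1)$ amortized recourse, and that the $\log\log n/\log\log\log n$ loss arises from a recursive freeze-and-append schedule across $\Theta(\log n)$ size scales (``epochs''). Both halves of this are off. First, the paper never moves between epochs or re-solves at size boundaries; the fractional solution is maintained for the single known $T^*$ throughout, with $O(1)$ amortized fractional recourse coming directly from \Cref{lemma:lb-maintain-fractional} with $\epsilon=1$. Second, and more importantly, the $O(1)$-competitive-with-$O(1)$-recourse subroutine you posit as the easy part is exactly the thing that does \emph{not} exist here: the whole tension in the theorem is that online rounding of a drifting fractional solution with $O(1)$ amortized integral recourse costs an $O(\log\log n/\log\log\log n)$ factor in the competitive ratio. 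The cycle-cancelling sketch you give does not control how many jobs a unit of fractional change can dislodge, and fixing that is where all the work lies. If your sketch were correct, the theorem would be stated with $O(1)$ competitiveness, which it is not.

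What the paper actually does for the rounding step is adapt the randomized online rounding framework of~\cite{LX21}. Jobs and machine--job edges are classified as big or small relative to $T^*/\log n$. Small jobs are rounded by acceptance--rejection sampling against precommitted global random seeds $(h^j_o,\theta^j_o)$, so that a change $|\bfx_j-\bfx^\circ_j|$ in the fractional assignment of $j$ only re-routes $j$ with probability $O(|\bfx_j-\bfx^\circ_j|)$. Big jobs go through a sparsification step ($x_{ij}$ rounded to $0$ or $1/\log n$ via precommitted thresholds $\delta_{ij}$), a randomized tentative assignment using the same seed sequence, a marking of machines whose tentative big-job load exceeds $\frac{c\log\log n}{\log\log\log n}\cdot T^*$, and a deterministic $2$-approximate re-solve on each connected component of failed jobs. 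The $\log\log n/\log\log\log n$ factor is precisely the balls-in-bins threshold in the marking step, i.e.\ it is a property of the rounding for a single scale, not an accumulation over scales. The recourse bound then comes from two facts: (i) the global seeds make every step of the rounding a Lipschitz function of $\bfx$, so each type of recourse event has probability $O(|\bfx_j-\bfx^\circ_j|)$ or smaller; and (ii) the $[LX21]$ small-component theorem (\Cref{thm:small-component}), which says that w.h.p.\ every failed-job component has only $\poly\log n$ machines, so re-solving a component costs little. Your plan has no analogue of the precommitted seeds, no mechanism for bounding how far a local fractional change propagates through the integral solution, and no analogue of the small-component theorem, which are the three load-bearing ideas.

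A smaller point: your diagnosis that the extra $\log n$ in \Cref{thm:unrelated-constant-approx} comes from $\Theta(\log n)$ epochs is also not how the paper gets it; there it comes from the $O(\log_\alpha n)$ augmenting-path length bound (\Cref{lemma:shortest-augmenting}) in the online $b$-matching subroutine used to round, not from epoch re-solves. So both the $\log n$ in Theorem~1 and the $\log\log n/\log\log\log n$ here are artifacts of the respective \emph{rounding} procedures, not of guessing or rescaling $T^*$.
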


Our algorithms work by maintaining a $(1+\epsilon)$-competitive fractional assignment online, and then rounding it in an online manner while ensuring that both steps do not modify the solution too much. Prior to our work, no algorithm that maintained constant-competitive fractional solutions with bounded recourse was known.
Indeed, a crucial observation central to our result is that the fractional assignments for unrelated machines can be seen as a special case of the \emph{generalized flow} (a.k.a min-cost flow with gains) problem. Here, we are given a directed graph $G = (V,E)$ with cost $c_e$ and capacities $\mu_e$ on the edge set $E$. However, unlike classical flows where the same amount of flow exits the edge as entering, in the generalized problem, the amount of flow exiting an edge is a scalar multiple of the amount entering. This is captured by \emph{gain factors} $\gamma_e$ on the edges, which represent the \emph{extent to which one unit of flow originating at one end-vertex of the edge gets scaled when it reaches the other end-vertex}. We are given a set $S$ of sources, each of which wants to send one unit of flow, and a sink $\tau$ which can absorb the flows. The goal is to maintain minimum cost flows which can send unit from the sources while ensuring flow conservation at all vertices $v \in V \setminus S \setminus \{\tau\}$. 
Since we seek to maintain online fractional solutions for \olb, we introduce and study the \ogf (Online Generalized Network Flow) problem. Here, the sources arrive one by one, and the algorithm needs to pay the incremental cost of sending one unit flow from the new source, on top of the existing flow. 
The goal is to minimize the total cost in comparison with the offline optimum. Informally, we obtain the following result, and state the formal version in Theorem~\ref{thm:flow}.

\begin{theorem} \label{thm:genflow}
There is an efficient deterministic $O\left(\frac1\epsilon\right)$-competitive algorithm for \ogf when the algorithm can violate edge capacities by a factor of $(1+\epsilon)$. 
\end{theorem}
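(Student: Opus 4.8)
The plan is to maintain the flow by \emph{successive minimum-cost augmentations} in the generalized residual graph, to keep the running solution a min-cost flow (so that its cost is automatically at most $\OPT_{1/(1+\epsilon)}$), and then to bound the recourse --- the total incremental cost, i.e.\ the cost-weighted total variation of the flow over all arrivals --- by exploiting the $(1+\epsilon)$ slack between our capacities $\mu_e$ and those against which we compete.

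Concretely, I would first fix the \emph{generalized residual graph} $G_f$ of a feasible flow $f$: an edge $e=uv$ with $f_e<\mu_e$ gives a forward arc $uv$ of residual capacity $\mu_e-f_e$, cost $c_e$ and gain $\gamma_{uv}$, while $f_e>0$ gives a backward arc $vu$ of residual capacity $\gamma_{uv}f_e$, cost $-c_e/\gamma_{uv}$ and gain $1/\gamma_{uv}$ (modulo a standard preprocessing/assumption ruling out profitable flow-generating cycles). Routing one more unit out of a new source $s_t$ is then a minimum-cost unit-generator computation in $G_f$, which I would solve in polynomial time by a shortest-path routine with maintained node potentials $\pi\colon V\to\R$ keeping all residual reduced costs non-negative, exactly as in the classical successive-shortest-path algorithm. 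A standard induction then gives the invariant that after $t$ arrivals the flow $f^{(t)}$ is a minimum-cost feasible flow (with respect to the true capacities $\mu_e$) routing $s_1,\dots,s_t$; in particular $\cost\big(f^{(t)}\big)\le \OPT^t_{1/(1+\epsilon)}$, since any flow feasible for the scaled capacities is feasible for us.

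The real work is bounding the recourse $\sum_t\sum_e c_e\,\bigl|f^{(t)}_e-f^{(t-1)}_e\bigr|$, which can be far larger than $\cost\big(f^{(n)}\big)$ because a shortest augmenting path may \emph{cancel} flow laid down by earlier augmentations. To limit cancellation I would not take an arbitrary shortest path: I would restrict the shortest-path search to forward arcs $uv$ with $f_e<\mu_e/(1+\epsilon)$ (``fresh'' capacity), resorting to the buffer region $[\mu_e/(1+\epsilon),\mu_e]$ and to backward (cancelling) arcs only when no fresh path exists. For the analysis I would fix an optimal flow $f^\star$ for all the sources under the capacities $\mu_e/(1+\epsilon)$ (used only in the proof, not by the algorithm), define a potential $\Phi_t=\sum_e c_e\cdot d_e\big(f^{(t)}_e,f^\star_e\big)$ measuring, in gain-weighted per-edge units, how far the current flow is from $f^\star$, and show that every augmentation is of one of two types: a ``progress'' step that moves $f$ toward $f^\star$ and whose recourse is $O(1/\epsilon)$ times the drop in $\Phi$, or a ``corrective'' step, of which only an $O(1/\epsilon)$ amount (per unit of capacity used) can occur, because each such step consumes part of the $\epsilon$-fraction of spare capacity separating $f$ from $f^\star$. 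Since $\Phi_0\le\cost(f^\star)\le\OPT_{1/(1+\epsilon)}$ and $\Phi_t\ge 0$, telescoping yields total recourse $O(1/\epsilon)\cdot\OPT_{1/(1+\epsilon)}$.

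The step I expect to be the main obstacle is making this charging argument coexist with the gain factors: in ordinary flow a backward unit exactly cancels a forward unit, but with gains a push along the backward arc $vu$ changes $f_{uv}$ through the factor $\gamma_{uv}$, so the ``distance to $f^\star$'' term $d_e(\cdot,\cdot)$ must be set up in exactly the right units for $\Phi$ to telescope simultaneously against the recourse and against the capacity budget, and one must also verify that flow-generating cycles do not reappear after the initial cancellation. A secondary obstacle is showing that, under the restricted ``spare-capacity'' augmentation rule, a unit-generator to $\tau$ always exists before cancellations are needed; this is where the hypothesis that $\OPT_{1/(1+\epsilon)}$ is feasible --- i.e.\ that all the sources can be routed within $\mu_e/(1+\epsilon)$ --- must be invoked, through a cut/flow-decomposition argument for generalized flows.
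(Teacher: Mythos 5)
Your proposal takes a genuinely different route from the paper, and in its current form has a gap at the heart of the analysis that you partly acknowledge but do not resolve.

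The paper does \emph{not} maintain a min-cost flow and does \emph{not} compare the trajectory of the flow to a fixed offline optimum $f^\star$. Its key move is to exploit the asymmetric cost model (only flow \emph{increases} cost): backward arcs in the residual graph are given cost $0$, not $-c_e/\gamma_{uv}$. This single choice kills the cancellation problem you are fighting, because undoing prior flow is free by definition. The paper then defines $\height_t(v)$ as the cost of the cheapest generalized augmenting object (a path to $\tau$, a flow-absorbing cycle through $v$, or a lollipop) in this zero-backward-cost residual graph, proves that heights are monotone non-decreasing over time, observes that the cost incurred by serving $s_t$ is at most $\height_t(s_t)\le\height_T(s_t)$, and finally shows that $\bigl(y_v:=\height_T(v)\bigr)_v$ is a feasible dual for the LP with capacities scaled by $\tfrac1{1+\epsilon}$. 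A one-line complementary-slackness calculation then gives $\sum_t\height_T(s_t)\le\tfrac{1+\epsilon}{\epsilon}C^\star$. No potential function, no distance-to-$f^\star$, no ``fresh capacity'' rule.

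Two concrete problems with your plan. First, your potential argument is the entire content of the theorem and it is not carried out: you yourself flag that it is unclear how to define $d_e(\cdot,\cdot)$ so that $\Phi$ simultaneously telescopes against recourse and against the $\epsilon$-slack budget once gain factors enter, and the claim ``only an $O(1/\epsilon)$ amount of corrective steps can occur per unit of capacity used'' has no mechanism behind it --- corrective steps can both consume and later free spare capacity, so a telescoping bound requires an actual monotone quantity, which you have not identified. Second, your two design choices contradict each other: if you restrict the shortest-path search to fresh forward arcs with $f_e<\mu_e/(1+\epsilon)$, you no longer find true minimum-cost augmentations, so the ``standard induction'' that $f^{(t)}$ remains a min-cost flow breaks down (and with it your guarantee that the reduced-cost/no-negative-cycle invariant can be maintained). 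Conversely, if you keep genuine min-cost augmentations, nothing prevents large cancellation and your recourse bound is unsupported. The paper sidesteps this dichotomy entirely by changing the cost of backward arcs; you would do well to adopt that device, after which the monotonicity-of-heights plus LP-duality argument closes the proof cleanly, including the gain factors, without needing flow decompositions or a fixed reference optimum.
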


In fact, the existing \olb results for restricted assignment follow a similar template. However, the reduction in~\cite{gupta2014maintaining} only applies to the restricted assignment setting, and results in an online version of the classical flow problem (without gains).~\Cref{thm:flow} represents a two-fold advancement over this. Firstly,~\cite{gupta2014maintaining} presents $O(1)$-competitive algorithms with $(2+\epsilon)$-factor capacity violations, while our results improve this to $(1+\epsilon)$ violation, and secondly, our algorithms can handle arbitrary gain factors, while the GKS-algorithms only applies to unit gains.

We now discuss our final algorithmic contribution. Indeed, since our results improve the capacity violation even for the regular flow problem, this immediately translates to an improvement for the so-called Online $b$-Matching with Reassignment Costs (\obmp) problem, where we have a bipartite graph $G = (L \uplus R, E)$. Right vertices $v \in R$ each have capacities $b_v \in \Z_{\geq 0}$, and the goal is to assign the left vertices while respecting the capacities. It is guaranteed that $G$ has a valid $b$-matching: a matching where every $u\in L$ is matched once and every $v \in R$ is matched at most $b_v$ times. In the online problem, the left vertices arrive one by one. When a left-vertex $u \in L$ arrives, it specifies its neighbors $N(u) \subseteq R$ in $G$, and a \emph{reassignment cost} $c_u \geq 0$. Each time we re(assign) a $u \in L$ to a (new) vertex $v \in R$, we pay a cost of $c_u$. We need to always maintain an assignment which violates the capacities by a small amount while incurring a small cost.

\begin{restatable}{theorem}{thmbmat}
\label{thm:bmat}
There is an efficient deterministic algorithm for \obmp which, with $O\left(\frac1\epsilon\right)$ amortized recourse, maintains a matching where each $v \in L$ is matched once and each $v \in R$ is matched at most $\ceil{(1+\epsilon)b_v}$ times. 
\end{restatable}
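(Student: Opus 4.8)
The plan is to realize \obmp as the special case of \ogf in which every gain factor equals $1$, and then invoke Theorem~\ref{thm:flow}.

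\emph{The reduction.} Fix $\delta>0$ with $(1+\delta)^2\le 1+\epsilon$ (so $1/\delta=O(1/\epsilon)$). I build a network with a single sink $\tau$; for every $v\in R$ I put a node $v$ and an edge $(v,\tau)$ with gain $1$, cost $0$, and capacity $(1+\delta)\,b_v$. When a left vertex $u\in L$ arrives, I make $u$ a new source (demanding one unit of flow) that arrives together with the edges $(u,v)$, $v\in N(u)$, each of gain $1$, unit capacity, and cost $c_u$ --- exactly the form of source arrival that \ogf handles. An integral feasible flow in this network is the same thing as an assignment of the arrived left vertices to neighbouring right vertices, with the flow value on $(v,\tau)$ equal to the number of vertices currently assigned to $v$; moreover reassigning $u$ from $v$ to $v'$ is precisely a unit decrease of the flow on $(u,v)$ together with a unit increase on $(u,v')$, which is charged at most $2c_u$ in the recourse of the flow, so the total reassignment cost is at most the flow recourse on the $L$-to-$R$ edges. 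Hence it is enough to run the \ogf algorithm and read off the induced assignment at every time.

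\emph{Applying Theorem~\ref{thm:flow}.} Scaling every capacity by $\tfrac1{1+\delta}$ gives $(v,\tau)$ capacity exactly $b_v$, and the valid $b$-matching promised in the input is then a feasible flow for this scaled instance (one unit out of each source along a single $L$-to-$R$ edge, at most $b_v$ units across each $(v,\tau)$) of cost exactly $\sum_u c_u$, the sum being over the arrived left vertices. Thus the optimum cost of the scaled instance is at most $\sum_u c_u$, so Theorem~\ref{thm:flow} maintains a flow that (i) serves all sources at all times, (ii) violates the capacities by a factor at most $1+\delta$, hence puts at most $(1+\delta)^2 b_v\le(1+\epsilon)\,b_v$ flow on each $(v,\tau)$, and (iii) is maintained with total recourse $O(\tfrac1\delta)\sum_u c_u=O(\tfrac1\epsilon)\sum_u c_u$. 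Reading this as an assignment (granting the integrality point below), each $v\in R$ is matched at most $\lfloor(1+\epsilon)b_v\rfloor\le\ceil{(1+\epsilon)b_v}$ times, each left vertex is matched exactly once (its out-flow is the integer $1$), and the total reassignment cost is $O(\tfrac1\epsilon)\sum_u c_u$, i.e.\ $O(\tfrac1\epsilon)$-amortized recourse; efficiency and determinism are inherited from Theorem~\ref{thm:flow} (cf.\ Theorem~\ref{thm:genflow}).

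\emph{The main obstacle.} The one step requiring real care is integrality: Theorem~\ref{thm:flow} a priori maintains a fractional flow, whereas Theorem~\ref{thm:bmat} asks for an integral assignment --- and we cannot recover integrality by invoking a generic online rounding routine, since that would push the recourse past $O(1/\epsilon)$. Instead I would argue that on this particular network --- all gains $1$, every source demand $1$, capacities $(1+\delta)b_v$ --- the \ogf algorithm of Theorem~\ref{thm:flow} can be implemented so as to maintain an integral flow: serving a newly arrived source is a single unit augmentation along a residual path, and each reassignment it triggers is a unit change along a residual cycle, so the flow stays integral throughout and neither the recourse nor the capacity-violation bound is affected. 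Verifying this integrality-preservation property of the algorithm behind Theorem~\ref{thm:flow} is, I expect, essentially the only obstacle to the proof.
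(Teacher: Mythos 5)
Your proposal follows the same route as the paper: realize \obmp as an instance of \ogf with all gains equal to $1$ (one arc $u\to v$ per adjacency and one arc $v\to\tau$ per right vertex), invoke Theorem~\ref{thm:flow}, and read the integral augmenting-path flow off as the matching, with the promised $b$-matching certifying that the scaled-down optimum costs $\sum_u c_u$. Two points, one cosmetic and one substantive.

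The cosmetic point: you have parsed Theorem~\ref{thm:flow} as if the algorithm \emph{both} overshoots the input capacities by $(1+\delta)$ \emph{and} is compared against the $(1+\delta)$-scaled-down optimum, leading to $(1+\delta)^2 b_v$. The theorem is a single bicriteria statement: the algorithm obeys whatever capacities it is given, and its cost is compared against the optimum of the instance with capacities divided by $(1+\epsilon)$. (Theorem~\ref{thm:genflow} is the same statement rescaled.) So with capacities $(1+\delta)b_v$ the flow on $(v,\tau)$ never exceeds $(1+\delta)b_v$, and $\delta=\epsilon$ already suffices; the $(1+\delta)^2\le1+\epsilon$ gymnastics is unnecessary (though it does no harm).

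The substantive point is the integrality step, which you rightly flag as the one place needing care but then leave with a genuine hole: you set the capacity of $(v,\tau)$ to $(1+\delta)b_v$, which is in general not an integer. Algorithm~\ref{alg:general-online} chooses the augmentation amount $\theta$ as the largest value keeping every $x_e\in[0,\mu_e]$; if some $\mu_e$ is fractional, $\theta$ can be fractional and the maintained flow need not remain integral, so ``serving a new source is a single unit augmentation'' fails exactly when a $(v,\tau)$ arc is the bottleneck. The paper sidesteps this by defining the capacity of $(v,\tau)$ to be the integer $\ceil{(1+\epsilon)b_v}$ from the start: then all capacities and demands are integral, gains are $1$, so the augmenting-path algorithm stays integral (and is also polynomial, since each augmentation moves at least one unit, so the LP shortcut in the ``Remark on Running Time'' is not needed here). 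The scaled-down capacity $\ceil{(1+\epsilon)b_v}/(1+\epsilon)\ge b_v$ still admits the promised $b$-matching, so the cost bound goes through unchanged. Replacing your $(1+\delta)b_v$ with $\ceil{(1+\epsilon)b_v}$ closes the gap and recovers exactly the paper's argument.
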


This improves over the prior work~\cite{gupta2014maintaining} where the capacity violation was a factor of $(2+\epsilon)$. 

Finally, one may ask what we can do in the fully-dynamic model for the online load balancing problem with recourse. That is, jobs may arrive and depart. In Appendix~\ref{appendix:FD-LB}, we show that in this case, even the offline algorithm which knows the whole sequence of the job arrivals and departures ahead of time need to incur an amortized recourse of $n^{\Omega(1/\alpha)}$, in order to achieve an $\alpha$-competitive ratio.  Thus, to circumvent the negative result, one needs to consider a different measurement for recourse for the fully dynamic model. 

\subsection{Our Techniques, at a High Level}
\noindent {\bf Load Balancing.} \cite{gupta2014maintaining} solved the restricted assignment problem using the following method: for every job $j$ with load $p_j$, they create a source $j$ of demand $p_j$; there is also a vertex $v_i$ for every machine $i$. The source $j$ has a directed edge to $v_i$ if and only if $j$ can be scheduled on $i$. Finally, there are directed edges of capacity $T^*$, the optimal makespan bound, from each $v_i$ to a common sink $\tau$. It is then easy to see that feasible flows in this network correspond to fractional solutions for the restricted assignment problem. \cite{gupta2014maintaining} therefore use their online flow algorithm to them maintain $O(1)$-competitive fractional solutions with $O(1)$-recourse. They then \emph{round} the fractional solution using a combination of randomized rounding and bucketing jobs based on their size to obtain the $O(\log \log mn)$-competitive ratio. However, note that both steps (computing the fractional solution as well as bucketing) crucially use the fact that a job has equal processing time $p_j$ on all its valid machines, and it is not clear how to extend either to the unrelated machines setting.

To circumvent these issues, we instead view the fractional solution to the unrelated machines problem as a \emph{generalized flow} instance, which is what motivated our study of the \ogf problem in the first place.
We can then immediately use~\Cref{thm:genflow} to maintain $(1+\epsilon)$-competitive fractional solutions with $O\left(\frac1\epsilon\right)$-recourse.  As for the rounding step, we first present a cleaner algorithm with $O(1)$-approximation ratio, that delivers the key ideas. We view each machine as $O(\log n)$ sub-machines, with sub-machine $v_{ik}$ only catering to jobs $j$ of size $p_{i,j} \approx T^*/2^k$. Then, suppose that in some fractional solution $\{x_{i,j}\}$, a total of $f_{ik}$ units of jobs is assigned to sub-machine $v_{ik}$. Then, if we are able to assign roughly $\ceil{f_{ik}}$ jobs in our rounded solution, we can guarantee that our load will not exceed that of the fractional solution by too much!
With this intuition, we can solve the rounding challenge as follows: we maintain an instance of online $b$-matching where jobs correspond to left vertices, and sub-machines $v_{ik}$ correspond to right vertices. Each right vertex also has a capacity \emph{equal to the total fractional allocation in the online LP solution, rounded up to the next integer}. Then, when the fractional solution is updated, we update the capacities of right vertices, and simply run a low-recourse algorithm for the online $b$-matching instance to recover our overall assignment. 

The online rounding algorithm with $(2+O(\epsilon))$-approximation ratio is more involved. We use the grouping idea of \cite{ST93} for the $(2+\epsilon)$-approximation for the generalized assignment problem. We maintain a 2-level partition of fractional jobs assigned to every machine $i$, according to non-increasing order of job sizes: They are first partitioned into \emph{buckets}, each of which contain roughly $\Theta(1/\epsilon)$ factional jobs, and then every bucket is partitioned into \emph{segments}, each of which contains $1-\Theta(\epsilon)$ fractional jobs. The bipartite matching instance can be constructed from the segments: Each segment corresponds to a vertex on the right side. A good fractional matching is guaranteed to exist by existence of the fractional assignments. As the segments are created by job sizes, a matching gives a $(2+O(\epsilon))$-approximate assignment, as in \cite{ST93}.

Our $L(n) = O\left(\frac{\log \log n}{\log \log \log n}\right)$-competitive algorithm with $O(1)$-amortized recourse in Theorem~\ref{thm:unrelated-constant-recourse} uses many ideas from the $L(m)$-competitive online rounding algorithm of \cite{LX21}. The offline version of our algorithm works as follows. A job $j$ is big on machine $i$ if $p_{ij} > T^*/\log n$ and small otherwise. Then a job is big if $1/2$ fractional of it is big on its assigned machines in the fractional solution, and small otherwise. For small jobs, independent rounding already gives an $O(1)$-approximation ratio. For big jobs $j$, with first round $x_{ij}$ values so that each $x_{ij}$ becomes either $0$ or at least $1/\log n$. This makes the support of $x$ for big jobs sparse.  Then we attempt to assign a big job to a machine randomly using the new $x_{ij}$ values. The assignment fails if the target machine is too overloaded. Finally, we use the deterministic 2-approximation rounding algorithm to round the failed jobs. To analyze the recourse, we use two crucial lemmas: A job $j$ fails with $1/\poly\log (n)$ probability, and with high probability, a connected component induced by failed jobs has size $\poly\log(n)$. Therefore, even if we reassign everything in such a connected component, the recourse is small. 

\noindent {\bf Generalized Flow.} Indeed \cite{gupta2014maintaining} studied the special case of this problem with unit gains, and showed that the natural algorithm of satisfying the requirement of a new source by sending the unit flow along the shortest path to the sink in the residual graph, actually works. Here, the residual graph might have some forward arcs and some backward arcs, but the GKS algorithm, assigns a cost of $c_e$ for forward arcs as well as backward arcs in the residual graph, since that is the cost of the augmentation incurs in their online model. This is in contrast to the classical residual graph for offline flow problems where backward arcs have opposite sign of costs. Now, they define a quantity ${\sf height}(s)$, which is the \emph{cost of the shortest path to the sink in the residual graph}, and show that it is a good estimate of the augmentation cost of $s$ when it arrives. Moreover, they show that ${\sf height}(s)$ is non-decreasing over arrival of subsequent sources. Finally, they show how we can construct a good dual solution to the offline LP formulation of the min-cost flow problem using the ${\sf height}(s)$ values, using which they relate the online and offline costs.

How do we extend these ideas to the generalized flow problem? Intuitively, when a new source arrives, a natural strategy might be to augment along the shortest path in the residual graph. While this is a reasonable idea, observe that the concept of augmenting paths is very different when there are gains. Indeed, we could potentially augment along a path in the residual graph from the source to a cycle which does not even contain the sink, provided the product of gains on the cycle is $< 1$ (these are called \emph{flow-absorbing cycles} in literature). 
Keeping this in mind, we generalize the notion of height above to be the \emph{minimum cost way to send one unit flow out of the source}, as opposed to the cost of the shortest path from source to sink. We then show that with these modifications, we can seamlessly apply the duality-based proof technique of \cite{gupta2014maintaining}. Finally, to improve the capacity violation factor to $(1+\epsilon)$, we actually set the cost of backward arcs to $0$ in the residual graph as opposed to a positive number. Up to a factor of 2 in the recourse, the two definitions are equivalent. This is also the reason that in the definition of our problem, we only incur costs for increasing flow values. However, this minor modification in defining the heights helps us perform a tighter analysis to get the improved the capacity violation bound.

\subsection{Related Work}
It is well known that without any form of recourse, the best possible competitive ratio for online load balancing is $O(\log n)$ \cite{AzarNR95}. When arrivals and departures are allowed, \cite{AzarBK94} give a lower bound of $O(\sqrt{n})$. Philip and Westbrook~\cite{PhillipsW93} considered the same case and showed $O(\log n)$-competitive algorithms with $O(1)$-recourse for the special case of unit-size restricted assignment (aka online matching). Westbrook~\cite{westbrook2000load} subsequently designed $O(1)$-competitive algorithms with $O(\log n)$-recourse for the same setting. The case of unrelated machines in the fully-dynamic case was considered in~\cite{AwerbuchAPW01} where the authors designed $O(\log n)$-competitive algorithms with $O(\log n)$-recourse. In the special case of identical machines, \cite{RudinC03} gave a lower bound $\sqrt{3}$ and an upper bound of $1.923$ is known from \cite{Albers99}. \cite{SandersSS09} presented a class of $(1+\epsilon)$-competitive algorithms with reassignments allowed, where the mitigation factor grows as $\epsilon$ gets smaller. When there are only arrivals and no departures, but recourse is allowed, \cite{gupta2014maintaining} showed that one can achieve a $O(\log \log nm)$ competitive ratio with $O(1)$-amortized recourse in the restricted assignment setting. \cite{BerndtEM22} extended this result to general re-assignment costs. Both papers also gave constant approximations with similar amount of reassignments in the special case of OGNF with all gains $\gamma_e = 1$. 

There is also significant work on designing offline algorithms for generalized flow problems (see e.g., Chapter 6 of ~\cite{flowbook}), and these algorithms have also been used to design fast approximation algorithms for makespan minimization~\cite{GAIRING}. We believe our work is the first to address the online version of generalized flow.

\paragraph{Organization:} In Section~\ref{sec:flow} we present the $O(\frac{1}{\epsilon})$ competitive online algorithm for generalized flow with a $(1+\epsilon)$-capacity violation, proving Theorem~\ref{thm:genflow}. Using this algorithm, in Section~\ref{sec:lb} we design an $O(1)$-competitive algorithm with $O(\log n)$ amortized recourse; the $(2+\epsilon)$-competitive rounding algorithm that proves Theorem~\ref{thm:unrelated-constant-approx} is deferred to Appendix~\ref{sec:2-approx-rounding}.  In Section~\ref{sec:const-recourse}, we present the $O(\frac{\log \log n}{\log \log \log n})$-competitive algorithm for the same problem with $O(1)$ amortized recourse, proving Theorem~\ref{thm:unrelated-constant-recourse}. In Appendix~\ref{appendix:FD-LB} we give the lower bound on the recourse required in the fully dynamic model. Missing proofs can be found in Appendix~\ref{appendix:missing-proofs}.

\paragraph{Notations} For a real number $a$, we use $(a)_+$ to denote $\max\{a, 0\}$. For a graph $H$ and a vertex $v$ in $H$, we shall use $N_H(v)$ to denote the set of neighbors of $v$ in $H$. For a directed graph $H$ and a vertex $v$ in $H$, we use $\delta^+_H(v)$ and $\delta^-_H(v)$ to denote the sets of outgoing and incoming edges of $v$ in $H$ respectively. When $H$ is $G$ in the context, we omit the subscript.  For online load balancing, we shall identify jobs with time steps: jobs are denoted as $[n]$, where job $t \in [n]$ arrives at time $t$.



\section{The Online Generalized Network Flow (\ogf) Problem}
\label{sec:flow}
In the generalized network flow problem, we are given a digraph $G=(V,E)$ with sources $S \subseteq V$ and a sink $\tau \in V \setminus S$, where the sources $S$ do not have incoming edges and the sink $\tau$ does not have outgoing edges in $G$.  
We are given vectors $\mu, c, \gamma \in \R_{>0}^E$, where for every $e \in E$, $\mu_e$, $\gamma_e$ and $c_e$ denote the capacity, gain and cost of the edge $e$ respectively. Every source $s \in S$ has $a_s > 0$ units of supply.  As in the ordinary network flow problem, our goal is for each $s \in S$ to send $a_s$ units flow in the network satisfying the flow conservation and edge capacity constraints, so as to minimize the cost. The generalization comes from the gain vector $\gamma$: when a vertex $u$ sends $\theta$ unit flow along an edge $e = uv$, $v$ will receive $\gamma_e \theta$ units flow from the edge.  Therefore, the problem can be formulated as the following LP, where we assume $a_v = 0$ for every $v \in V \setminus S \setminus \{\tau\}$.
\begin{equation}
	\min \sum_{e \in E} c_e x_e  \qquad \text{s.t.} \label{LP:generalized}
\end{equation}\vspace*{-12pt}
\begin{align*}	
	\sum_{e \in \delta^{+}(v)} x_e -\sum_{e \in \delta^{-}(v)} \gamma_e x_e&=a_v &\quad &\forall v \in V \setminus \{\tau\}\\
	0 \leq x_e &\leq \mu_e &\quad &\forall e \in E
\end{align*}
In the LP, $x_e$ for an edge $e = uv \in E$  indicates the amount of flow sent by $u$ through $e$. Both the capacity $\mu_e$ and the per-unit cost $c_e$ are defined w.r.t the flow on the sender's side of $e$.   We call any $\bfx \in \R_{\geq 0}^E$ satisfying the constraints in the LP a valid flow for the instance.  Notice that unlike the ordinary network flow problem, the balance parameter (i.e., the $a$-value) of the sink $\tau$ can not be decided by those of other vertices. Therefore we leave $a_\tau$ undefined and do not impose the flow conservation constraint on $\tau$.

\noindent {\bf Online Generalized Network Flow Problem (\ogf)}: We are initially given an instance of the problem $(G = (V, E), S, \tau, \mu, c, \gamma)$ with $S = \emptyset$.  In each time step $t = 1, 2, \cdots, T$, a new source $s_t \notin V$ arrives. We assume $a_{s_t} = 1$, as this suffices for our purpose. Along with $s_t$, we are given the outgoing edges of $s_t$, as well as their $\mu, c$ and $\gamma$ values.  After the arrival of $s_t$, we add it to $S$ and $V$, and its outgoing edges to $E$. 


For any time $t$, we need to maintain a valid flow $\bfx^{(t)} \in \R_{\geq 0}^E$ for the instance at time $t$. The cost incurred at this time step is defined as $\sum_{e \in E} c_e\big(x^{(t)}_e - x^{(t-1)}_e\big)_+$, where we assume undefined variables have value $0$. Our goal is to design an online algorithm with a small cost. 

We elaborate more on the definition of our cost; for better clarity, we focus on the ordinary network flow problem (that is, we have $\gamma_e = 1$ for all $e \in E$).  
 \cite{gupta2014maintaining} defined the cost incurred at step $t$ to be $\sum_{e \in E} c_e\cdot |x^{(t)}_e - x^{(t-1)}_e|$ in their model to accurately capture the notion of recourse for flow problems. In this cost model, decreasing the flow value along an edge would incurs a positive cost. This is in contrast to classical offline algorithms for flow where decreasing the flow value of an edge would incur a negative cost, thereby ensuring that the final cost of the flow would always be equal to the sum of costs incurred by each update. Our model is in between these models, where we charge flow increases with positive cost but omit charging flow decrements. However, note that our results also translate to results for the~\cite{gupta2014maintaining} cost model within a factor of $2$ in the cost  --- indeed, for any decrement in the flow value on an edge, we must have paid the cost when we increase the flow value. 

We prove the following main theorem in this section, which is a more formal description of Theorem~\ref{thm:genflow}: 
\begin{theorem}
	\label{thm:flow}
	Given any $\epsilon > 0$, there is an efficient deterministic algorithm for \ogf, such that the following holds. The cost incurred by the algorithm at any time is at most $\frac{1+\epsilon}{\epsilon} = O\big(\frac1\epsilon\big)$ times the cost of the optimum flow for the general network flow instance at the time, with all edge capacities scaled by $\frac{1}{1+\epsilon}$.
\end{theorem}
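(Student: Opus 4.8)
The plan is to follow the duality-based template of~\cite{gupta2014maintaining}, but with the two modifications sketched in the techniques section: (i) generalize the notion of ``height'' of a source $s$ to be the minimum cost of \emph{any} way to send one unit of flow out of $s$ in the residual graph (not just along a source-to-sink path, since with gains we may route into a flow-absorbing cycle), and (ii) set the cost of backward (residual) arcs to $0$ rather than $c_e$. I would first set up the residual graph $\gres$ carefully: for a current flow $\bfx$, the forward arc $uv$ has capacity $\mu_{uv}-x_{uv}$, gain $\gamma_{uv}$, and cost $c_{uv}$; the backward arc $vu$ (present when $x_{uv}>0$) has gain $1/\gamma_{uv}$, cost $0$, and capacity $\gamma_{uv} x_{uv}$. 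Define $\height^{(t)}(v)$ as the infimum, over all unit ``generalized flows out of $v$'' in $\gres^{(t)}$ (a flow that has excess $1$ at $v$, is conservative everywhere else except possibly absorbing flow at $\tau$ or on flow-absorbing cycles), of the total forward-arc cost. The algorithm, upon arrival of $s_t$, simply routes one unit out of $s_t$ optimally, i.e.\ augments along the structure achieving $\height^{(t)}(s_t)$, updating $\bfx$ accordingly.

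The two key structural lemmas to establish are: (a) \textbf{Monotonicity}: for every vertex $v$, $\height^{(t)}(v)$ is non-decreasing in $t$; and (b) \textbf{Cost accounting}: the cost the algorithm pays to serve $s_t$ is at most (a constant times) $\height^{(t)}(s_t)$ — in fact with the zero-cost backward arcs the payment is \emph{exactly} the forward cost of the augmenting structure, which is $\height^{(t)}(s_t)$. For (a), I would argue that pushing one unit of flow out of $s_t$ can only make residual costs go up: any new backward arc created has cost $0$, so it cannot create a cheaper way to route a unit out of any $v$; more carefully, one shows that an optimal unit-routing-out-of-$v$ in $\gres^{(t+1)}$ can be ``pasted together'' with the augmenting structure for $s_t$ to yield a valid unit-routing-out-of-$v$ in $\gres^{(t)}$ of no greater forward cost (cancelling flow along the newly reversed arcs). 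This cancellation argument is the standard one for flows, but needs care with the gain factors — the cancellation must be done multiplicatively along arcs, and one must check the excesses still work out. I expect this to be the main technical obstacle: making the ``paste and cancel'' argument rigorous in the generalized-flow setting, where the amount of flow that survives along a path is the product of gains, so cancelling a unit on one arc corresponds to cancelling different amounts elsewhere.

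Given (a) and (b), the online cost is $\sum_{t=1}^{T} \height^{(t)}(s_t)$. To lower-bound the offline optimum I would construct a feasible dual solution to the LP~\eqref{LP:generalized} (with capacities scaled by $\tfrac1{1+\epsilon}$) out of the height values. Assigning dual potential $\pi_v$ (say a suitably scaled/averaged version of the final heights $\height^{(T)}(v)$, with $\pi_\tau$ pinned appropriately) gives, for each edge $e=uv$, a reduced cost $c_e - \pi_u + \gamma_e \pi_v$; monotonicity plus the fact that the algorithm always routes optimally should give that reduced costs of \emph{saturated} forward edges are $\le 0$ and of all edges are $\ge$ some controlled negative quantity, so that the dual objective $\sum_s a_s \pi_s - \sum_e \mu_e (\text{negative part of reduced cost})$ is at least a constant fraction of $\sum_t \height^{(t)}(s_t)$. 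The capacity-scaling by $\tfrac1{1+\epsilon}$ and the factor $\tfrac{1+\epsilon}{\epsilon}$ enter here exactly as in~\cite{gupta2014maintaining}: an edge that the algorithm has driven to load $\mu_e$ has, in the scaled instance, ``overflow'' $\tfrac{\epsilon}{1+\epsilon}\mu_e$, and charging the accumulated height against this overflow (summed over the edge's saturation history) yields the $O(1/\epsilon)$ ratio. I would finish by combining: $\text{ALG} = \sum_t \height^{(t)}(s_t) \le \frac{1+\epsilon}{\epsilon}\cdot \text{OPT}_{\text{scaled}}$, using LP duality, which is exactly the claimed bound. A minor point to handle along the way is that the infimum defining $\height$ is attained and finite (it is finite as soon as there is \emph{some} feasible flow in the scaled instance, which we may assume; attainment follows from LP compactness after bounding the support).
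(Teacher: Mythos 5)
Your proposal follows essentially the same approach as the paper: same residual graph with zero-cost backward arcs, same generalized notion of height (cheapest way to push one unit out, allowing flow-absorbing cycles), same monotonicity lemma, and the same duality-based bound using heights as a dual certificate. You also correctly identify the monotonicity lemma as the technical crux; the paper proves it via a perturbation/contradiction argument (combine the used augmenting structure $f$ with the hypothetically cheaper post-augmentation structure $f^2$, scale by a small $\theta$, and cancel antiparallel residual arcs), which is one rigorous instantiation of your ``paste and cancel'' sketch, and it relies on a structural lemma that an optimal fractional augmenting structure is one of three shapes (path to $\tau$, cycle through $s$, or lollipop), which you gesture at but do not state.

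One place where your sketch could lead you astray is the final duality accounting. The paper does not charge heights against ``overflow summed over the edge's saturation history''; it is a static, single-shot argument at time $T$. Concretely, set $y_v := \height_T(v)$ and $z_{uv} := (y_u - \gamma_{uv}y_v - c_{uv})_+$. Unsaturated forward edges exist in the final residual graph, so $y_u \le \gamma_{uv}y_v + c_{uv}$ and $z_{uv}=0$ for them; for saturated edges the backward arc exists with cost $0$, giving $y_u - \gamma_{uv}y_v \ge 0$, so $z_{uv}\le y_u - \gamma_{uv}y_v$. Summing $\mu_{uv} z_{uv} = x_{uv} z_{uv}$ over saturated edges and using flow conservation (plus $y_\tau = 0$) telescopes to exactly $\sum_t y_{s_t}$, so the dual objective for the $\tfrac1{1+\epsilon}$-scaled instance is at least $\bigl(1 - \tfrac1{1+\epsilon}\bigr)\sum_t y_{s_t}$, yielding $\sum_t y_{s_t} \le \tfrac{1+\epsilon}{\epsilon} C^*$ by weak duality. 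No over-time charging is needed, and trying to sum over ``saturation history'' would be both more complicated and, as stated, not obviously correct (an edge can be saturated and unsaturated many times). I would replace that portion of your plan with the static telescoping argument above.
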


For convenience, we assume there are dummy edges from every vertex in $V \setminus \{\tau\}$ to $\tau$ with infinite capacity, cost $B$ for a sufficiently large $B$, and gain $1$, to always ensure feasibility in our definitions. When the original instance is feasible and $B$ is large enough, our algorithm will not use the dummy edges. 

\subsection{Augmenting Paths for General Network Flow}
Our algorithm for online generalized network flow can be defined in a straightforward manner: When a source $s_t$ arrives, we repeatedly find the cheapest ``generalized augmenting path'' from $s_t$ and augment the flow along the path as much as possible. The procedure terminates when $s_t$ sent 1 units of flow. Unlike the ordinary network flow problem, for which we can define an augmenting path as a simple path from $s_t$ to $\tau$, an augmenting path for the general network flow problem can be slightly more complex.

\begin{definition}[Residual Graph]
	\label{def:residual}
	Let $(G, S, \tau, \mu, c, \gamma)$ be a generalized network flow instance.  Let $\bfx \in \R_{\geq 0}^E$ be a vector such that $x_e \in [0, \mu_e]$ for every $e \in E$ (it is not required that $\bfx$ is a valid flow for the instance).  Then the \emph{residual graph} $G^\bfx = (V, E^\bfx)$ for $\bfx$ is defined as the graph containing vertices $V$ and the set $E^\bfx$ of edges, each $e \in E^\bfx$ with parameters $\mu^\bfx_e, c^\bfx_e$ and $\gamma^\bfx_e$. They are defined as follows. 
	\begin{itemize}
		\item For every $uv \in E$ with $x_{uv} < \mu_{uv}$, we have a \emph{forward} edge $uv \in E^\bfx$ with $\mu^\bfx_{uv} = \mu_{uv} - x_{uv}$, $c^\bfx_{uv} = c_{uv}$ and $\gamma^\bfx_{uv} = \gamma_{uv}$. 
		\item For every $uv \in E$ with $x_{uv} > 0$, we have a \emph{backward} edge $vu \in E^\bfx$ with $\mu^\bfx_{vu} = \gamma_{uv}x_{uv}$, $c^\bfx_{vu} = 0$ and $\gamma^\bfx_{vu} = 1/\gamma_{uv}$.
	\end{itemize}
\end{definition}
The definition can be viewed as an extension of the residual graph to the generalized network flow problem. We make two remarks here. First, a backward edge has cost $0$ instead of a negative cost, due to the cost defined in our online model. Second, $\theta$ units of flow sent via $uv$ on the $u$'s side transform into $\gamma_{uv}\theta$ units of flow on the $v$'th side. This gives the definition of $\mu^\bfx_{vu}$ and $\gamma^\bfx_{vu}$ for a backward edge $vu \in E^\bfx$. As is typical, we assume $G$ does not contain anti-parallel edges, so that $G^\bfx$ is a simple graph.

\begin{definition}[Fractional Augmenting Paths]
	\label{def:fractiona-augmenting}
	Let $(G, S, \tau, \mu, c, \gamma)$, $\bfx$, $G^\bfx = (V, E^\bfx)$ and vectors $\mu^\bfx, \gamma^\bfx$ and $c^\bfx$ be defined as in Definition~\ref{def:residual}.  Let $s \in V \setminus \{\tau\}$ (it may be possible that $s \notin S$).  A fractional augmenting path from $s$ in $G^\bfx$ is a vector $f \in \R_{\geq 0}^{E^\bfx}$ such that the excess flow $\sum_{e \in \delta^+_{G^\bfx}(v)} f_e - \sum_{e \in \delta^-_{G^\bfx}(v)} \gamma^\bfx_e f_e$ equals $1$ for $v = s$, and equals $0$ for $v \in V \setminus \{s, \tau\}$. 
	The cost of such an augmenting path $f$ is defined as $\cost(f) := \sum_{e \in E^\bfx}c^\bfx_e f_e$.
\end{definition}
Notice that the definition does not involve the capacities $\mu^\bfx_e$ of edges $e \in E^\bfx$: As long as $e$ exists in $E^\bfx$, $f_e$ can take any number in $\R_{\geq 0}$.

\begin{definition}[Augmenting Paths]
	\label{def:augmenting}
	Take all the notations in Definition~\ref{def:fractiona-augmenting} and assume $f$ is a fractional augmenting path from $s$ in $G^\bfx$. We simply say $f$ is an augmenting path  (without the word ``fractional'') if additionally it satisfies one of the following conditions:
	\begin{enumerate}[label=(\ref{def:augmenting}\alph*)]
		\item The support of $f$ is a path from $s$ to $\tau$ in $G^\bfx$.  \label{item:AP-path}
		\item The support of $f$ is a cycle $C$ in $G^\bfx$ containing $s$ but not $\tau$. \label{item:AP-cycle}
		\item The support of $f$ is the union of a cycle  $C$ in $G^\bfx$ not containing $s$ and $\tau$,  and a path in $G^\bfx$ from $s$ to $C$ that is internally disjoint from $C$. \label{item:AP-union}
	\end{enumerate}
\end{definition}

\begin{restatable}{lemma}{lemmanonfractional}
	\label{lemma:non-fractional}
	Consider a generalized network flow instance $(G, S, \tau, \mu, c, \gamma)$ and $\bfx \in \R_{\geq 0}^E$ satisfying that $x_e \in [0, \mu_e]$ for every $e \in E$. Let $v \in V \setminus \{\tau\}$. Then $f$, the minimum-cost fractional augmenting path  from $v$ in $G^\bfx$, assuming it exists, can be achieved at an augmenting path.  
\end{restatable}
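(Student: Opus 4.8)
The plan is to take any minimum-cost fractional augmenting path $f$ from $v$ in $G^\bfx$ and decompose it into "elementary" pieces, then argue that one of these pieces is itself an augmenting path of no greater cost. First I would show that such a minimum-cost $f$ exists whenever a fractional augmenting path exists: the set of fractional augmenting paths is a (possibly empty) polyhedron defined by the flow-balance equalities in Definition~\ref{def:fractiona-augmenting} together with $f \geq 0$, and on it we minimize the linear function $\cost(f) = \sum_e c^\bfx_e f_e$. Since all $c^\bfx_e \geq 0$, the objective is bounded below by $0$, so either the infimum is attained at a vertex of the polyhedron, or the polyhedron is empty; hence we may assume $f$ is such a vertex (equivalently, a basic feasible solution).

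Next I would perform a flow decomposition of $f$. Think of $f$ as a nonnegative flow on $G^\bfx$ with a single unit of "excess" injected at $v$ and (since generalized flows need not conserve) arbitrary absorption along flow-absorbing cycles and at $\tau$. Using the structure of generalized flows, decompose $\operatorname{supp}(f)$ into (i) paths from $v$ to $\tau$, (ii) cycles through $v$, (iii) "lasso" structures consisting of a path from $v$ to a cycle $C$ (with $C$ not containing $v$ or $\tau$) that is internally disjoint from $C$, and possibly (iv) cycles or path-to-cycle structures not touching $v$ at all. Each of the structures in (i)–(iii) can be scaled to carry exactly one unit of excess out of $v$ — i.e., each is (a scaling of) an augmenting path in the sense of Definition~\ref{def:augmenting}, cases \ref{item:AP-path}, \ref{item:AP-cycle}, \ref{item:AP-union} respectively. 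The structures in (iv) carry zero net excess out of $v$ and, since every $c^\bfx_e \geq 0$, contribute nonnegative cost; deleting them cannot increase the cost, so we may assume they are absent. Then, writing $f = \sum_k \lambda_k g_k$ as a convex-type combination where each $g_k$ is a unit-excess augmenting path and $\sum_k \lambda_k = 1$, $\lambda_k \geq 0$, linearity gives $\cost(f) = \sum_k \lambda_k \cost(g_k)$, so some $g_k$ has $\cost(g_k) \leq \cost(f)$. By minimality of $f$ this $g_k$ is also a minimum-cost fractional augmenting path, and it is an honest augmenting path, which proves the lemma.

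I expect the main obstacle to be making the flow-decomposition step rigorous in the generalized-flow setting: in ordinary flows, path/cycle decomposition is standard, but with gain factors $\gamma^\bfx_e$ the "amount of flow" is not constant along a path, so one must be careful about how an elementary structure is extracted and normalized, and about the fact that a lasso's path may traverse the flow-absorbing cycle more than once or wind around it. The clean way to handle this is to work at the level of $f$ as a nonnegative vector on edges and peel off supports greedily: repeatedly follow edges carrying positive $f$ out of $v$ until reaching $\tau$ or revisiting a vertex, which yields one of the three structural types; subtract a maximal scaled copy of it from $f$ (reducing the support), and recurse. One checks that each peeled structure, when scaled to extract a full unit of excess from $v$, satisfies the balance constraints of Definition~\ref{def:fractiona-augmenting} and thus is a fractional augmenting path of one of the three special forms; the residual after removing it still has excess $0$ or $1$ at $v$ as appropriate. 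Termination follows because the support strictly shrinks at each step. The only other point needing a brief remark is why the cycles in cases \ref{item:AP-cycle} and \ref{item:AP-union} are necessarily flow-absorbing (gain product $<1$): if the gain product were $\geq 1$, the cycle could not carry a finite positive flow consistent with the balance constraints, or could be scaled to decrease cost without bound unless its cost is $0$ — in which case it is still a valid augmenting path. I would fold this observation into the decomposition argument rather than treat it separately.
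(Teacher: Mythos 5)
Your proposal is correct, but it takes a genuinely different route from the paper's proof. Both begin by noting that the residual costs $c^\bfx_e$ are nonnegative so the LP minimum is finite and attained at a vertex (basic feasible) solution, but the paper never performs a flow decomposition. Instead it exploits the vertex-solution structure directly: on the support $E'$ with vertex set $V'$, a basic feasible solution has $|E'| \le |V' \setminus \{\tau\}|$; combined with the degree lower bounds (each $u \in V' \setminus \{v,\tau\}$ has in- and out-degree at least one, $v$ has out-degree at least one) and connectedness of the support, a short edge-versus-vertex count forces the support to be exactly a path to $\tau$, a cycle through $v$, or a lasso---the three cases of Definition~\ref{def:augmenting}. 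This sidesteps precisely the obstacle you flag: there is no need to make generalized-flow path/cycle/lasso decomposition rigorous, to track the gain-determined scaling of peeled pieces, or to argue away bicycles and other zero-excess circulations. Your peeling approach is viable and even proves something slightly stronger (any minimum-cost fractional augmenting path decomposes into a convex combination of genuine augmenting paths plus a droppable nonnegative-cost circulation), but two points need tightening: after each peel the residual excess at $v$ lies somewhere in $[0,1]$, not ``$0$ or $1$''; and once you commit to the decomposition you never actually use the vertex-solution structure beyond guaranteeing the minimum is attained, so most of that observation is vestigial in your proof, whereas it is the entire engine of the paper's.
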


The following claim will be useful later.
\begin{claim}
	\label{claim:break}
	Let $f$ be an augmenting path from some $s \in V \setminus \{\tau\}$ in $G^\bfx$ for some $\bfx$. Let $v \notin \{s, \tau\}$ be some vertex in the support graph of $f$. Then we can break $f$ into $f = f' + f''$ where $f'$ is a flow path in $G^\bfx$ that sends 1 unit flow from $s$ to $v$ (the flow received by $v$ may not be $1$), and $f''$ is a scaled augmenting path from $v$ in $G^\bfx$. 
\end{claim}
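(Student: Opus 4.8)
The plan is to argue by case analysis on the structure of the augmenting path $f$ (as classified in Definition~\ref{def:augmenting}), in each case exhibiting the decomposition explicitly. Recall $f$ is a unit augmenting path from $s$, so its support is either (a) an $s$--$\tau$ path $P$, (b) a cycle $C$ through $s$ but not $\tau$, or (c) a cycle $C$ avoiding both $s$ and $\tau$, together with an $s$--$C$ path $Q$ internally disjoint from $C$. In every case, $v$ lies on the support graph of $f$ and $v \notin \{s,\tau\}$.

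First I would handle case (a). Since the support is a simple path $P$ from $s$ to $\tau$, and $v$ is an internal vertex of $P$, I split $P$ at $v$ into $P_1$ (the $s$--$v$ prefix) and $P_2$ (the $v$--$\tau$ suffix). Because $f$ sends exactly $1$ unit out of $s$ along $P$, and $P_1$ carries this flow to $v$ with no other vertex losing or gaining excess, the restriction $f' := f|_{P_1}$ is precisely a flow path sending $1$ unit from $s$ to $v$ (the amount that actually \emph{arrives} at $v$ is some $\lambda > 0$, namely the product of gains along $P_1$, and need not equal $1$ --- which is exactly why the claim phrases it as ``the flow received by $v$ may not be $1$''). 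Then $f'' := f|_{P_2} = f - f'$ has excess $-\lambda$ at $v$... wait, more carefully: with the excess convention of Definition~\ref{def:fractiona-augmenting}, $f''$ has excess $\lambda$ at $v$ when we remove the incoming contribution; so $f''$ is $\lambda$ times a unit augmenting path from $v$ (an $s$--$\tau$ path is replaced by a $v$--$\tau$ path), i.e.\ a scaled augmenting path from $v$. For case (b), the argument is the same with $\tau$ replaced by the ``closing'' of the cycle: walking around $C$ starting at $s$, upon first reaching $v$ I cut, letting $f'$ be the $s$--$v$ portion (a flow path sending $1$ unit from $s$) and $f'' = f - f'$ the remaining arc of $C$ from $v$ back around; $f''$ scaled appropriately is a unit augmenting path from $v$ whose support is a path from $v$ to $s$ --- but this is not literally one of the forms (a)--(c). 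Here I would instead observe that $f''$ together with the implicit ``return'' structure forms, after scaling, a type (c) configuration or be content with the weaker reading that $f''$ is a scaled \emph{fractional} augmenting path from $v$; if the downstream use only needs fractional, this suffices, and otherwise Lemma~\ref{lemma:non-fractional} lets us replace it by an honest augmenting path of no greater cost.

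For case (c), there are two sub-cases depending on whether $v$ lies on the path $Q$ or on the cycle $C$. If $v \in Q$ (and $v \neq s$), I cut $Q$ at $v$: $f'$ is the $s$--$v$ prefix of $Q$, carrying $1$ unit from $s$, and $f'' = f - f'$ consists of the $v$--$C$ suffix of $Q$ plus the whole cycle $C$, which after scaling by the arriving amount $\lambda$ at $v$ is exactly a unit augmenting path from $v$ of type (c) (path from $v$ to $C$, then $C$). If instead $v \in C$, then $f'$ is $Q$ followed by the portion of $C$ from where $Q$ meets $C$ up to $v$ (a flow path from $s$ to $v$ sending $1$ unit out of $s$), and $f'' = f - f'$ is the remaining arc of $C$ from $v$ back to the meeting point --- which, scaled by the excess it creates at $v$, is a cycle-shaped unit augmenting path from $v$ of type (b) or type (c) with a trivial path. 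The main obstacle I anticipate is precisely this bookkeeping of gains: I must verify that the arriving flow $\lambda$ at $v$ is strictly positive (which holds since all $\gamma_e > 0$ and $f$ is supported on a connected structure reaching $v$) so that dividing $f''$ by $\lambda$ is legitimate, and that after the split the flow-conservation/excess identities at every vertex other than $s$, $v$ (resp.\ $\tau$) still hold --- this is a routine but slightly fiddly ``the excess is linear and supported only where the cut happens'' computation. A secondary subtlety is that in cases (b) and the $v\in C$ part of (c), the residual path $f''$ need not be \emph{simple} in the Definition~\ref{def:augmenting} sense unless one re-invokes Lemma~\ref{lemma:non-fractional}; I would note this and lean on that lemma to keep $f''$ an honest (not merely fractional) scaled augmenting path.
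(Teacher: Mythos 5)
The paper itself states Claim~\ref{claim:break} without a proof, so there is no textual proof to compare against; I am assessing your argument on its own merits. Your overall plan is right: split $f$ according to the three support shapes of Definition~\ref{def:augmenting}, taking $f'$ to be the portion of $f$ on the walk from $s$ until the first visit of $v$, rescaled/restricted so that it sends exactly $1$ unit out of $s$, and taking $f'' := f - f'$. Your treatment of case~\ref{item:AP-path} and of the subcase of~\ref{item:AP-union} with $v$ on the path $Q$ is correct as written.

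The gap is in case~\ref{item:AP-cycle} (and the $v\in C$ subcase of~\ref{item:AP-union}), where you assert that the support of $f''$ is ``a path from $v$ to $s$'' (respectively, ``the remaining arc of $C$ from $v$ back to the meeting point'') and therefore ``not literally one of the forms (a)--(c).'' That is wrong, and the worry it generates is unnecessary. On a flow-absorbing cycle $C$ with gain product $\Gamma<1$, the amount of $f$-flow on the first cycle edge out of $s$ (resp.\ out of the anchor vertex) is $1/(1-\Gamma)>1$, not $1$; peeling off the unit $f'$ therefore leaves a strictly positive residual $f''_{e} = f_e - f'_e > 0$ on \emph{every} edge of $C$, including the $s$-to-$v$ arc. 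So the support of $f''$ is the entire cycle $C$, which does contain $v$ and does not contain $\tau$ --- exactly shape~\ref{item:AP-cycle}. A short computation (check one edge, then propagate by the conservation identities $f_{e_{i+1}}=\gamma_{e_i}f_{e_i}$) shows that $f''$ equals $\lambda$ times the unit type-\ref{item:AP-cycle} augmenting path from $v$ around $C$, where $\lambda$ is the amount $f'$ delivers to $v$; no appeal to Lemma~\ref{lemma:non-fractional} is needed.

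Moreover, the fallback you propose --- invoking Lemma~\ref{lemma:non-fractional} to replace $f''$ by ``an honest scaled augmenting path'' --- does not work here even in principle. That lemma concerns the minimum-cost fractional augmenting path and would generally produce a \emph{different} vector $\tilde f''$, which would destroy the exact identity $f = f' + f''$ required by the claim; the claim is a structural decomposition of the given $f$, not an optimization statement. Fortunately, as explained above, the decomposition you wrote down already satisfies the claim on the nose, so nothing needs to be patched --- only the verification needs to be redone to recognize that $f''$'s support is the full cycle.
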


Now given a (fractional) augmenting path $f$ from $s$ in $G^\bfx$, and a real number $\theta > 0$,  we define the operation of augmenting $\bfx$ by $\theta$ units using $f$ as follows:
\begin{itemize}
	\item For every forward edge $uv \in E^\bfx$ with $f_{uv} > 0$, we update $x_{uv} \gets x_{uv} + \theta\cdot f_{uv}$. 
	\item For every backward edge $vu \in E^\bfx$ with $f_{vu} > 0$, we update $x_{uv} \gets x_{uv} - \theta \cdot f_{vu} / \gamma_{uv}$. 
\end{itemize}

\begin{claim}
	Let $f$ be an augmenting path from $s$ in $G^\bfx$. Then, augmenting $\bfx$ by $\theta > 0$ units using $f$ does not change $\sum_{e \in \delta^+(v)} x_e - \sum_{e \in \delta^-(v)} \gamma_e x_e$ for every $v \in V \setminus \{s, \tau\}$, and increases $\sum_{e \in \delta^+(s)} x_e - \sum_{e \in \delta^-(s)} \gamma_e x_e$ by $\theta$. 
\end{claim}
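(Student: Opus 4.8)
The plan is a direct, edge-by-edge accounting. For a flow $\bfx$ on $G$ write $\mathrm{ex}_G^\bfx(v) := \sum_{e \in \delta^+(v)} x_e - \sum_{e \in \delta^-(v)}\gamma_e x_e$ for the net excess at $v$, and for $f$ on $G^\bfx$ write $\mathrm{ex}_{G^\bfx}^f(v) := \sum_{e \in \delta^+_{G^\bfx}(v)} f_e - \sum_{e \in \delta^-_{G^\bfx}(v)}\gamma^\bfx_e f_e$. The heart of the argument is the identity $\mathrm{ex}_G^{\bfx'}(v) - \mathrm{ex}_G^\bfx(v) = \theta\cdot \mathrm{ex}_{G^\bfx}^f(v)$ for every $v \in V \setminus \{\tau\}$, where $\bfx'$ is the result of augmenting $\bfx$ by $\theta$ units along $f$. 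Granting this, the claim is immediate: since $f$ is in particular a fractional augmenting path, Definition~\ref{def:fractiona-augmenting} gives $\mathrm{ex}_{G^\bfx}^f(s) = 1$ and $\mathrm{ex}_{G^\bfx}^f(v) = 0$ for $v \in V \setminus \{s,\tau\}$, so the excess at $s$ increases by exactly $\theta$ and is unchanged at all other vertices except possibly $\tau$.

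To establish the identity, fix $v$ and classify the residual edges at $v$ by the original edge of $E$ they come from, using that $G$ has no anti-parallel edges (so for any ordered pair $(a,b)$, the edge $ab \in E^\bfx$, if present, is unambiguously either the forward copy of $ab \in E$ or the backward copy of $ba \in E$); write $f(a{\to}b)$ for its $f$-value, with the convention $f(a{\to}b)=0$ when the edge is absent — consistent, since an absent copy contributes no augmentation. Concretely: for each $vw \in E$, the forward copy $vw \in E^\bfx$ leaves $v$ with $\gamma^\bfx = \gamma_{vw}$ and the backward copy $wv \in E^\bfx$ enters $v$ with $\gamma^\bfx = 1/\gamma_{vw}$; for each $uv \in E$, the forward copy $uv$ enters $v$ with $\gamma^\bfx = \gamma_{uv}$ and the backward copy $vu$ leaves $v$ with $\gamma^\bfx = 1/\gamma_{uv}$. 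The augmentation rule then yields, for every $vw \in E$, $\Delta x_{vw} = \theta f(v{\to}w) - \theta f(w{\to}v)/\gamma_{vw}$. Substituting into $\mathrm{ex}_G^{\bfx'}(v) - \mathrm{ex}_G^\bfx(v) = \sum_{w : vw \in E}\Delta x_{vw} - \sum_{u : uv \in E}\gamma_{uv}\Delta x_{uv}$ and regrouping gives four sums — $\sum_{w:vw\in E} f(v{\to}w)$, $\sum_{u:uv\in E} f(v{\to}u)$, $\sum_{w:vw\in E}(1/\gamma_{vw}) f(w{\to}v)$, and $\sum_{u:uv\in E}\gamma_{uv} f(u{\to}v)$ — which are precisely $\theta$ times the forward-out, backward-out, backward-in, and forward-in contributions to $\mathrm{ex}_{G^\bfx}^f(v)$; the gain factors match because the $1/\gamma$ on each backward edge cancels the $\gamma$ the rule divides by.

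I do not expect a real obstacle here: this is bookkeeping. The only points that need stated care are (i) the orientation correspondence — the backward copy of $uv\in E$ sits on the \emph{outgoing} side of $v$ while the forward copy of $uv$ sits on the \emph{incoming} side — and (ii) the no-anti-parallel-edges convention that makes $f(a{\to}b)$ well defined and lets us uniformly treat missing residual edges as carrying zero flow. Everything else is substitution and regrouping.
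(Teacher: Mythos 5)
Your argument is correct, and it is the natural bookkeeping one would do. Note that the paper states this claim without proof (treating it as routine verification); your write-up supplies exactly the missing edge-by-edge calculation, correctly handling the two orientation cases (forward vs.\ backward residual copies at $v$) and the gain-factor cancellation, so it is a valid proof and there is nothing to compare it against beyond the paper's implicit view that the claim is immediate from Definition~\ref{def:fractiona-augmenting} and the augmentation rule.
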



\subsection{The Online Algorithm}
With the definition of augmenting paths, we can now formally describe our online algorithm.  The pseudo-code is given in Algorithm~\ref{alg:general-online}.
\begin{algorithm}
	\caption{Online algorithm for generalized network flow}
	\label{alg:general-online}
	\begin{algorithmic}[1]
		\State Let $\bfx^{(0)}$ be the all-0 vector over edges of the initial graph $G$
		\For{every $t \gets 1$ to $T$}
		\State update the instance to include the source $s_t$
		\State let $\bfx \gets \bfx^{(t-1)}$, adding $0$-coordinates for the incoming edges of $s_t$
		\While{$\sum_{e \in \delta^+(s_t)} x_e < 1$}
		\State find the cheapest augmenting path $f$ from $s_t$ in the residual graph $G^\bfx$ \label{step:find-path}
		\State let $\theta > 0$ be the biggest number such that after augmenting $\bfx$ using $f$ by $\theta$ units, we still have $x_e \in [0, \mu_e]$ for every $e \in E$ and $\sum_{e \in \delta^+(s_t)} x_e \leq 1$
		\State augment $\bfx$ using $f$ by $\theta$ units
		\EndWhile
		\State $\bfx^{(t)} \gets \bfx$ 
		\EndFor
	\end{algorithmic}
\end{algorithm}

{\bf Remark on Running Time} \ Notice that the running time of Algorithm~\ref{alg:general-online} may be exponentially large, as we can not bound the number of iterations of the while loop by polynomial. However, at time $t$, the algorithm is simply trying to send 1 unit flow from $s_t$ using the minimum cost in the residual graph, and this can be simply done using an LP. Moreover sending the 1 unit flow at once can only incur a smaller cost.  Therefore we can make the algorithm efficient.  For analysis purposes, we chose to present Algorithm~\ref{alg:general-online}, since the augmenting path at each step has a good structure as stated in Definition~\ref{def:augmenting}.

\subsection{Analysis of Online Algorithm}

\cite{gupta2014maintaining} define the concept of height of any vertex to be the cost of the shortest path to the sink $\tau$ in the residual graph at any point in time, and then track it for bounding the competitive ratio. Analogously, we define a height for a vertex as the \emph{cost of the cheapest augmenting path from the vertex in the residual graph} -- notice the lack of sink in the definition. We show that the incremental cost incurred by the $t^{\rm th}$ source is at most the height of $s_t$ at time $t$. Moreover, we also show that heights can only increase during the course of the algorithm, implying that the cost incurred by the algorithm is at most the total height of sources at the end.  To complete the analysis, we show that the heights (after all arrivals) define a dual solution for the final flow instance with capacities scaled down by  a factor of $(1+\epsilon)$, and upper bound its cost by $O(1/\epsilon)$ times the offline optimum cost of the instance.

\begin{definition}
	Let $t \in [0, T]$ and $v \neq \tau$ be a vertex in the network $G$ at time $t$. The \emph{height} of $v$ at time $t$, denoted as $\height_t(v)$, is defined as the cost of the cheapest augmenting path from $v$ in the residual graph $G^{\bfx^{(t)}}$.  We define $\height_t(\tau) = 0$.
\end{definition}

\noindent {\bf Monotonicity of Heights.} We  show that heights are non-decreasing over the course of the algorithm. The proof the following lemma is deferred to Appendix~\ref{appendix:missing-proofs}. 
\begin{restatable}{lemma}{lemmaheightincreases}
	\label{lemma:height-increases}
	Let $t \in [T]$ and $v$ be a vertex in the graph $G$ at time $t-1$. Then $\height_t(v) \geq \height_{t-1}(v)$.
\end{restatable}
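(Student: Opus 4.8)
The plan is to show that any augmenting path witnessing a small value of $\height_t(v)$ can be "simulated" in the earlier residual graph $G^{\bfx^{(t-1)}}$, so that the cheapest augmenting path from $v$ at time $t-1$ costs no more. The key point is to understand how the residual graph changes between steps $t-1$ and $t$. During step $t$, the algorithm starts from $\bfx = \bfx^{(t-1)}$ (with the new source $s_t$ and its outgoing edges added) and repeatedly augments along cheapest augmenting paths from $s_t$. So it suffices to prove the single-augmentation statement: if $\bfx'$ is obtained from $\bfx$ by augmenting by $\theta > 0$ units along a cheapest augmenting path $g$ from $s_t$ in $G^\bfx$, then $\height^{\bfx'}(v) \geq \height^{\bfx}(v)$ for every $v \neq \tau$ (where I write $\height^{\bfx}(v)$ for the cost of the cheapest augmenting path from $v$ in $G^\bfx$); the lemma then follows by chaining this over all the while-loop iterations, and noting that adding the source $s_t$ and its outgoing edges only adds new edges \emph{out of} $s_t$, which cannot decrease the height of any pre-existing vertex since those edges are useless for augmenting paths starting elsewhere (a source has no incoming edges).

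For the single-augmentation step, I would argue by contradiction: suppose $\height^{\bfx'}(v) < \height^{\bfx}(v)$ for some $v$, and let $f$ be a cheapest augmenting path from $v$ in $G^{\bfx'}$, which by Lemma~\ref{lemma:non-fractional} can be taken to have the path/cycle structure of Definition~\ref{def:augmenting}. The edges of $G^{\bfx'}$ differ from those of $G^\bfx$ only along the support of $g$: a forward edge of $G^\bfx$ used by $g$ may disappear (if it becomes saturated) and its reverse backward edge may appear, and symmetrically for backward edges. Crucially, when the algorithm augmented along $g$, it chose $g$ to be a \emph{cheapest} augmenting path from $s_t$, and by Claim~\ref{claim:break}, for every vertex $w$ on the support of $g$, the cost of the cheapest augmenting path from $w$ in $G^\bfx$ is controlled — more precisely, the "cost so far along $g$" plus $\height^{\bfx}(w)$ is at least $\height^{\bfx}(s_t) = \cost(g)$, which forces each newly-created backward edge $e = wu$ in $G^{\bfx'}$ to satisfy a slackness-type relation: the cost of using $e$ "cancels" against the portion of $g$ it reverses. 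The standard move is then to take the cheap augmenting path $f$ in $G^{\bfx'}$, replace each newly-appeared backward edge of $g^{-1}$ that $f$ uses by the corresponding reversed sub-path of $g$ (and each newly-appeared forward edge similarly), obtaining a fractional augmenting path from $v$ that lives in $G^\bfx$ and whose cost is at most $\cost(f) < \height^{\bfx}(v)$ — contradicting the definition of $\height^{\bfx}(v)$. The cancellation of gains is exactly what makes a reversed sub-path of $g$ have the right net effect (gain $1/\gamma$ composing back to identity), and the cost cancellation uses that backward edges have cost $0$ together with the slackness relation above.

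The main obstacle I anticipate is the bookkeeping around gains: unlike ordinary flow, a reversed sub-path of $g$ does not simply carry one unit of flow — the gain factors rescale it, so when I substitute a reversed piece of $g$ in place of a backward edge of $f$, I must rescale by the appropriate product of gains and check that flow conservation is preserved at every internal vertex and that the total cost still does not increase. Setting up the right potential/accounting — essentially assigning to each vertex $w$ on $g$ the "cumulative cost" of $g$ from $s_t$ to $w$ and using that $g$ is a \emph{shortest} generalized augmenting path (via Claim~\ref{claim:break}) to get the needed inequalities at each substitution — is where the real work lies. A clean way to package this is to define, for each vertex $w$ reachable along $g$, a dual-like potential and show the residual costs $c^{\bfx'}_e + \phi(\text{tail}) - \gamma^{\bfx'}_e\,\phi(\text{head})$ are nonnegative on all edges of $G^{\bfx'}$ and zero on reversed edges of $g$; then the cost of \emph{any} fractional augmenting path from $v$ in $G^{\bfx'}$ is at least $\phi(v)$ with equality achievable, and one checks $\phi(v) \geq \height^{\bfx}(v)$. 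I would also need the easy observations that augmenting never creates a negative-cost cycle in $G^{\bfx'}$ (else the instance would be unbounded, contradicting feasibility via the dummy edges) and that $\height^{\bfx}(v)$ is well-defined and finite for all $v$ because of those dummy edges to $\tau$.
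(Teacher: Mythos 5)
Your reduction to the single-augmentation step, and the observation that adding $s_t$ and its out-edges cannot help an augmenting path from a pre-existing vertex, is exactly how the paper begins. From there, however, your core argument is genuinely different from the paper's. The paper gives a \emph{primal exchange} argument: letting $g$ be the cheapest augmenting path from $s_t$ used in the step, $f^1$ the cheapest augmenting path from $v$ in $G^{\bfx}$, and $f^2$ the cheapest from $v$ in $G^{\bar\bfx}$, they extract (via Claim~\ref{claim:break}) an augmenting path $f'$ from $v$ supported inside $g$, form $f^* = g - \theta f' + \theta f^2$ for small $\theta > 0$, cancel the edges of $f^2$ that live only in the new residual graph against the corresponding edges of $g$, and obtain a fractional augmenting path from $s_t$ in $G^{\bfx}$ strictly cheaper than $g$ whenever $\cost(f^2) < \cost(f^1)$ --- a contradiction. (The case $v\notin\mathrm{supp}(g)$ is then reduced to this one via Claim~\ref{claim:break} applied to $f^2$.) You instead propose a \emph{dual feasibility} argument: show the pre-augmentation heights $\phi(w):=\height^{\bfx}(w)$ remain a feasible potential for the post-augmentation residual graph, because reduced costs are nonnegative on all surviving edges, and complementary slackness on the edges of $g$ makes the reduced costs of newly created reversal edges nonnegative as well; then every fractional augmenting path from $v$ in $G^{\bar\bfx}$ costs at least $\phi(v)$. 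This dual approach is sound and, if anything, more uniform --- it avoids the paper's two-case split --- but it makes the key complementary-slackness lemma (that $g$ being the cheapest augmenting path forces tight reduced costs along its entire support, with the correct gain-scaling) do essentially all the work, and you explicitly leave that verification as ``where the real work lies.''

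Two concrete points to fix if you carry this through. First, your reduced-cost formula has the wrong sign: with $\phi(w)=\height^{\bfx}(w)$ the height inequality $\height(u)\le c_{uv}+\gamma_{uv}\height(v)$ for residual edges $uv$ gives the reduced cost $c_{uv}-\phi(u)+\gamma_{uv}\phi(v)\ge 0$, not $c_{uv}+\phi(\text{tail})-\gamma_{uv}\phi(\text{head})$. Second, the earlier ``replace each newly-appeared backward edge by the corresponding reversed sub-path of $g$'' idea does not really parse: a new backward edge $vu\in G^{\bar\bfx}$ is the reverse of a \emph{single} edge $uv\in\mathrm{supp}(g)$, not of a sub-path, and a BFS-distance style substitution does not transfer to the cost setting with gains. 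You correctly pivot to the potential argument, and that is the version you should commit to. Note also that the ``no negative-cost cycle'' worry you raise is vacuous in this model, since all residual costs are nonnegative by construction (forward edges keep $c_e\ge 0$ and backward edges have cost $0$).
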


\begin{lemma}
	\label{lemma:bound-cost-by-heights}
	The cost incurred by the whole algorithm is at most $\sum_{t = 1}^T \height_T(s_t)$. 
\end{lemma}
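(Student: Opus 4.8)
The plan is to bound the total cost incurred by the algorithm step by step, and to show that the per-step cost at time $t$ is dominated by $\height_t(s_t)$, which by monotonicity (Lemma~\ref{lemma:height-increases}) is at most $\height_T(s_t)$; summing over $t$ then gives the claim. So the real content is: \emph{the incremental cost of processing source $s_t$ is at most $\height_t(s_t)$}.

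First I would analyze a single iteration of the while loop inside the processing of $s_t$. Let $\bfx$ be the current flow, with $\sum_{e\in\delta^+(s_t)} x_e = 1-\eta$ for some $\eta>0$ still to be routed. The algorithm finds the cheapest augmenting path $f$ from $s_t$ in $G^\bfx$ and augments by the maximal $\theta>0$ keeping feasibility and not exceeding the outflow cap of $1$. The cost \emph{incurred} by this augmentation (in our one-sided cost model) is at most $\theta\cdot\sum_{e\in E^\bfx,\,e\text{ forward}} c^\bfx_e f_e \le \theta\cdot\cost(f)$, since backward edges have cost $0$ in the residual graph and only forward-edge increases are charged. So the total cost incurred while processing $s_t$ is at most $\sum_{\text{iterations}} \theta_j\,\cost(f_j)$, where $\sum_j \theta_j\,(\text{net outflow of }f_j\text{ at }s_t) = \eta_{\text{initial}} = 1$ — and since each $f_j$ has net outflow exactly $1$ at $s_t$ by Definition~\ref{def:fractiona-augmenting}, we get $\sum_j \theta_j = 1$. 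Hence the processing cost of $s_t$ is at most $\big(\max_j \cost(f_j)\big)\cdot 1$; but I actually want $\le \cost(f_1) = \height_t(s_t)$, so the key subclaim is that the costs of the successive cheapest augmenting paths are \emph{non-decreasing within the processing of a single source}. This is the monotonicity argument restricted to a single time step: after augmenting along the cheapest path $f_j$ by $\theta_j$, the cheapest augmenting-path cost from $s_t$ in the new residual graph is at least $\cost(f_j)$. The clean way to see this is the standard flow-decomposition / no-negative-cycle reasoning: the ``reverse'' of $f_j$ (its backward residual image) is available after the augmentation, and if there were a cheaper augmenting path $f_{j+1}$ afterward, combining $f_{j+1}$ with the reversal of $f_j$ would produce, in the \emph{original} residual graph $G^\bfx$, either a negative-cost flow-generating structure or a cheaper augmenting path, contradicting minimality of $f_j$ — here one has to be a little careful because backward edges cost $0$ rather than $-c_e$, but that only helps (it makes reversals cheaper, never more expensive, so any contradiction still goes through). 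Alternatively, and perhaps more cleanly, I can just invoke the \emph{same} proof as Lemma~\ref{lemma:height-increases} — that lemma's proof surely proceeds by showing heights don't decrease after a single augmentation step along a cheapest augmenting path, and arrival of a new source only adds edges which can only increase heights; so the within-step monotonicity is literally a sub-case of it.

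Given that subclaim, the argument closes: the cost incurred while processing $s_t$ is at most $\sum_j \theta_j \cost(f_j) \le \cost(f_1)\sum_j\theta_j = \height_t(s_t)\cdot 1$, since $f_1$ is the cheapest augmenting path from $s_t$ in $G^{\bfx^{(t-1)}}$ which is exactly $\height_t(s_t)$ (note: at the start of step $t$ the outflow of $s_t$ is $0$, and adding $s_t$'s outgoing edges to the graphs does not change the residual graph relevant to defining $\height_t(s_t)$, since $s_t$ has no incoming edges). Then by Lemma~\ref{lemma:height-increases} applied repeatedly, $\height_t(s_t) \le \height_{t+1}(s_t) \le \cdots \le \height_T(s_t)$. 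Summing the per-step bounds over $t=1,\dots,T$ yields that the total cost incurred by the algorithm is at most $\sum_{t=1}^T \height_T(s_t)$, as claimed.

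The main obstacle is the within-single-step monotonicity of augmenting-path costs, i.e.\ that each successive cheapest augmenting path from $s_t$ is no cheaper than the previous one. Everything else is bookkeeping (the cost accounting using only forward edges, and the observation $\sum_j\theta_j=1$ since each augmenting path carries net outflow $1$ at $s_t$). If Lemma~\ref{lemma:height-increases} is proved at the granularity of a single augmentation, I would simply cite it; otherwise I would prove the within-step statement directly via the reversal/no-improving-cycle argument sketched above, taking care that the $0$-cost backward edges only make the inequalities more favorable.
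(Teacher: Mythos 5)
Your skeleton is the right one — decompose the cost of step~$t$ into the per-iteration augmentations $\theta_j \cdot \cost(f_j)$, observe $\sum_j\theta_j = 1$, and invoke within-step monotonicity of the cheapest augmenting-path cost — and your instinct that Lemma~\ref{lemma:height-increases} is already proved at the granularity of a single augmentation is also correct. But the chain you write down is internally inconsistent. You establish (correctly) that the costs $\cost(f_1)\le\cost(f_2)\le\cdots\le\cost(f_k)$ are \emph{non-decreasing}, and then assert $\sum_j\theta_j\cost(f_j)\le\cost(f_1)\sum_j\theta_j$; that inequality would require the costs to be non-\emph{increasing}. With non-decreasing costs and $\sum_j\theta_j=1$, the bound you can actually extract is $\sum_j\theta_j\cost(f_j)\le\cost(f_k)$, the cost of the \emph{last} augmenting path used.

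The second slip is in identifying $\height_t(s_t)$. You write ``$f_1$ is the cheapest augmenting path from $s_t$ in $G^{\bfx^{(t-1)}}$ which is exactly $\height_t(s_t)$,'' but by the paper's definition $\height_t(s_t)$ is the cheapest augmenting-path cost in $G^{\bfx^{(t)}}$ --- the residual graph \emph{after} the unit of flow from $s_t$ has been routed, not before. This is exactly what closes the argument in the correct direction: applying within-step monotonicity one more time gives $\cost(f_k)\le\height_t(s_t)$. The corrected chain is $\sum_j\theta_j\cost(f_j)\le\cost(f_k)\le\height_t(s_t)\le\height_T(s_t)$, with the middle inequality (rather than an equality with $\cost(f_1)$) being the point. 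So the approach is right but both your intermediate inequality and your identification of $\height_t(s_t)$ point the wrong way; had you traced the signs carefully you would have found that the monotonicity helps by bounding through the last augmenting path rather than the first.
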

\begin{proof}
	Notice that the cost incurred by sending 1 unit of flow from the source $s_t$ in time $t$ can be upper bounded by $\height_t(s_t)$, which in turn is at most $\height_T(s_t)$ by Lemma~\ref{lemma:height-increases}.
\end{proof}

\noindent {\bf Bounding total heights using duality.}
Let $C^*$ be the cost of the optimum flow for the generalized network flow instance at time $T$, with capacities scaled by a factor of $\frac{1}{1+\epsilon}$.  We consider the dual of LP \eqref{LP:generalized} for the instance: 
\begin{equation}
	\max \qquad  \sum_{v \in V \setminus \{\tau\}}a_v y_v- \sum_{e \in E}\frac{\mu_ez_e}{1+\epsilon}  \qquad \text{s.t.} 
\end{equation}\vspace*{-15pt}
\begin{align*}
	 - z_{uv} + y_u - \gamma_{uv}y_v &\leq c_{uv} &\quad &\forall uv \in E \\
	z_{e} &\geq 0 &\quad &\forall e \in E\\
	y_{\tau} &=0
\end{align*}
Notice that we do not have a constraint in LP \eqref{LP:generalized} for $\tau$; for convenience we also introduce a dual variable $y_\tau$ and let $y_\tau = 0$.  For a fixed $y$ vector,  the optimal choice for $z_{uv}$ is $(y_u  - \gamma_{uv}y_v  - c_{uv})_+$. Also $a_v = 0$ for every $v \in V \setminus S \setminus \{\tau\}$ and $a_{s_t} = 1$ for every $t \in [T]$. Therefore, the dual LP can be rewritten as 
\begin{align*}
	\max \qquad \sum_{t \in [T]} y_{s_t}-\sum_{uv \in E}\frac{\mu_{uv} \left(y_u - \gamma_{uv}y_v-c_{uv}\right)_+}{1+\epsilon}  
	\qquad \text {s.t. } \quad y_\tau=0.
\end{align*}

\begin{lemma}
	\label{lemma:bound-heights-using-dual}
	$\sum_{t=1}^T \height_T(s_t) \leq \frac{1+\epsilon}{\epsilon}C^*$.
\end{lemma}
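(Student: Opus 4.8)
The plan is to use the height values $\height_T(\cdot)$ to construct a feasible dual solution $(y,z)$ for the scaled-capacity instance, and then argue that its objective value is at least $\frac{\epsilon}{1+\epsilon}\sum_{t}\height_T(s_t)$; by weak LP duality this will be at most $C^*$, giving the claimed bound. The natural choice is to set $y_v := \height_T(v)$ for every vertex $v\neq\tau$, together with $y_\tau=0$, and then take $z_{uv} := (y_u-\gamma_{uv}y_v-c_{uv})_+$ so that dual feasibility holds by construction. With this choice the dual objective becomes
\[
\sum_{t\in[T]} \height_T(s_t) - \frac{1}{1+\epsilon}\sum_{uv\in E}\mu_{uv}\,\bigl(\height_T(u)-\gamma_{uv}\height_T(v)-c_{uv}\bigr)_+ \;\le\; C^*.
\]
So it suffices to show that the penalty term is at most $\frac{1}{1+\epsilon}\sum_t\height_T(s_t)$, which would yield $\frac{\epsilon}{1+\epsilon}\sum_t\height_T(s_t)\le C^*$, i.e. the lemma.

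To control the penalty term, the first key step is a triangle-inequality-type bound on heights: for every edge $uv\in E$, I claim $\height_T(u)\le \gamma_{uv}\height_T(v)+c_{uv}$ whenever the forward edge $uv$ is still present in the final residual graph $G^{\bfx^{(T)}}$ (i.e. $x^{(T)}_{uv}<\mu_{uv}$), because one can route one unit out of $u$ by first pushing $1/\gamma_{uv}$... more carefully, by sending flow along $uv$ and then combining with a cheapest augmenting path out of $v$ (using Claim~\ref{claim:break} in reverse / concatenation of augmenting paths, and Lemma~\ref{lemma:non-fractional} to stay within the structured class). Hence for such edges the term $(\height_T(u)-\gamma_{uv}\height_T(v)-c_{uv})_+$ vanishes, and the penalty term only gets contributions from \emph{saturated} edges, those with $x^{(T)}_{uv}=\mu_{uv}$. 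This is exactly the familiar complementary-slackness phenomenon: the algorithm only leaves ``tight'' dual constraints on saturated edges. The second step is to relate the contribution of the saturated edges to $\sum_t \height_T(s_t)$. Here I would track, for the final flow $\bfx^{(T)}$, a telescoping/path-decomposition argument: decompose $\bfx^{(T)}$ into augmenting-path pieces (one unit emanating from each source $s_t$, routed to $\tau$ or into flow-absorbing cycles), and for each such piece observe that the net "drop in height" along it, weighted by the gain factors, accounts for the cost $c_e x_e$ on the path edges plus the excess $z_e$ on its saturated edges; summing over the decomposition, the $z_e$ mass on a saturated edge $uv$ is charged at rate $\mu_{uv}$, while the height telescopes to give $\sum_t \height_T(s_t)$ minus the (nonnegative, by duality) realized cost. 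The $\frac{1}{1+\epsilon}$ versus $1$ discrepancy is precisely where scaling capacities by $\frac{1}{1+\epsilon}$ buys the needed slack: the optimal scaled flow uses at most $\frac{\mu_{uv}}{1+\epsilon}$ on each edge, so comparing the algorithm's saturated edges (at $\mu_{uv}$) against what the scaled optimum can afford produces the factor-$\frac{\epsilon}{1+\epsilon}$ gap that survives in the final inequality.

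I expect the main obstacle to be the saturated-edge accounting in the presence of gains: in the ordinary-flow case of \cite{gupta2014maintaining} a saturated edge simply ``costs'' $z_{uv}$ per unit and one unit of flow crosses it contributing $z_{uv}$ to exactly one source's height drop, but with gains the bookkeeping for how much flow (on the sender's side) crosses a saturated edge, and how the height decrements multiply by $\gamma$ along a path, has to be done carefully so that each $\mu_{uv} z_{uv}$ term is fully paid for without double counting. Concretely, the delicate point is showing that the total ``height potential'' released, $\sum_t\height_T(s_t)$, is at least $(1+\epsilon)$ times $\sum_{uv}\mu_{uv} z_{uv}$ minus the true routing cost $\sum_e c_e x^{(T)}_e\ge 0$, where the factor $(1+\epsilon)$ comes from the fact that the algorithm, while never violating the true capacities $\mu_e$, ends up ``overusing'' saturated edges by a $(1+\epsilon)$ factor relative to the scaled instance whose optimum is $C^*$. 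Making this last comparison rigorous — essentially, that $\sum_{uv}\mu_{uv} z_{uv}\le\frac{1}{1+\epsilon}\sum_t\height_T(s_t)$ holds because at every saturated edge the algorithm carries its full capacity while any solution to the scaled LP carries at most a $\frac{1}{1+\epsilon}$ fraction — is the crux, and I would handle it by exhibiting the flow decomposition of $\bfx^{(T)}$ explicitly and evaluating the dual objective on it, so that the only inequality invoked is weak duality against the scaled primal.
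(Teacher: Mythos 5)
Your setup is exactly the paper's: take $y_v := \height_T(v)$, $y_\tau = 0$, $z_{uv} := (y_u - \gamma_{uv}y_v - c_{uv})_+$, observe that on any edge $uv$ that is unsaturated in $\bfx^{(T)}$ the forward residual edge exists, so the triangle inequality on heights gives $y_u \le \gamma_{uv}y_v + c_{uv}$ and the $z$-term vanishes. Up to here you are on track.

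The gap is in how you close the saturated-edge term. You propose a flow decomposition of $\bfx^{(T)}$ into per-source augmenting pieces and a multiplicative telescoping of heights along paths with gains, and you flag (correctly) that this bookkeeping is delicate and that you are not sure it works. The paper avoids all of that. Two observations you are missing make the argument collapse to pure algebra. First, because backward residual edges were \emph{defined with cost $0$}, a saturated edge $uv$ has a backward arc $vu$ with cost $0$ and gain $1/\gamma_{uv}$ in $G^{\bfx^{(T)}}$, giving $\height_T(v) \le \height_T(u)/\gamma_{uv}$, i.e.\ $y_u - \gamma_{uv}y_v \ge 0$. This lets you drop the positive part and the $-c_{uv}$, so for the (WLOG $\{0,\mu_e\}$-valued) final flow,
\[
\sum_{uv\in E}\mu_{uv}\bigl(y_u-\gamma_{uv}y_v-c_{uv}\bigr)_+ \;\le\; \sum_{uv\in E} x^{(T)}_{uv}\bigl(y_u-\gamma_{uv}y_v\bigr).
\]
Second, the right-hand side rearranges vertex-by-vertex into $\sum_{v} y_v\bigl(\sum_{e\in\delta^+(v)}x^{(T)}_e - \sum_{e\in\delta^-(v)}\gamma_e x^{(T)}_e\bigr)$, which by flow conservation and $y_\tau=0$ equals $\sum_{t}y_{s_t}$. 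No path decomposition and no per-path gain bookkeeping are needed.

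Your intuition for where the $(1+\epsilon)$ comes from is also off. You frame it as the algorithm ``overusing'' saturated edges by a $(1+\epsilon)$ factor relative to the scaled optimum, but there is no such flow-vs-flow comparison. The factor is already sitting in the scaled dual objective $\sum_t y_{s_t} - \frac{1}{1+\epsilon}\sum_e \mu_e z_e$. The rearrangement above shows $\sum_e\mu_e z_e \le \sum_t y_{s_t}$ with no $\epsilon$ in sight; plugging in gives dual value at least $\frac{\epsilon}{1+\epsilon}\sum_t y_{s_t}$, and weak duality against the scaled primal finishes. In short: you have the right dual solution and the right unsaturated-edge observation, but the crux you identified is not resolved by your proposed decomposition argument — it is resolved by the zero-cost backward arcs plus the flow-conservation rearrangement.
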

\begin{proof}
	Now we need to bound the sum of the heights of the sources at termination. To do this we show that $\big(y_v := \height_T(v)\big)_{v \in V}$ is a feasible dual solution. Then we bound the cost of this feasible dual in relation to $C^*$, therefore giving us a bound on the competitive ratio.
	
	
	Let $\bfx  = \bfx^{(T)} \in \R_{\geq 0}^E$ be the final flow we obtained. By breaking edges, we can assume every edge $e \in E$ has either $x_e = 0$ or $x_e = \mu_e$. Any edge $uv \in E$ with $x_{uv} = 0$ exists in the residual graph $G^\bfx$. Therefore $y_u \leq \gamma_{uv} y_v+c_{uv}$ as $y$ corresponds to the heights, and sending 1 unit flow from $u$ can be achieved by sending $1$ unit flow from $u$ to $v$, and then sending $\gamma_{uv}$ units flow from $v$. So for such edges $(y_u - \gamma_{uv} y_v - c_{uv})_+ = 0$.  Let $E'$ be the set of edges with $x_{uv} = \mu_{uv}$. So, 
	\begin{align*}
		\sum_{uv \in E} \frac{\mu_{uv}(y_u - \gamma_{uv} y_v - c_{uv})_+}{1+\epsilon} \leq \sum_{uv \in E'} \frac{x_{uv}}{1+\epsilon} (y_u - \gamma_{uv} y_v) = \frac{1}{1+\epsilon}\sum_{v \in V} y_v\left(\sum_{e \in \delta^+(v)} x_e - \sum_{e \in \delta^-(v)} \gamma_ex_e\right)=\frac{1}{1+\epsilon}\sum_{t \in [T]}y_{s_t}.
	\end{align*}
	For every $uv \in E'$, $x_{uv} = \mu_{uv}$  and $y_v \leq \frac{y_u}{\gamma_{uv}}$ since the backward edge $vu$ exists in $G^\bfx$ and it has gain $\frac1{\gamma_{uv}}$ and cost $0$. So we have the inequality.  The first equality is by that $x_{uv} = 0$ for $uv \in E \setminus E'$ and rearranging the terms. The second equality is by the balance condition for $\bfx$ and $y_\tau = 0$.

	So the objective value of the dual solution $y$ is at least $(1-\frac{1}{1+\epsilon})\sum_{t \in [T]} y_{s_t}$, which implies $\frac{\epsilon}{1+\epsilon} \sum_{t \in [T]} y_{s_t} \leq C^*$.  Multiplying both sides by $\frac{1+\epsilon}{\epsilon}$ proves the lemma.
\end{proof}

Thus, combining Lemmas~\ref{lemma:bound-cost-by-heights} and \ref{lemma:bound-heights-using-dual} proves Theorem~\ref{thm:flow}.


\section{Online Unrelated Machine Load Balancing with Recourse} 
\label{sec:lb}
We now show one of our main results, that of maintaining $(2+\epsilon)$-approximate solutions for online unrelated machine load balancing with $O_\epsilon(\log n)$ amortized recourse per job, as stated in Theorem~\ref{thm:unrelated-constant-approx}. We restate the problem setting. There is a set $M$ of $m$ machines, and $n$ jobs indexed by $[n]$. We have a bipartite graph $G = (M \uplus [n], E)$ between machines and jobs, where $ij \in E$ indicates that the job $j$ can be assigned to machine $i$. When $j$ is assigned to $i$, it incurs a load of $p_{ij} > 0$ on machine $i$.  The goal is to assign jobs to machines so as to minimize the maximum load over all machines, also called makespan in the scheduling literature. In the online version, jobs arrive one by one: job $j \in [n]$ arrives at time $j$, along with its incident edges in $G$ and their $p_{ij}$ values.  We need to maintain a solution for the arrived jobs at any time.  We allow the algorithm to re-assign prior jobs from time to time, and separately track the recourse of the algorithm. 


\noindent {\bf Known vs Unknown $T^*$.} 
We first define a useful quantity $T^*$, which is the smallest value of $T$ for which the following LP is feasible: $\sum_{i \in N(j)}x_{ij} = 1$ for every $j \in [t]$, $\sum_{j \in N(i)} p_{ij}x_{ij} \leq T$ for every $i \in M$, and $x_{ij} = 0$ if $p_{ij} > T^*$. We refer to this as the \emph{optimal fractional makespan}. 

Our online algorithms will assume knowledge of $T^*$. While we can use a standard guess-and-double approach to eliminate this assumption, we would lose an additional constant factor in the competitive ratio. Since we are allowed recourse, we can do better, as follows to get the $2+O(\epsilon)$ guarantee on competitive ratio. Suppose there is algorithm $\calA$ that achieves $C \cdot T^*$ makespan when $T^*$ is given.  We now design a simple procedure which can also achieve $(1+O(\epsilon)) C$-competitive solutions with bounded recourse, even when $T^*$ is not given. Indeed, we break our procedure into \emph{stages}, where a new stage occurs when the optimum fractional makespan increases by a factor of at least $1/\epsilon$. Each stage is further partitioned into many \emph{phases}, where a new phase starts if the optimum fractional makespan increases by a factor of at least $1+\epsilon$. When a new stage $g$ starts with bound $T^*$ on the optimal fractional makespan, we simply re-construct an offline solution for all the jobs in $(g-2)$-th stage with makespan $2\epsilon T^*$, say using the $2$-approximation algorithm~\cite{lenstra1990approximation} for offline load balancing. We then \emph{freeze the assignment of these jobs} according to this offline solution, i.e., we won't change the assignment of these jobs in the future. Note that the total load due to all the frozen jobs on any machine is at most $O(\epsilon) T^*$. On the other hand, whenever a new phase starts with optimum fractional makespan $T^*$, we re-run algorithm $\calA$ with the revised estimate $T^*$ and reintroduce all the unfrozen jobs (i.e., jobs of this stage as well as the previous), thereby causing recourse for all these jobs. From the gaurantee of $\calA$, the makespan induced by these jobs will be at most $C \cdot T^*$, giving us the desired guarantee. As for recourse, note that each job can be unfrozen for at most $O(\log_{1+\epsilon}\frac1\epsilon) = O\left(\frac{\log (1/\epsilon)}{\epsilon}\right)$ phases across two stages, which bounds the recourse.



\noindent {\bf Algorithm Overview.} 
 The overall algorithm comprises of two steps. We first maintain a \emph{fractional} solution which is $(1+\epsilon)$-competitive and has $O\big(\frac1\epsilon\big)$-amortized fractional recourse by reducing the problem to the online generalized network flow problem. In the second step we round the fractional solution into an integral one, in an online manner with low recourse by creating an intermediate bipartite matching instance based on the fractional solution. For clarity of presentation, we present a conceptually simpler rounding algorithm with a weaker $O(1)$-factor competitive ratio in~\Cref{sec:rounding1}, and defer the slightly more involved $(2+\epsilon)$-competitive algorithm to~\Cref{sec:2-approx-rounding}.
 


\subsection{Producing Fractional Solutions Online using Generalized Flow Instance}
In the first step, we reduce the online load balancing problem to the online generalized network flow problem. We use $G' = (V', E')$ to denote the digraph for the network flow problem. Initially, $V' = M \cup \{\tau\}$, where $\tau$ is the sink.  There is an edge $i\tau \in E'$ with $\mu_{i\tau} = (1+\epsilon)T^*,  c_{i\tau} = 0$ and $\gamma_{i\tau} = 1$. For each arriving job $j$, we add $j$ to $V'$ and the source set.  For every $ij \in E$, we add a directed edge $ji$ to $E'$ with $\mu_{ji} = \infty, c_{ji} = 1$ and $\gamma_{ji} = p_{ij}$.   Below we let $E$ be the final set of edges between jobs and machines. 
\begin{restatable}{lemma}{lemmalbmaintainfractional}
	\label{lemma:lb-maintain-fractional}
	We can maintain fractional solutions $(\bfx^{(t)})_{t \in [n]}$ online such that the following conditions hold:
	\begin{itemize}
		\item For every $t \in [n]$, $\bfx^{(t)} \in [0, 1]^E$ is constructed at time $t$, and is a fractional solution for jobs $[t]$ of makespan at most $(1+\epsilon) T^*$:  $\sum_{i}x^{(t)}_{ij} = 1$ for all jobs $j \in [t]$, $x^{(t)}_{ij} = 0$ for every $ij \in E$ with $j > t$, and  $\sum_{j} p_{ij} x^{(t)}_{ij} \leq (1+\epsilon) T^*$
		for all $i \in M$. \footnote{Technically, we do not know the edges incident to jobs arrived after $t$. However the $x^{(t)}_{ij}$ values for these edges are $0$. Letting $\bfx^{(t)} \in [0, 1]^E$ is only for the sake of notational convenience.}
		\item The total fractional recourse until any time $t$ is bounded, i.e., $\sum_{t'=1}^{t} \sum_{i} |\bfx^{(t')} - \bfx^{(t'-1)}|  \leq O\big(\frac1\epsilon\big) \cdot t$.
	\end{itemize}
\end{restatable}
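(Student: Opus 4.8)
The plan is to run Algorithm~\ref{alg:general-online} on the generalized-flow instance $G'=(V',E')$ constructed above and to define $\bfx^{(t)}$ by reading off flow values: $x^{(t)}_{ij}$ is the amount of flow the algorithm sends on the edge $ji$ at time $t$. The first task is to verify that valid flows for $G'$ correspond exactly to fractional assignments of makespan at most $(1+\epsilon)T^*$. Flow conservation at a source $j$ (which has no incoming edges and balance $a_j=1$) gives $\sum_i x^{(t)}_{ij}=1$; flow conservation at a machine $i$ (balance $0$, with unique useful outgoing edge $i\tau$ of gain $1$) gives $x^{(t)}_{i\tau}=\sum_j\gamma_{ji}x^{(t)}_{ij}=\sum_j p_{ij}x^{(t)}_{ij}$, so the capacity bound $x^{(t)}_{i\tau}\le(1+\epsilon)T^*$ is precisely the makespan bound; edges incident to not-yet-arrived jobs carry $0$ flow; and $\sum_i x^{(t)}_{ij}=1$ together with nonnegativity forces $x^{(t)}_{ij}\in[0,1]$ even though $\mu_{ji}=\infty$. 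This establishes the first bullet, modulo the caveat that a valid flow could in principle route a source out along the dummy edge $j\tau$ rather than along some $ji$.

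Second, I would pin down the offline optimum. In the instance with all capacities scaled by $\frac1{1+\epsilon}$, the edge $i\tau$ has capacity $T^*$, so valid flows there for sources $[t]$ are exactly fractional makespan-$T^*$ assignments of the jobs $[t]$; one exists by the definition of $T^*$. Every valid flow costs at least $t$ (each of the $t$ sources pushes one unit along a cost-$1$ edge, the machine–sink edges are free), with equality exactly for the dummy-free ones; hence the optimum cost at time $t$ equals $t$. Applying Theorem~\ref{thm:flow} with the current time equal to $t$ bounds the cumulative cost incurred by the algorithm through time $t$ by $\frac{1+\epsilon}{\epsilon}\,t$. Choosing $B>\frac{(1+\epsilon)n}{\epsilon}$ then discharges the caveat above: using a dummy edge costs at least $B$, which would violate this cumulative-cost bound, so the algorithm never touches one and the conservation constraint at each source really reads $\sum_i x^{(t)}_{ij}=1$.

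Third, I would convert the cost bound into the recourse bound. Since $c_{ji}=1$ for every job–machine edge and $c_{i\tau}=0$, the cost incurred at step $t'$ is exactly $\sum_{ij}\big(x^{(t')}_{ij}-x^{(t'-1)}_{ij}\big)_+$, i.e.\ the total \emph{increase} of the fractional assignment at that step. For each edge the value sequence starts at $0$ and never goes negative, so its total decrease is at most its total increase, hence its total variation is at most twice its total increase; summing over job–machine edges, $\sum_{t'\le t}\sum_{ij}\big|x^{(t')}_{ij}-x^{(t'-1)}_{ij}\big|\le 2\sum_{t'\le t}\sum_{ij}\big(x^{(t')}_{ij}-x^{(t'-1)}_{ij}\big)_+\le \frac{2(1+\epsilon)}{\epsilon}\,t=O\big(\tfrac1\epsilon\big)\,t$, which is the second bullet.

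There is no deep obstacle once these pieces are lined up; the step requiring the most care is the handling of the dummy edges — one must argue both that they are absent from the offline optimum (so the optimum cost at time $t$ is exactly $t$) and that the online algorithm never uses them (so source conservation yields $\sum_i x^{(t)}_{ij}=1$), and both follow from taking $B$ large relative to $\frac{(1+\epsilon)n}{\epsilon}$. A secondary subtlety is that Theorem~\ref{thm:flow} is a statement about the cost "at any time", so I would apply it with end-time $t$ for every $t$ rather than only at $t=n$; the monotonicity of heights, Lemma~\ref{lemma:height-increases}, is what makes this per-time application legitimate.
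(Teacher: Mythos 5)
Your proof follows the same route as the paper's: read $\bfx^{(t)}$ off the flow maintained by Algorithm~\ref{alg:general-online} on the network $G'$, verify the correspondence between valid generalized flows and fractional assignments of makespan $\le(1+\epsilon)T^*$, observe that the offline optimum of the scaled-capacity instance at time $t$ is at most $t$, invoke Theorem~\ref{thm:flow}, and convert cost to recourse with a factor of $2$. You make explicit several points the paper leaves implicit, including the per-time application of Theorem~\ref{thm:flow} (justified by height monotonicity); and your ``at most twice'' is in fact the correct form of the paper's ``precisely $2$ times'' bound, since $\sum_{ij}x^{(t)}_{ij}=t$ forces total increments to exceed total decrements by $t$, giving recourse $=2\cdot\text{cost}-t<2\cdot\text{cost}$.

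The one step that is not airtight as written is the dummy-edge argument. You assert that ``using a dummy edge costs at least $B$,'' but the cost of pushing $\delta$ units through a dummy edge is $B\delta$, not $B$; in a generalized-flow setting the gain factors along an augmenting path can attenuate $\delta$ to a small positive number. With $B>\frac{(1+\epsilon)n}{\epsilon}$, your cumulative-cost bound rules out total dummy flow $\geq 1$ but does not rule out a small nonzero $\delta$, which would already make $\sum_i x^{(t)}_{ij}<1$ for some job $j$ and break the first bullet. The correct (standard) argument is that the non-dummy instance is always feasible at makespan $T^*\le(1+\epsilon)T^*$, so a non-dummy augmenting path from the current source always exists in the residual graph, and for $B$ large relative to the cost of the cheapest such path, the algorithm's greedy rule never puts \emph{any} flow on a dummy edge. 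The paper itself simply says ``for $B$ large enough,'' so you are not alone in being brief here; just note that your explicit threshold, and the logic you use to exploit it, are not sufficient on their own.
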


\subsection{Online Bipartite Matching with Vertex Updates On the Right Side} \label{subsec:b-matching-changing}
Both our rounding algorithms work by carefully constructing an online instance of a matching-type problem, \emph{based on the load balancing fractional solution}, and then using good online algorithms for this instance to derive our final assignment. We therefore describe this variant of the online bipartite matching problem where \emph{vertices on the right side may also change}:

There is a bipartite graph $H = (L \uplus R, E_H)$ with $L = E_H = \emptyset$ initially. $H$ is changed dynamically by the following three types of updates: (i) a new vertex $u$ is added to $L$, along with its incident edges, (ii) a new vertex $v$ is added to $R$, along with its incident edges, and (iii) a vertex $v \in R$ is removed. Our goal is to maintain a matching covering $L$, given the promise that one exists at any time. The algorithm has an amortized recourse of $\beta$, if the number of times we re-assign vertices in $L$ is at most $\beta$ times the number of updates. 

We now show that good algorithms exist if the graph $H$ has sufficient expansion at all times.
	\begin{theorem}
		\label{thm:online-matching-changing}
		Consider an online bipartite-matching problem with the three types of updates specified above.  Let $\alpha > 1$ be a real number such that at any time, we have $|N_H(A)| \geq \alpha|A|$ for every $A \subseteq L$.  Then there is an algorithm for the instance with $O(\log_{\alpha} n)$-amortized recourse, where $n$ is the size of $|L|$ at the end.
	\end{theorem}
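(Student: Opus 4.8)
The plan is to maintain a matching covering $L$ by an augmenting-path style argument, but with a twist: whenever an update occurs we only perform an augmentation if a \emph{short} augmenting path exists, and we charge the cost of long augmentations to a potential function. Concretely, I would proceed as follows. Whenever a new vertex $u$ is added to $L$, we need to cover it; whenever a vertex $v$ is removed from $R$, the (at most one) vertex in $L$ matched to $v$ becomes uncovered and must be recovered; when a new vertex $v$ is added to $R$ we do nothing immediately (it only helps). So in all cases the work reduces to: given a current matching that covers $L \setminus \{u\}$, extend it to cover $L$ as well. Since Hall's condition $|N_H(A)| \ge \alpha |A| > |A|$ holds for every $A \subseteq L$ at every time, an augmenting path from $u$ to some unmatched right-vertex always exists. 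The recourse of a single re-cover is the length (number of $L$-vertices) on the augmenting path we use.

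The key quantitative step is to show that we can always find an augmenting path with \emph{few} $L$-vertices, namely $O(\log_\alpha n)$ of them. This is the standard expander-augmenting-path fact: run BFS from $u$ in the alternating layered graph; because every set $A$ of already-reached $L$-vertices has $|N_H(A)| \ge \alpha|A|$, the number of distinct $R$-vertices reachable within $2k$ alternating steps is at least $\alpha^k$ (each additional alternating layer multiplies the reachable $L$-side by a factor $\ge \alpha$, since the matched neighbors on the right are all ``used up'' by the alternating structure and the expansion forces new right-vertices), until it exceeds $|R|$ or we hit a free right-vertex. Since a free right-vertex must appear before we exhaust all of $R$ (otherwise the reached $L$-set would violate Hall, as all its $\ge \alpha|A|$ neighbors would be matched into a strictly smaller set), we reach a free vertex within $k = O(\log_\alpha |R|) = O(\log_\alpha n)$ alternating layers, giving an augmenting path with $O(\log_\alpha n)$ vertices of $L$. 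Thus each of the (at most) one re-cover triggered per update costs $O(\log_\alpha n)$ re-assignments, and summing over all updates gives amortized recourse $O(\log_\alpha n)$. (Note $|R|$ could in principle exceed $n = |L|$, but the expansion hypothesis already forces $|R| \ge |N_H(L)| \ge \alpha |L|$, and more to the point the BFS terminates as soon as the reached $L$-set has more than $n$ neighbors, which happens after $\log_\alpha n + O(1)$ layers regardless of how large $R$ is; so the bound is genuinely $O(\log_\alpha n)$.)

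A subtlety I would need to handle carefully is that after performing an augmentation, the matching changes, so I must re-argue that the Hall/expansion condition is still available for the \emph{next} update — but this is immediate, since expansion is a property of the graph $H$, assumed to hold ``at any time'' by hypothesis, and is independent of which matching we currently hold. The only real obstacle is making the layered-BFS growth argument airtight: precisely, I must define the alternating reachability sets $L_0 = \{u\} \subseteq L_1 \subseteq \cdots$ and $R_0 \subseteq R_1 \subseteq \cdots$ where $R_{i}$ is the $H$-neighborhood of $L_i$ and $L_{i+1}$ is $L_i$ together with the match-partners of $R_i$, and show $|R_i| \ge \alpha |L_i| \ge \alpha^{i+1}$ as long as no free right-vertex has been encountered (if no free vertex is in $R_i$, then $R_i$ is matched into $L_{i+1}$, but also $L_{i+1}$'s partners lie in $R_i$, so $|L_{i+1}| \le |R_i|$, while $|R_i| = |N_H(L_i)| \ge \alpha|L_i|$; iterating gives geometric growth). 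Since $|R_i| \le |R|$ and $|R|$-reached-from-$L_i$ is capped, after $O(\log_\alpha n)$ steps we must have found a free right vertex, yielding the short augmenting path. Everything else — the three update types reducing to single re-covers, the amortization, handling the removed-vertex case — is routine bookkeeping.
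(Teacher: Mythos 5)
Your proof is correct and takes essentially the same approach as the paper: reduce each of the three update types to re-covering a single unmatched vertex of $L$, then show (via the standard layered-BFS expansion argument, identical to the paper's Lemma~\ref{lemma:shortest-augmenting}) that a short augmenting path of length $O(\log_\alpha n)$ always exists. The only blemish is a small slip where you write $|L_{i+1}| \leq |R_i|$ --- the direction needed for geometric growth is $|L_{i+1}| \geq |R_i|$ (which follows from ``$R_i$ is matched injectively into $L_{i+1}$'', a fact you do state); this is clearly a typo rather than a conceptual gap, since you correctly conclude that $|L_{i+1}| \geq \alpha |L_i|$.
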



	The algorithm is simple: We maintain a matching $F$ in $H$ covering $L$. Whenever a new vertex in $L$ arrives, we match it by changing $F$ using the shortest augmenting path. Whenever some vertex $v \in R$ is removed, we remove the edge in $F$ incident to $v$ if it exists. Then for the unmatched vertex, we match it again using the shortest augmenting path. Due to the $\alpha$-factor slack in the neighborhood, we can use an expansion argument to show that there will always exist a short augmenting path if there is a unmatched vertex in $L$ (\Cref{lemma:shortest-augmenting}).
	
	\begin{restatable}{lemma}{lemmashortestaugmenting}
    	\label{lemma:shortest-augmenting}
    	Let $H = (L \uplus R, E_H)$ be a bipartite graph. Let $\alpha > 1$ be a real number such that $|N_H(A)| \geq \alpha|A|$ for every $A \subseteq L$.  
    	Let $F \subseteq E_H$ be a partial matching where not all vertices in $L$ are matched. Then there is an augmenting path of length at most $2D+1$ w.r.t $F$, where $D = \floor{\log_{\alpha}|L|} + 1 > \log_{\alpha}|L|$. 
    \end{restatable}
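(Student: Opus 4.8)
The plan is to run a layered breadth-first alternating search from an arbitrary $F$-unmatched vertex $u \in L$ and argue that, as long as no short augmenting path has been discovered, the explored layers on the $L$-side grow by a factor more than $\alpha$ per level, which (since they live inside $L$) cannot continue past level $D$.

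\textbf{The layered search.} I would set $L_0 = \{u\}$, and, having defined $L_0,\dots,L_{i-1}$ and $R_1,\dots,R_{i-1}$, put $R_i := N_H(L_{i-1}) \setminus (R_1 \cup \dots \cup R_{i-1})$. If some vertex of $R_i$ is unmatched by $F$, the search stops at level $i$; otherwise let $L_i$ be the set of $F$-partners of the vertices of $R_i$, and continue. Two structural facts drive the argument, and I would verify them first: (a) the sets $R_1,R_2,\dots$ are pairwise disjoint by construction, and the sets $L_0,L_1,L_2,\dots$ are pairwise disjoint because $F$ is a matching --- a vertex lying in two distinct layers $L_i,L_j$ would have its unique $F$-partner in both $R_i$ and $R_j$; in particular $u$, being unmatched, lies in no $L_i$ with $i\ge 1$, and $|L_i| = |R_i|$ whenever $R_i$ is fully matched; (b) by the definition of the layers, $N_H(L_0 \cup \dots \cup L_{i-1}) = R_1 \cup \dots \cup R_i$. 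Finally, if the search stops at level $i$ with an unmatched $v \in R_i$, then tracing parent pointers back (each $R_j$-vertex to the $L_{j-1}$-vertex that reached it, each $L_j$-vertex to its $F$-partner in $R_j$) produces, using disjointness of the layers, a simple path $u = w_0, v_1, w_1, \dots, v_i = v$ that alternates non-matching/matching edges; since $u$ and $v$ are both unmatched this is an augmenting path, of length $2i-1$.

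\textbf{Geometric growth and conclusion.} Write $a_i := |L_0 \cup \dots \cup L_i|$, so $a_0 = 1$ and $a_i \le |L|$ always. Suppose the search has not stopped by level $i$, i.e.\ $R_1,\dots,R_i$ are all fully matched. Then by fact (a) the union $R_1 \cup \dots \cup R_i$ is the disjoint union of $L_1,\dots,L_i$, so $|R_1 \cup \dots \cup R_i| = a_i - 1$; combining with fact (b) and the expansion hypothesis applied to $A = L_0 \cup \dots \cup L_{i-1}$ gives $a_i - 1 = |N_H(A)| \ge \alpha\, a_{i-1}$, hence $a_i > \alpha\, a_{i-1}$. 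Inductively $a_i > \alpha^i$. Now if no augmenting path of length at most $2D-1$ existed, the search would in particular not stop through level $D$, forcing $|L| \ge a_D > \alpha^{D}$; but $D > \log_\alpha |L|$ gives $\alpha^{D} > |L|$, a contradiction. Therefore an augmenting path of length at most $2D - 1 \le 2D+1$ exists w.r.t.\ $F$.

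\textbf{Main obstacle.} The only delicate point is the bookkeeping in the previous paragraph: one must be sure the layers $L_0,L_1,\dots$ are genuinely disjoint (so that $|R_1\cup\dots\cup R_i| = a_i - 1$ exactly) and that the back-traced walk is a simple alternating path rather than merely a walk. Both follow from $F$ being a matching together with the ``newly reached vertices only'' definition of $R_i$; once these are in place the rest is a routine geometric-growth count.
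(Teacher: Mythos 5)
Your proof is correct and takes essentially the same approach as the paper's: a geometric-growth argument showing that BFS levels in the alternating-path search must expand by a factor of (at least) $\alpha$ per level, which cannot continue past level $D$ since they live in $L$. The only cosmetic differences are that the paper grows the search from all free $L$-vertices simultaneously and works with the cumulative reachability sets $L^d, R^d$ directly, whereas you start from a single free vertex, build disjoint BFS layers $L_i, R_i$, and then sum; both yield the same contradiction $|L| \ge \alpha^D |L^0| > |L|$, and your version in fact gives the slightly tighter bound $2D-1$.
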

    
This finishes the proof of Theorem~\ref{thm:online-matching-changing}, as the number of times we re-assign vertices in $L$ is at most the total length of augmentation paths over time, which is small by~\Cref{lemma:shortest-augmenting}.  

\noindent {\bf Online $b$-Matching with Capacity Updates.}~\Cref{thm:online-matching-changing} can also be generalized to solve the variant where the vertices on the right-side have capacities $b(v)$ which can change over time. The goal is to maintain matchings respecting these capacities, and an algorithm is said to have $\beta$-amortized recourse if the total number of re-assignments is at most $\beta$ times the total change to the capacities $\sum_{v \in R} \sum_{t} | b^{t}(v) - b^{t-1}(v) |$. To see how, note that we can simply have $b(v)$ copies of each vertex $v$ on the right side to reduce it to the online bipartite problem stated above. 


\subsection{Simple Online Rounding via Online $b$-Matching with Right Vertex Updates } \label{sec:rounding1}
We now show how to use the fractional solutions $(\bfx^{(t)})_{t \in [n]}$ to construct the online $b$-matching instance with capacity updates, and apply Theorem~\ref{thm:online-matching-changing} to construct the assignment. 

We use $H = (L \uplus R, E_H)$ and $b:R \to \Z_{\geq 0}$ to denote the graph for the $b$-matching instance and the capacity values of $R$. (We let $b$ be a function instead of a vector due to the heavy notations for vertices in $R$.) The algorithm is formally defined in Algorithm~\ref{alg:olb}. 
Left vertices correspond to jobs. Right vertex $v_{ik} \in R$ associated with machine $i$ is only catering to jobs $j$ with $p_{ij} \approx T^*/2^k$. Indeed, let $f_{ik}$ denote the total fractional allocation of such jobs to machine $i$ in the solution $(\bfx^{(t)})$. 
	Then we ideally want to define the capacity of $v_{ik}$ to be $b({v_{ik}}) = \lceil 2 f_{ik} \rceil$, where the factor of $2$ would guarantee the condition in Theorem~\ref{thm:online-matching-changing} holds with $\alpha = 2$. However, this could lead to too many changes to ${v_{ik}}$'s capacity over time, e.g., if the fractional allocation keeps fluctuating around a half-integral value.  To overcome this, we introduce a random offset and bound the expected change in capacities in terms of the total fractional change. We can also derandomize this using ideas similar to our other rounding algorithm in~\Cref{sec:2-approx-rounding}.




\begin{algorithm}
	\caption{Online Rounding of Fractional Solutions $\bfx^{(1)}, \bfx^{(2)}, \cdots, \bfx^{(n)}$}
	\label{alg:olb}
	\begin{algorithmic}[1]
		\State Let $L \gets \emptyset, R \gets \{v_{ik}: i \in M, k \in [0, K]\}$ for $K = \lceil2 \log n\rceil$, $H \gets (L \uplus R, E_H = \emptyset)$
		\State Choose a random threshold $\rho \in [0,1)$.
		\For{$t = 1, 2, \ldots, n$}
		\State Add $t$ to $L$, its incident edges incident to $E_H$ so that $N_H(t) = \{v_{ik} \, : it\in E, \, T^* / 2^{k+1} < p_{it}  \leq T^* / 2^k \}$.
		\State Update $J_{ik} \gets \{j\in[t] : T^* / 2^{k+1} < p_{ij}  \leq T^* / 2^k\}$, $f_{ik} \gets \sum_{j \in J_{ik}} x^{(t)}_{ij}$ and $b(v_{ik}) \gets \left\lceil 2 f_{ik}  + \rho \right\rceil$ for all $i,k$ \label{albolb:cap}
		\State Follow the algorithm in Theorem~\ref{thm:online-matching-changing} to obtain a new $b$-matching in $H$.
		\State Update $\sigma:[t] \to M$ by setting $\sigma(j) = i$ if $j$ is assigned to $v_{ik}$ for some $k$ in the $b$-matching \label{algolb:assgt}
		\EndFor
	\end{algorithmic}
\end{algorithm}  

Throughout this section, we use the superscript $(t)$ over a notion to denote the value of the parameter at the end of time $t$ during Algorithm~\ref{alg:olb}.

\paragraph{Analysis of $O(1)$-Competitive Ratio and $O(\log n)$-Amortized Recourse}
\begin{lemma}
	\label{lemma:slack}
	At any time of the algorithm, the $b$-matching instance $(L \cup R, E_H)$ satisfies $b(N_H(A)) \geq 2|A|$ for every $A \subseteq L$.	
\end{lemma}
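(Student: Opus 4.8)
The plan is to exhibit an explicit fractional assignment of the left vertices of $A$ into the right vertices $N_H(A)$, using the fractional load-balancing solution $\bfx^{(t)}$, so that the total fractional mass landing on each $v_{ik}$ is at most $f_{ik}$. Fix a subset $A \subseteq L$ at some time $t$; write each $t' \in A$ for a job index in $[t]$. For each job $j \in A$, the fractional solution satisfies $\sum_i x^{(t)}_{ij} = 1$, and moreover the support of $x^{(t)}_{\cdot j}$ is over machines $i$ with $ij \in E$ and $p_{ij} \leq (1+\epsilon)T^* $; but by the definition of $N_H(j)$, any machine $i$ with $x^{(t)}_{ij} > 0$ and $T^*/2^{k+1} < p_{ij} \leq T^*/2^k$ contributes to the edge $j v_{ik}$, so the entire unit of fractional mass of $j$ is distributed among vertices $v_{ik} \in N_H(j) \subseteq N_H(A)$. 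Summing over $j \in A$, the total fractional mass placed on a fixed $v_{ik} \in N_H(A)$ is exactly $\sum_{j \in A \cap J_{ik}} x^{(t)}_{ij} \leq \sum_{j \in J_{ik}} x^{(t)}_{ij} = f_{ik}$.

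Now I combine this with the capacity definition on line~\ref{albolb:cap}. We have $\sum_{v \in N_H(A)} f_{ik}\text{-mass} = |A|$ in total (each job in $A$ contributes exactly $1$), and each $v_{ik} \in N_H(A)$ receives at most $f_{ik}$, hence $\sum_{v_{ik} \in N_H(A)} f_{ik} \geq |A|$. Therefore
\begin{align*}
b(N_H(A)) = \sum_{v_{ik} \in N_H(A)} \ceil{2 f_{ik} + \rho} \geq \sum_{v_{ik} \in N_H(A)} 2 f_{ik} \geq 2|A|,
\end{align*}
using $\ceil{2f_{ik} + \rho} \geq 2 f_{ik} + \rho \geq 2f_{ik}$ since $\rho \in [0,1)$. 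This is exactly the claimed inequality. One should double-check the edge case $A = \emptyset$, where both sides are $0$ and the inequality holds trivially; and the edge case where some $J_{ik}$ is empty, where $f_{ik} = 0$ and $v_{ik}$ simply contributes $0$ to the lower bound (it may still be a neighbor with positive capacity $\ceil{\rho}$, which only helps).

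The main thing to get right is the bookkeeping that the neighborhood $N_H(j)$ defined in the algorithm genuinely captures \emph{all} the fractional mass of job $j$: i.e., that $\bfx^{(t)}$ never assigns $j$ to a machine $i$ that is not represented by some $v_{ik}$ with $jv_{ik} \in E_H$. This follows because the buckets $T^*/2^{k+1} < p_{ij} \leq T^*/2^k$ for $k \in [0,K]$ with $K = \ceil{2\log n}$ partition the relevant range of processing times — any $i$ with $x^{(t)}_{ij}>0$ has $p_{ij} \leq (1+\epsilon)T^*$, and we may assume $p_{ij} \geq T^*/n^2$ after a standard preprocessing step (jobs with tiny processing time on a machine can be handled separately or rounded), so $p_{ij}$ falls in one of these $K+1$ dyadic buckets and the edge $j v_{ik} \in E_H$ is present. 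This is not hard, but it is the one place where the argument relies on a structural property of the instance rather than pure arithmetic, so I would state it carefully (or fold it into the construction of $H$).
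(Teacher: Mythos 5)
Your proof is correct and takes essentially the same route as the paper's one-line argument: view $\bfx^{(t)}$ as a fractional matching in $H$ in which every $j\in[t]$ is covered to extent $1$ and every $v_{ik}$ is covered to extent $f_{ik}$, so that the mass from $A$ lands entirely in $N_H(A)$, giving $\sum_{v_{ik}\in N_H(A)} f_{ik}\ge|A|$, and then finish with $b(v_{ik})=\ceil{2f_{ik}+\rho}\ge 2f_{ik}$. You spell out the arithmetic the paper leaves implicit, and you also surface a real (if mild) bookkeeping point that the paper's terse proof glosses over: the argument needs every machine $i$ with $x^{(t)}_{ij}>0$ to fall into one of the $K+1$ dyadic buckets, i.e.\ $T^*/2^{K+1}< p_{ij}\le T^*$, which requires both that the fractional solution put no mass on edges with $p_{ij}>T^*$ (this is part of the definition of $T^*$ and should be enforced in the flow reduction) and a standard preprocessing of jobs whose $p_{ij}$ is below $T^*/\mathrm{poly}(n)$; flagging this is a worthwhile addition, not a defect.
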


\begin{proof}
	The lemma holds since $\bfx^{(t)}$ gives a fractional matching between $[t]$ and $R$ where every $j \in [t]$ is matched to an extent of $1$ and every $v_{ik}$ is matched to an extent of $f^{(t)}_{ik}$, and $b^{(t)}(v_{ik}) \geq 2 f^{(t)}_{ik}$ is an integer.
\end{proof}

By Theorem~\ref{thm:online-matching-changing}, the algorithm achieves an $O(\log n)$-amortized recourse for the online matching instance.

\begin{claim} \label{cl:tau-expct}
	Let $f$ and $f'$ be any two real values, and let $\rho$ be a random variable sampled uniformly from $[0,1)$. Then $\mathbf{E}_{\rho} \left[ \left|  \lceil f + \rho \rceil -  \lceil f' + \rho \rceil \right| \right] = \left|f - f'\right|$.
\end{claim}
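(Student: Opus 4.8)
The plan is to reduce the statement to a clean ``geometric'' fact about the fractional part of a uniform random variable. First I would observe that it suffices to handle the case $f' \le f$ (the statement is symmetric in $f$ and $f'$), and further, by writing the difference of ceilings as a telescoping sum and using additivity of expectations, it suffices to prove the identity when $f - f' \le 1$; once we know $\mathbf{E}_\rho[|\lceil f+\rho\rceil - \lceil g+\rho\rceil|] = |f-g|$ for all pairs with $|f-g|\le 1$, chaining through intermediate points $f' = g_0 < g_1 < \dots < g_k = f$ with consecutive gaps at most $1$ gives the general case (note $\lceil\cdot\rceil$ is monotone, so all the per-step differences have the same sign and the absolute values add up). So the core is: for $0 \le f - f' \le 1$, show $\mathbf{E}_\rho[\lceil f+\rho\rceil - \lceil f'+\rho\rceil] = f - f'$.

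For this core step I would compute directly. Note $\lceil f+\rho\rceil - \lceil f'+\rho\rceil$ is an integer-valued random variable taking only the values $0$ or $1$ when $f-f'\le 1$ (since the two ceilings differ by at most $\lceil f-f'\rceil = 1$ and $\lceil f+\rho \rceil \ge \lceil f'+\rho\rceil$ by monotonicity). Hence the expectation equals $\pr{\rho}{\lceil f+\rho\rceil > \lceil f'+\rho\rceil}$. Writing $f = n + a$ and $f' = m + a'$ with $n,m \in \mathbb{Z}$ and $a, a' \in [0,1)$, the event $\lceil f+\rho\rceil > \lceil f'+\rho\rceil$ is, up to the deterministic integer shift, the event that $\rho$ ``crosses an integer more times'' when shifted by $a$ than when shifted by $a'$; a short case analysis on whether $a \ge a'$ or $a < a'$ (equivalently, on the relative position of the points $1-a$ and $1-a'$ in $[0,1)$) shows this event is an interval of length exactly $a - a' + (n - m) = f - f'$ when $f-f' \le 1$. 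Thus the probability is $f-f'$, as desired.

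A cleaner way to organize the same computation, which I would probably use in the write-up, is the identity $\lceil x \rceil = -\lfloor -x\rfloor$ together with the elementary fact $\mathbf{E}_\rho[\lfloor x + \rho\rfloor] = x$ for any real $x$ (since $\lfloor x+\rho\rfloor$ is $\lfloor x\rfloor$ with probability $1 - \{x\}$ and $\lfloor x\rfloor + 1$ with probability $\{x\}$, giving mean $x$). This immediately yields $\mathbf{E}_\rho[\lceil f+\rho\rceil - \lceil f'+\rho\rceil] = f' - f$ for the raw (signed) difference in general — wait, with the correct sign it is $f - f'$ after accounting for the reflection; I would be careful here — and then the only remaining point is to remove the absolute value, which is handled by the monotonicity-plus-chaining argument above: partition $[f', f]$ (assume WLOG $f' \le f$) into pieces of length $\le 1$, apply monotonicity so each $\lceil \cdot + \rho\rceil$ difference is nonnegative, and add.

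The main obstacle I anticipate is purely bookkeeping: getting the sign right when passing through the $\lceil x\rceil = -\lfloor -x\rfloor$ reflection, and making sure the ``length of an interval mod $1$'' computation in the core step is stated without an off-by-one error at the boundary points (the endpoints $\rho$ where a ceiling jumps, which form a measure-zero set and can be safely ignored since $\rho$ is continuous). There is no real mathematical difficulty — this is a one-paragraph lemma — so the write-up should simply present the $\mathbf{E}_\rho[\lfloor x+\rho\rfloor] = x$ observation, reduce to $|f-f'|\le 1$ by monotone chaining, and conclude.
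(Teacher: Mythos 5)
Your proposal is correct, and the core computation---that $\mathbf{E}_\rho[\lfloor x+\rho\rfloor]=x$, equivalently that a length-$\ell$ interval shifted by a uniform $\rho\in[0,1)$ contains $\ell$ integers in expectation---is exactly the mechanism the paper uses (the paper phrases it as counting the integers lying between $f'+\rho$ and $f+\rho$). One simplification: the chaining reduction to $|f-f'|\le 1$ is unnecessary. Once you have the signed identity $\mathbf{E}_\rho\bigl[\lceil f+\rho\rceil-\lceil f'+\rho\rceil\bigr]=f-f'$ for all $f,f'$, then for $f\ge f'$ the random variable $\lceil f+\rho\rceil-\lceil f'+\rho\rceil$ is nonnegative pointwise by monotonicity of $\lceil\cdot\rceil$, so the absolute value drops outright and you are done in one step. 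And the sign you were worried about resolves cleanly without the $\lceil x\rceil=-\lfloor-x\rfloor$ reflection: almost surely $x+\rho\notin\mathbb{Z}$, so $\lceil x+\rho\rceil=\lfloor x+\rho\rfloor+1$, giving $\mathbf{E}_\rho[\lceil x+\rho\rceil]=x+1$; the $+1$ cancels in the difference, yielding $f-f'$ directly.
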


\begin{proof}
	The proof is straightforward. Indeed, suppose without loss of generality, $f \geq f'$. Then, $\lceil f + \rho \rceil -  \lceil f' + \rho \rceil$ is precisely the number of integral values which lie in $f + \rho$ and $ f' + \rho$. But because $\rho$ is a uniformly random value in $[0,1)$, it is easy to see that this number is precisely $|f - f'|$. 
\end{proof}

\begin{lemma} \label{lem:capchange}
	The total expected change in capacities $\mathbf{E}_{\rho} \left[ \sum_{t' =1}^{t} \sum_{ik} \left| b^{(t')}(v_{ik}) - b^{(t'-1)}(v_{ik}) \right| \right]$ by time $t$, is at most $\sum_{t' =1}^{t}\left| \bfx^{(t')} - \bfx^{(t'-1)} \right|$.
\end{lemma}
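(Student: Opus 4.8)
The plan is to bound the total change in capacities $\sum_{t'} \sum_{ik} |b^{(t')}(v_{ik}) - b^{(t'-1)}(v_{ik})|$ term by term, comparing it against the total fractional change $\sum_{t'} |\bfx^{(t')} - \bfx^{(t'-1)}|$. The key observation is that $b^{(t')}(v_{ik}) = \lceil 2 f^{(t')}_{ik} + \rho \rceil$ depends on $\bfx^{(t')}$ only through the single scalar $f^{(t')}_{ik} = \sum_{j \in J_{ik}} x^{(t')}_{ij}$, so it suffices to control how much each $f_{ik}$ moves between consecutive steps and then apply Claim~\ref{cl:tau-expct} (with $f = 2f^{(t')}_{ik}$, $f' = 2f^{(t'-1)}_{ik}$, noting the factor $2$ just rescales).

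First I would fix a time step $t'$ and a pair $(i,k)$. By Claim~\ref{cl:tau-expct},
\begin{align*}
\mathbf{E}_\rho\left[ \left| b^{(t')}(v_{ik}) - b^{(t'-1)}(v_{ik}) \right| \right]
= \left| 2 f^{(t')}_{ik} - 2 f^{(t'-1)}_{ik} \right|
= 2 \left| f^{(t')}_{ik} - f^{(t'-1)}_{ik} \right|.
\end{align*}
Here there is a subtle point about the definition of $J_{ik}$: at time $t'$ the set $J_{ik}$ may have grown (job $t'$ is newly added to some $J_{ik}$'s), but since $x^{(t'-1)}_{i,t'} = 0$ for the new job, we have $f^{(t'-1)}_{ik} = \sum_{j \in J^{(t')}_{ik}} x^{(t'-1)}_{ij}$ as well, so we may use the common index set $J^{(t')}_{ik}$ for both terms. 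Hence
\begin{align*}
2 \left| f^{(t')}_{ik} - f^{(t'-1)}_{ik} \right|
= 2 \left| \sum_{j \in J^{(t')}_{ik}} \left( x^{(t')}_{ij} - x^{(t'-1)}_{ij} \right) \right|
\le 2 \sum_{j \in J^{(t')}_{ik}} \left| x^{(t')}_{ij} - x^{(t'-1)}_{ij} \right|.
\end{align*}

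Next I would sum over all $(i,k)$ and then over $t'$. For a fixed machine $i$ and fixed job $j$, the edge $ij$ lies in exactly one bucket $J_{ik}$ (the one with $T^*/2^{k+1} < p_{ij} \le T^*/2^k$), so summing the right-hand side over $k$ just recovers $2 \sum_{j} |x^{(t')}_{ij} - x^{(t'-1)}_{ij}|$, and summing over $i$ gives $2 |\bfx^{(t')} - \bfx^{(t'-1)}|$. Wait --- that yields a factor $2$ on the right-hand side, whereas the lemma claims a clean bound with no constant. I would resolve this by recalling the paper's convention: the fractional recourse $|\bfx^{(t')} - \bfx^{(t'-1)}|$ in Lemma~\ref{lemma:lb-maintain-fractional} already counts each edge change, and the capacities use $2f_{ik}$; so in fact one should track the comparison against $2|\bfx^{(t')}-\bfx^{(t'-1)}|$, or absorb the $2$ into the $O(\cdot)$ bookkeeping downstream. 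The cleanest fix consistent with the statement is to note that in the $b$-matching construction each left vertex (job) has total fractional mass $1$ across $R$, so $\sum_{ik} f_{ik} = t$, and the relevant quantity is really $2\cdot(\text{half the }L^1\text{ change})$; alternatively, one defines $|\bfx^{(t')}-\bfx^{(t'-1)}|$ to mean $\tfrac12 \|\bfx^{(t')}-\bfx^{(t'-1)}\|_1$. Assuming the intended normalization, the displayed chain of inequalities, summed and combined via linearity of expectation, gives exactly the claimed bound.

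The main obstacle is therefore not any hard estimate --- the whole argument is a one-line application of Claim~\ref{cl:tau-expct} plus the disjointness of the buckets $\{J_{ik}\}_k$ --- but rather bookkeeping the constant factor correctly and handling the index-set technicality (that $J_{ik}$ changes when a new job arrives, but harmlessly, because the new job carries zero old fractional mass). I would state the bucket-disjointness observation explicitly as the crux, apply the claim to each coordinate, and then conclude by linearity of expectation and summation, being careful to match the normalization of the fractional-recourse notation used elsewhere in the section.
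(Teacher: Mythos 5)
Your approach is the same as the paper's: apply Claim~\ref{cl:tau-expct} coordinate-wise (with the factor $2$ coming out as $2|f^{(t')}_{ik} - f^{(t'-1)}_{ik}|$), use the triangle inequality, and sum using the fact that for each fixed $i$ the sets $\{J_{ik}\}_k$ partition $N(i)$. Your worry about the factor of $2$ is well-founded and not a defect in your argument: the paper's own proof in fact derives the bound $2\sum_{t'=1}^{t}\left| \bfx^{(t')} - \bfx^{(t'-1)} \right|$, and the lemma statement simply drops the $2$. Since the bound is only used downstream inside $O(\cdot)$ bookkeeping (to conclude $O(t)$ expected capacity change and hence $O(\log n)$ amortized recourse), this discrepancy is harmless, and you need not invoke any alternative normalization of $|\cdot|$ to reconcile it --- the cleanest fix is to keep the factor $2$ in the statement. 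Your observation that the index-set change when a new job arrives is harmless (since the new job carries zero old fractional mass) is also correct and is implicitly used by the paper.
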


\begin{proof}
	The proof is a simple argument using linearity of expectation and the triangle inequality. For any $i$ and $k$, we use $J_{ik}$ to denote the final set $J_{ik}$, i.e, $\{j \in [n] : T^* / 2^{k+1} < p_{ij}  \leq T^* / 2^k\}$
	\begin{align*}
		&\quad \mathbf{E}_{\rho} \left[ \sum_{t' =1}^{t} \sum_{ik} \left| b^{(t')}(v_{ik}) - b^{(t'-1)}(v_{ik}) \right| \right] \ =\  \sum_{t' =1}^{t} \sum_{ik}  \mathbf{E}_{\rho} \left[ \left| b^{(t')}(v_{ik}) - b^{(t'-1)}(v_{ik}) \right| \right] \\
		&= 
		\sum_{t' =1}^{t} \sum_{ik} \mathbf{E}_{\rho} \left[ \left| \lceil 2f^{(t')}_{ik} + \rho \rceil - \lceil 2f^{(t'-1)}_{ik} + \rho \rceil \right| \right] 
		\ = \ \sum_{t' =1}^{t} \sum_{ik} 2\left| f^{(t')}_{ik} - f^{(t'-1)}_{ik} \right| 
		\ \leq \ 2\sum_{t' =1}^{t} \left| \bfx^{(t')} - \bfx^{(t'-1)} \right|.
	\end{align*}
	Above, the third equality follows from~\Cref{cl:tau-expct}, and the next inequality follows from the triangle inequality, and by noting that the various $J_{ik}$ sets over all $k$ for a fixed $i$ form a disjoint partition of all the jobs.
\end{proof}

\begin{lemma} \label{lem:lbcost}
	After arrival of the first $j$ jobs, the load on any machine $i$ due to our assignment $\sigma_j$ in~\Cref{algolb:assgt} is at most $O(1) T^*$.
\end{lemma}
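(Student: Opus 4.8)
The plan is to bound the load on any machine $i$ by separately accounting for the contribution of jobs assigned to each sub-machine $v_{ik}$, $k \in [0, K]$, plus a tail term for very small jobs. Fix a machine $i$ and a time $j$. A job assigned by $\sigma_j$ to $i$ must have been matched (in the $b$-matching maintained by Algorithm~\ref{alg:olb}) to some $v_{ik}$ with $T^*/2^{k+1} < p_{ij'} \le T^*/2^k$, so every such job contributes at most $T^*/2^k$ to the load, and the number of such jobs is at most $b^{(j)}(v_{ik}) = \lceil 2 f^{(j)}_{ik} + \rho\rceil \le 2 f^{(j)}_{ik} + 2$. Hence the load on $i$ is at most $\sum_{k=0}^{K} \frac{T^*}{2^k}\left(2 f^{(j)}_{ik} + 2\right)$.

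Next I would split this into two pieces. The ``$+2$'' term contributes $\sum_{k=0}^{K} \frac{2 T^*}{2^k} \le 4 T^* = O(1) T^*$, a convergent geometric series. For the main term, $\sum_{k=0}^K \frac{2T^*}{2^k} f^{(j)}_{ik} = 2\sum_{k=0}^K \frac{T^*}{2^k}\sum_{j' \in J^{(j)}_{ik}} x^{(j)}_{ij'}$. Since every $j' \in J^{(j)}_{ik}$ has $p_{ij'} > T^*/2^{k+1}$, we have $T^*/2^k < 2 p_{ij'}$, so this is at most $4 \sum_{k=0}^K \sum_{j' \in J^{(j)}_{ik}} p_{ij'} x^{(j)}_{ij'} \le 4 \sum_{j'} p_{ij'} x^{(j)}_{ij'} \le 4(1+\epsilon) T^*$, using that the $J^{(j)}_{ik}$ partition a subset of the jobs and that $\bfx^{(j)}$ is a fractional solution of makespan at most $(1+\epsilon)T^*$ by Lemma~\ref{lemma:lb-maintain-fractional}.

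One subtlety to address: the index $k$ only ranges over $[0, K]$ with $K = \lceil 2\log n\rceil$, so I must confirm that no job with $it \in E$ gets ``dropped'' for having $p_{it}$ too small — i.e. that $p_{it} > T^*/2^{K+1}$ for every edge $it \in E$ in the instance, or else handle such tiny jobs separately. Since the fractional solution has makespan $(1+\epsilon)T^*$ and at most $n$ jobs land on any machine, each job placed fractionally on $i$ with positive $x$ could in principle be as small as we like; but a job $t$ with $p_{it} \le T^*/2^{K+1} \le T^*/n^2$ contributes total load at most $n \cdot T^*/n^2 = T^*/n$ even if all $n$ such jobs go to $i$, so these are absorbed into the $O(1)T^*$ slack. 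I expect the main (minor) obstacle to be exactly this bookkeeping at the boundary $k = K$ and making the edge set / neighborhood definition $N_H(t)$ in Algorithm~\ref{alg:olb} consistent with it; the core geometric-series estimate is routine. Adding the two pieces gives load at most $4T^* + 4(1+\epsilon)T^* + T^*/n = O(1)T^*$, as claimed.
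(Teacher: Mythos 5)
Your proof takes essentially the same route as the paper's: bound the load on machine $i$ by $\sum_{k=0}^K \frac{T^*}{2^k}\cdot b^{(j)}(v_{ik}) \le \sum_k \frac{T^*}{2^k}(2f^{(j)}_{ik}+2)$, handle the additive ``$+2$'' by the geometric series, and fold the main term back into the fractional load via $T^*/2^k < 2p_{ij'}$ for $j' \in J_{ik}$, then invoke the makespan guarantee from Lemma~\ref{lemma:lb-maintain-fractional}. The constants and the splitting are the same; this is the paper's proof.

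Your additional concern about jobs $t$ with $p_{it} \le T^*/2^{K+1}$ (where $K=\lceil 2\log n\rceil$) is a legitimate one that the paper passes over silently, and it is slightly more serious than a ``tail-load'' accounting issue: in the algorithm as written, such a job has \emph{no} neighbor $v_{ik}$ for that machine $i$ at all (there is no $k \in [0,K]$ with $T^*/2^{k+1} < p_{it} \le T^*/2^k$), so the $b$-matching cannot route $t$ to $i$ even when $x^{(t)}_{it} > 0$; and if $N_H(t)=\emptyset$ across all machines, even the existence claim in Lemma~\ref{lemma:slack} fails. Your proposed fix --- observe that each such job carries load $\le T^*/n^2$ and there are at most $n$ of them --- bounds the load correctly, but by itself does not say \emph{where} those jobs go. The cleanest patch is to let the lowest bucket $v_{iK}$ catch everything with $p_{it} \le T^*/2^K$ rather than only the dyadic slice; the argument $T^*/2^K \le T^*/n^2$ then shows these jobs add a total of $O(T^*/n)$ to the load regardless of the fractional mass that lands in that catch-all bucket, and the rest of your proof is unchanged. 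So: same approach and correct modulo this one technicality, which you noticed but resolved only partially; the paper does not notice it at all.
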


\begin{proof}
	Consider a fixed machine $i$. Below, we use $\pi^{(t)}_{jik}$ to indicate if $j$ is assigned to $v_{ik}$ in the solution for the $b$-matching instance at time $t$.  The total load $i$ receives at time $t$ is then
	\begin{align*}
		\sum_{k = 0}^{K} \sum_{j \in J_{ik}} \pi^{(t)}_{jik} p_{ij} & \leq  \sum_{k = 0}^{K} \frac{T^*}{2^k} \sum_{j \in J_{ik}} \pi^{(t)}_{jik} 
		\ \leq\ \sum_{k = 0}^{K} \frac{T^*}{2^k} \cdot b^{(t)}(v_{ik})
		\ \leq\  \sum_{k = 0}^{K} \frac{T^*}{2^k} \left(2\sum_{j \in J_{ik}} x^{(t)}_{ij} + 2 \right) \\
		& \leq O(T^*) + 2 \sum_{k = 0}^{K} \frac{T^*}{2^k} \sum_{j \in J_{ik}} x^{(t)}_{ij} 
		\ \leq\ O(T^*) + 4 \sum_{k = 0}^{K} \sum_{j \in J_{ik}} x^{(t)}_{ij} p_{ij} 
		\ \leq\ O(T^*) + 4 \sum_{j \in [t]} x^{(t)}_{ij} p_{ij} 
		\ \leq\ O(T^*).
	\end{align*}
\end{proof}

	The competitive ratio follows from~\Cref{lem:lbcost}. As for the recourse, note that the total recourse made by the load balancing algorithm is at most the number of reassignments made by the $b$-matching algorithm. The recourse of the latter by time $t$ is at most $O(\log n)$ times $t + \sum_{t'=1}^{t} \sum_{ik} | b^{(t')}(v_{ik}) - b^{(t'-1)}(v_{ik}) |$ by Theorem~\ref{thm:online-matching-changing}. From~\Cref{lem:capchange}, the expected value of the summation is at most $\sum_{t' =1}^{t} \left| \bfx^{(t')} - \bfx^{(t'-1)} \right|$. By Lemma~\ref{lemma:lb-maintain-fractional}, this is at most $O(t)$. This shows the the amortized recourse of the algorithm is $O(\log n)$. 

	\section{An $O\left(\frac{\log \log n}{\log \log \log n}\right)$-Competitive Algorithm with $O(1)$-Amortized Recourse} \label{sec:const-recourse}
	    In this section, we describe the $O\left(\frac{\log \log n}{\log \log \log n}\right)$-competitive algorithm for unrelated machine load balancing (\olb) with $O(1)$-amortized recourse. Notice that we can lose an $O(1)$-factor in the competitive ratio and thus we can simply fix $\epsilon = 1$. Using Lemma~\ref{lemma:lb-maintain-fractional}, we construct a sequence $\bfx^{(1)}, \bfx^{(2)}, \cdots, \bfx^{(n)}$ of fractional solutions online, each $\bfx^{(t)}$ being a fractional schedule for jobs $[t]$. We  assume the makespans of the fractional solutions are $T^*$ by scaling up $T^*$. For every $t \in [n]$, we have $\sum_{t'=1}^t |\bfx^{(t')} - \bfx^{(t'-1)}| = O(1) \cdot t$.

		We prove the following theorem in this section, which in turn proves Theorem~\ref{thm:unrelated-constant-approx}.
		\begin{theorem}
			\label{thm:O(1)-recourse-main}
			There is a randomized online rounding algorithm that with high probability produces solutions of makespan $O\left(\frac{\log\log n}{\log \log \log n}\right)\cdot T^*$ and incurs recourse $O(1)\cdot t$ by time $t$, for every $t \in [n]$.
		\end{theorem}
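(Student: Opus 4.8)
The plan is to follow the template outlined in the introduction, adapting the $L(m)$-competitive rounding of \cite{LX21} to the recourse setting. First, classify a job $j$ as \emph{big} if at least half of its fractional mass $\sum_i x^{(t)}_{ij}$ (over the \emph{current} solution $\bfx^{(t)}$) sits on edges $ij$ with $p_{ij} > T^*/\log n$, and \emph{small} otherwise. For small jobs, apply independent randomized rounding: assign $j$ to machine $i$ with probability proportional to the small-edge part of $x^{(t)}_{ij}$. Since each such contribution to a machine's load is at most $T^*/\log n$, a Chernoff/Bernstein bound shows that with high probability every machine's small-job load is $O(T^*)$; the maintenance of these assignments under fractional updates only needs to re-randomize the jobs whose $x$-values actually changed, so the amortized recourse from small jobs is $O(1)$ by the fractional-recourse bound $\sum_{t'\le t}|\bfx^{(t')}-\bfx^{(t'-1)}| = O(t)$ of Lemma~\ref{lemma:lb-maintain-fractional}.

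The heart is the big jobs. Following \cite{LX21}, first \emph{sparsify}: replace $x^{(t)}_{ij}$ (restricted to big edges) by $\tilde x^{(t)}_{ij}$ supported on few machines, so that each nonzero $\tilde x_{ij}\ge 1/\log n$, while the machine loads grow by only $O(T^*)$ and $\sum_i\tilde x_{ij}=1$ is preserved. Crucially, this sparsification should itself be low-recourse: I would use a deterministic local/greedy reassignment so that the support of $\tilde\bfx$ changes by $O(1)$ amortized per fractional update. Then attempt to place each big job $j$ on a machine sampled according to $\tilde x^{(t)}_{ij}$; declare the attempt a \emph{failure} if the target machine's big-job load would exceed $c\cdot\frac{\log\log n}{\log\log\log n}\,T^*$ for the appropriate constant $c$. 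The two key probabilistic lemmas to establish are: (i) each big job fails with probability $1/\poly\log n$ — this follows from a Chernoff-type argument, using that each big job contributes at most $T^*$ to a machine but has sparse support, together with the $L(n)$-style balls-in-bins concentration; and (ii) with high probability every connected component of the ``conflict graph'' on failed jobs (two failed jobs adjacent if they share a candidate machine) has size $\poly\log n$ — a standard Galton–Watson / subcritical-branching argument exploiting that the sparse support means each machine is a candidate for few jobs and each job fails rarely. Finally, round the failed jobs using the deterministic $2$-approximation rounding from Section~\ref{sec:2-approx-rounding} (equivalently Appendix~\ref{sec:2-approx-rounding}), restricted machine-by-machine, so failed jobs add only $O(T^*)$ more load; combining the three load contributions gives makespan $O\!\left(\frac{\log\log n}{\log\log\log n}\right)T^*$ with high probability.

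For the recourse bound, observe that a reassignment is triggered either (a) when a job's fractional value changes, or (b) when, upon a fractional update, a previously-successful big job now fails or a previously-failed one must be re-handled, forcing us to re-round its whole conflict component. The amortized cost of type (a) is $O(1)$ by Lemma~\ref{lemma:lb-maintain-fractional}. For type (b), each fractional update flips the failure status of $O(1)$ jobs in expectation (again using that supports are sparse and the per-job failure probability is $1/\poly\log n$, so re-sampling rarely changes the outcome), and each flip costs us the size of one conflict component, which is $\poly\log n\cdot(1/\poly\log n)=O(1)$ in expectation when we charge via lemma~(i) and~(ii); making the polylog exponents match is exactly what forces the $\frac{\log\log n}{\log\log\log n}$ rather than $\log\log n$ bound. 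A concentration/amortization argument then upgrades the expected $O(t)$ recourse to a high-probability $O(t)$ bound, or one keeps it in expectation and notes the statement is ``with high probability'' only for the makespan.

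The main obstacle I expect is making the big-job step \emph{dynamic with $O(1)$ amortized recourse} rather than merely offline: in \cite{LX21} the sparsification, the sampling, and the deterministic fallback are all one-shot, whereas here each of the three must be re-derived as a fractional update arrives without cascading changes. Concretely, the delicate point is coupling the random choices across consecutive solutions $\bfx^{(t-1)},\bfx^{(t)}$ so that the set of failed jobs changes by only $O(1)$ per unit of fractional change and the conflict components one must re-round are genuinely small — this is where the precise balance between the failure probability $1/\poly\log n$ and the component-size bound $\poly\log n$ has to be tuned, and it is the reason the final competitive ratio degrades by the $\log\log\log n$ factor in the denominator.
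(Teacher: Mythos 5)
Your high-level plan matches the paper's (big/small split at scale $T^*/\log n$, independent rounding for small jobs, sparsify-then-sample-then-fix for big jobs, charging type-4 recourse to small conflict components), and you correctly flag the coupling across $\bfx^{(t-1)},\bfx^{(t)}$ as the central obstacle. But the specific choices you propose would not close that gap, and in a few places they actively contradict the coupling you say you need.

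First, you classify $j$ as big using a \emph{fixed} threshold $1/2$. With a deterministic cutoff, an adversarial fractional sequence can oscillate $\sum_{i\in M^{\mathrm{big}}_j}x^{(t)}_{ij}$ across $1/2$ as often as it likes with total fractional movement $O(1)$, forcing $\omega(1)$ big/small flips and hence unbounded recourse for that job. The paper avoids this by drawing a single random threshold $\beta\in[1/2,3/4]$ and showing that a flip requires $\beta$ to land between the old and new big-mass values, so the flip probability is $O(|\bfx_j-\bfx^\circ_j|)$. Some smoothing of the classification boundary is unavoidable; a fixed threshold breaks the recourse bound.

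Second, and more fundamentally, your mechanism for maintaining the randomized assignments is to ``re-randomize the jobs whose $x$-values actually changed.'' Re-randomizing is exactly the wrong operation: if you draw a fresh sample whenever $x_j$ changes, then even an infinitesimal fractional change reassigns $j$ with probability close to $1-\max_i x_{ij}$, which is typically $\Theta(1)$, so the recourse is $\Theta(1)$ per \emph{update} rather than per unit of fractional change. What the paper does instead is fix \emph{global random seeds up front} — an infinite stream $(h^j_o,\theta^j_o)$ of acceptance-rejection samples per job, plus a per-edge threshold $\delta_{ij}$ for the sparsification — and re-run the \emph{deterministic} rounding map on the new $\bfx^{(t)}$ with the same seeds. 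Then the probability that $j$'s assignment changes is $O(|\bfx_j-\bfx^\circ_j|)$, because only points $(h^j_o,\theta^j_o)$ lying in the symmetric difference of the two histograms can flip the outcome. Your ``deterministic local/greedy'' sparsification has the same problem in reverse: without a per-edge random threshold $\delta_{ij}$, you cannot get $\Pr[x'_{ij}\neq x'^{\circ}_{ij}] = O(\log n)\cdot|x_{ij}-x^\circ_{ij}|$, which is what makes the type-b event rate small enough to pay for re-rounding whole failed components later. So while you correctly name the obstacle (``coupling the random choices across consecutive solutions''), the proof as written does not construct such a coupling, and the two concrete primitives you do specify (fresh re-randomization, deterministic sparsification) would defeat it. The rest of your argument — the $1/\poly\log n$ failure probability, the $\poly\log n$ component-size bound via the branching argument, and the balance of exponents — is consistent with the paper once the global-seed coupling is in place.
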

	
		For convenience, we add a dummy machine $i_\bot$ to $M$, and assign jobs that have not arrived yet to $i_\bot$ both in input fractional schedules and output integral schedules. That, is we assume $x^{(t)}_{i_\bot j} = 1$ for every $j > t$.  After this transformation, every $\bfx^{(t)}$ is a schedule for the whole job set $[n]$.  We assume all jobs have processing time $0$ on the dummy machine; this will not create an issue as our algorithm will never assign a job $j$ to a machine $i$ at time $t$ if $x^{(t)}_{ij} = 0$.

	\subsection{Main Ideas}
		Our algorithm uses the framework of the $O\left(\frac{\log \log m}{\log \log \log m}\right)$-competitive online rounding algorithm of \cite{LX21}. (Recall that $m$ is the number of machines.) In their setting, the fractional assignment of a job $j$ never changes after its arrival. As a result, their algorithm does not need to incur a recourse.
				
		We give a high-level overview of the rounding algorithm in \cite{LX21}. They describe the algorithm in the offline setting, and one can easily make it online.  We say an edge $ij \in E$ is big if $p_{ij} > \frac{T^*}{\log m}$, and small otherwise. A job $j$ is big if at least $1/2$ fraction of the job is  assigned via big edges in the fractional solution, and small otherwise. For big (small) jobs $j$, we only consider its big (small) edges. Small jobs can be assigned by independent rounding; with high probability they incur only an $O(1)\cdot T^*$ load on every machine. Thus the bulk of the algorithm is for the assignment of big jobs, which is done in three steps:
		\begin{itemize}
			\item Step (b1): The algorithm does an initial rounding to make the support of $\bfx$ sparse: If some $x_{ij}$ for a big job $j$ has $x_{ij} \in (0, \frac1{\log m})$, then it rounds $x_{ij}$ to $0$ or $\frac1{\log m}$ randomly, preserving the expectation of $x_{ij}$.  After this step, every such $x_{ij}$ is either $0$ or at least $\frac1{\log m}$.  This guarantees that the support graph for $\bfx$ restricted to big jobs have degree $O(\log^2 m)$.
			\item Step (b2): The algorithm attempts to assign every big job $j$ to a machine $i$ randomly, using the new $x_{ij}$ values as probabilities. The assignment fails if the target machine is overloaded. 
			\item Step (b3): The crucial theorem proved in \cite{LX21} is that the following event happens with high probability: In the sub-graph of the support of $\bfx$ induced by the failed jobs and all machines, every connected component has size at most $\text{poly}\log m$. This allows the algorithm to apply a deterministic $O\left(\frac{\log \log m}{\log \log \log m}\right)$-competitive online rounding procedure for each component.
		\end{itemize} \medskip
	
		We use a similar framework as that of \cite{LX21}, with the following main differences. First, we generate a set of global random seeds that are used in our randomized rounding procedure for each time step. They will correlate schedules at different time steps. Second, in step (b3), we can run the simple offline 2-approximation algorithm for each connected component, as recourse is allowed in our setting. Finally a small difference is that our competitive ratio is $O\left(\frac{\log \log n}{\log \log \log n}\right)$ as we need to apply union bounds over $n$ time steps.
	
	\subsection{Description of Algorithm and Proof of Competitive Ratio}
	We now formally describe our  algorithm. We say an edge $ij \in E$ is big if $p_{ij} \geq \frac{T^*}{{\log n}}$, and small otherwise. For every $j \in [n]$, we let $M^\rmbig_j$ ($M^\rmsmall_j$ resp.) be the set of machines $i \in N(j)$ with $ij$ being big (small resp.). 
	
	\paragraph{Generating Global Random Seeds} We choose a threshold $\beta \in [\frac12, \frac34]$ uniformly at random, that will be used to define big and small jobs. 
	\begin{definition}
		Given $\beta$ and a fractional solution $\bfx \in [0, 1]^E$, we say a job $j$ scheduled in $\bfx$ 
		 is big if $\sum_{i \in M^\rmbig_j}x_{ij} > \beta$ and small otherwise. So, if $j$ is small, then $\sum_{i \in M^\rmsmall_j}x_{ij} \geq 1 - \beta$. Let $J^\rmbig$ and $J^\rmsmall$ be the sets of big and small jobs respectively. 
	\end{definition}
	
	 When a job $j$ arrives, for every big edge $ij \in E$, we independently  choose a threshold $\delta_{ij} \in [0, 1/{\log n}]$ uniformly at random; this will be used in the initial rounding step (step (b1)) for big jobs.  We also generate an infinite sequence of pairs $(h^j_1, \theta^j_1), (h^j_2, \theta^j_2), (h^j_3, \theta^j_3), \cdots$, where for every $o \in \Z_{>0}$, $h^j_o$ is a random machine in $N(j)$, and $\theta^j_o$ is a random real number in $[0, 1]$; all the parameters are independently generated.  The sequence will serve as the random seeds for the acceptance-rejection sampling method, to assign small jobs, and to assign big jobs in step (b2). \medskip
	
	Once we generated the global random seeds, it is convenient to describe the algorithm \emph{in the offline setting}: For every time step $t$, we round the fractional solution $\bfx^{(t)}$ to obtain an integral assignment, using the global random seeds.  A recourse occurs when the assignment of a job $j$ at time $t$ is different from that at time $t-1$.  %
	%
	So till the end of this section we focus on a time step $t$, and the fractional solution $\bfx := \bfx^{(t)} \in [0, 1]^E$. Big and small jobs are defined w.r.t the global seed $\beta$ and this fractional solution $\bfx$.
	
	\paragraph{Step (s): Assigning Small Jobs} For every $j \in J^\rmsmall$, we find the smallest $o \in \Z_{>0}$ such that $h^j_o \in M^\rmsmall_j$ and $\theta^j_o \leq x_{h^j_oj}$, and we assign $j$ to $h^j_o$. This is equivalent to the following procedure. We draw a histogram for job $j$, where there is a bar of height $x_{ij}$ for every $i \in M^\rmsmall_j$, and a bar of height $0$ for every $i \in M^\rmbig_j$. Then each $(h^j_o, \theta^j_o)$ denotes a random point in $N(j) \times [0, 1]$; we accept a point if it falls into a bar in the histogram. Suppose  $(h^j_o, \theta^j_o)$ is the first point we accept; we then assign $j$ to $h^j_o$. Therefore, conditioned on $j \in J^\rmsmall$, the probability that $j$ is assigned to a machine $i \in J^\rmsmall_j$ is proportional to $x_{ij}$.  Moreover, by the definition of the small jobs, the probability is between $x_{ij}$ and $\frac{1}{1-\beta}x_{ij} \leq 4x_{ij}$.
	\begin{lemma}
		With probability at least $1 - \frac{1}{n^3}$, every machine $i$ gets a load of at most $O(1)\cdot T^*$ from step (s).
	\end{lemma}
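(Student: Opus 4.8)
The claim is a Chernoff-type concentration statement for the load on a fixed machine $i$ from the small jobs. First I would fix a machine $i$ and consider the random variable $L_i := \sum_{j \in J^\rmsmall : j \to i} p_{ij}$, where $j \to i$ denotes that step (s) assigns $j$ to $i$. Each job $j$ contributes independently (once we condition on the global seed $\beta$, the per-job seeds $(h^j_o, \theta^j_o)$ are independent across $j$), so $L_i$ is a sum of independent bounded random variables. The key structural facts are already supplied in the text: conditioned on $j \in J^\rmsmall$, job $j$ is assigned to $i \in M^\rmsmall_j$ with probability at most $\frac{1}{1-\beta} x_{ij} \le 4 x_{ij}$, and crucially $p_{ij} < \frac{T^*}{\log n}$ for every small edge $ij$. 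So each summand is bounded by $\frac{T^*}{\log n}$, and the expectation is controlled:
\[
\E[L_i] \;\le\; \sum_{j} 4 x_{ij} p_{ij} \;\le\; 4 \sum_j x_{ij} p_{ij} \;\le\; 4 T^*,
\]
using the makespan bound $\sum_j x_{ij} p_{ij} \le T^*$ of the fractional solution $\bfx$.

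\textbf{Main step.} With $\E[L_i] \le 4T^*$ and each term bounded by $b := \frac{T^*}{\log n}$, I would apply a Chernoff/Hoeffding bound for sums of independent $[0,b]$-valued variables: $\Pr[L_i \ge c T^*] \le \exp\!\big(-\Omega((c-4) T^* / b)\big) = \exp(-\Omega((c-4)\log n))$ for a suitable absolute constant $c = O(1)$. Choosing $c$ large enough (e.g. $c$ such that the exponent is at least $4 \log n$) gives $\Pr[L_i \ge c T^*] \le n^{-4}$. A union bound over the at most $n$ (really $m \le $ some polynomial in $n$, but $m \le n$ WLOG, or in any case polynomially bounded) machines $i$ then yields that with probability at least $1 - n^{-3}$, every machine receives load at most $O(1) \cdot T^*$ from step (s), which is exactly the statement.

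\textbf{Subtleties and the main obstacle.} The one point requiring care is the conditioning: whether a job $j$ is \emph{big} or \emph{small} depends on the global seed $\beta$, so the set $J^\rmsmall$ is itself random, and the per-job indicator ``$j \to i$ in step (s)'' is not literally independent of $\beta$. The clean way to handle this is to \emph{condition on $\beta$ first}: once $\beta$ is fixed, $J^\rmsmall$ is determined, and then the randomness in $\{(h^j_o,\theta^j_o)\}_{j,o}$ is independent across jobs, so $L_i \mid \beta$ is a genuine sum of independent bounded variables and the Chernoff bound applies with the stated expectation bound (which holds for every fixed $\beta \in [\tfrac12,\tfrac34]$ since $\frac{1}{1-\beta} \le 4$ throughout that range). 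The tail bound holds uniformly over $\beta$, so it holds unconditionally. I expect this conditioning argument to be the only real obstacle; the concentration inequality itself is routine. One should also note that the contribution to the histogram from big edges is zero (bars of height $0$), so only small edges $ij$ — those with $p_{ij} < T^*/\log n$ — carry any probability mass for $j$, which is what makes the per-term bound $b = T^*/\log n$ legitimate.
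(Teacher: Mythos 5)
Your proof is correct and is simply a fleshed-out version of the paper's one-line argument: the paper's proof says only ``This follows from the Chernoff bound and union bound, as the assignments of different small jobs $j$ are independent.'' Your writeup supplies exactly the ingredients this invocation relies on — conditioning on $\beta$ to make the per-job assignments independent, the $\tfrac{1}{1-\beta}\le 4$ bound on the assignment probability, the bound $p_{ij}<T^*/\log n$ on small-edge sizes so each summand is in $[0,T^*/\log n]$, the expectation bound $\E[L_i]\le 4T^*$ from the makespan constraint, and a Chernoff tail that beats $n^{-4}$ followed by a union bound over machines — so you are taking essentially the same approach.
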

	\begin{proof}
		This follows from the Chernoff bound and union bound, as the assignments of different small jobs $j$ are independent. 
	\end{proof}
	
	It remains to assign big jobs; we do this in three steps.
	\paragraph{Step (b1): Initial Rounding of $\bfx$ Values}
		For a job $j \in J^\rmbig$, and every $i \in M^\rmbig_j$, we define $x'_{ij}$ as follows:
		\begin{align*}
			x'_{ij} := \begin{cases}
				x_{ij} & \text{if } x_{ij} \geq \frac1{\log n}\\
				0 & \text{if } x_{ij} < \frac1{\log n} \text{ and } \delta_{ij} > x_{ij}\\
				\frac1{\log n} & \text{if } x_{ij} < \frac1{\log n} \text{ and } \delta_{ij} \leq x_{ij}
			\end{cases}
		\end{align*}
		For other $ij$ pairs we let $x'_{ij} = 0$.  By the way we generate $\delta_{ij}$ values, we have that $\E[x'_{ij}] = x_{ij}$ conditioned on $j \in J^\rmbig$ and $i \in M^\rmbig_j$. 
		\begin{lemma}
			\label{lemma:x'-good}
			With probability at least $1 - 1/n^2$, the following events happen:
			\begin{itemize}
				\item For every $j \in J^\rmbig$, we have $\sum_{i \in M^\rmbig_j}x'_{ij} \geq \frac1{10}$.
				\item For every $i \in M$, we have $\sum_{j \in N(i)}p_{ij}  x'_{ij} \leq 10 T^*$.
			\end{itemize}
		\end{lemma}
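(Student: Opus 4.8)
\textbf{Proof proposal for Lemma~\ref{lemma:x'-good}.}

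The plan is to handle the two events separately, establish each with failure probability at most $\tfrac{1}{2n^2}$ (or $\tfrac1{n^2}$ split appropriately), and then union-bound. Both are concentration statements about sums of independent bounded random variables, so Chernoff/Hoeffding-type bounds are the natural tool; the subtlety is in choosing the right parametrization so that the "bad" range $x_{ij} < \tfrac1{\log n}$ — where rounding actually happens — contributes a small enough deviation.

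For the first event, fix a big job $j \in J^\rmbig$. By definition $\sum_{i \in M^\rmbig_j} x_{ij} > \beta \geq \tfrac12$. Split this sum into the "heavy" coordinates $H_j = \{ i \in M^\rmbig_j : x_{ij} \geq \tfrac1{\log n} \}$, where $x'_{ij} = x_{ij}$ deterministically, and the "light" coordinates, where $x'_{ij}$ is an independent $\{0, \tfrac1{\log n}\}$-valued variable with mean $x_{ij}$. If $\sum_{i \in H_j} x_{ij} \geq \tfrac1{10}$ we are already done deterministically for this $j$. Otherwise the light coordinates carry total fractional mass at least $\tfrac12 - \tfrac1{10} = \tfrac25$, and $\sum_{i \text{ light}} x'_{ij}$ is a sum of independent variables, each bounded by $\tfrac1{\log n}$, with mean at least $\tfrac25$; a multiplicative Chernoff bound on the lower tail gives $\Pr[\sum_{i \text{ light}} x'_{ij} < \tfrac1{10}]$ at most roughly $\exp(-\Omega(\log n))$ — specifically, the relevant exponent is $\Theta(\mu / (\tfrac1{\log n})) = \Theta(\log n)$, which we can make smaller than $n^{-3}$ by absorbing constants. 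Taking a union bound over the at most $n$ big jobs gives the first event with probability at least $1 - \tfrac1{n^2}$ (after adjusting the constant in the Chernoff exponent).

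For the second event, fix a machine $i$. We have $\sum_{j \in N(i)} p_{ij} x_{ij} \le T^*$ by feasibility of $\bfx$, so we want $\sum_{j} p_{ij} x'_{ij}$ concentrated around its mean, which is at most $T^*$ restricted to big edges. Again only the light big edges — those with $x_{ij} < \tfrac1{\log n}$ and $p_{ij} \ge \tfrac{T^*}{\log n}$ — are random; but here the per-term contribution to the sum is $p_{ij} \cdot \tfrac1{\log n} \le p_{ij}$, and for a big edge $p_{ij}$ can be as large as $T^*$ itself, so a single term can contribute $\Theta(T^*)$. That is exactly why we only need the weak bound $10T^*$ rather than $(1+o(1))T^*$: the number of big edges at $i$ with $x_{ij} \ge \tfrac1{\log n}$ is at most $\log n$ (their $x$-values sum to $\le 1$), contributing at most $T^* \cdot \tfrac{\log n \cdot 1}{1} $... more carefully, those heavy terms contribute $\sum_{\text{heavy}} p_{ij} x_{ij} \le T^*$ deterministically. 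For the light terms, rescale: let $Y_j = \tfrac{p_{ij}}{T^*} x'_{ij} \in [0, \tfrac1{\log n}]$, so $\sum_j Y_j$ has mean $\le 1$ and each summand is bounded by $\tfrac1{\log n}$; a multiplicative Chernoff upper-tail bound gives $\Pr[\sum_j Y_j > 9]$ at most $\exp(-\Omega(\log n)) \le n^{-3}$, again by the small per-term bound. Union-bounding over the $m \le n$ machines (recall we may assume $m \le n$, or simply union over all machines touched by the $n$ jobs) yields the second event with probability $\ge 1 - \tfrac1{n^2}$.

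The main obstacle — really the only place requiring care — is getting the Chernoff exponents to beat $n^{-2}$ cleanly: the exponent in both tail bounds is of order $\mu \cdot \log n$ divided by the per-term bound $\tfrac1{\log n}$, i.e. $\Theta(\mu \log n)$ with $\mu = \Theta(1)$, so it is $\Theta(\log n)$ and one must simply verify the hidden constant is large enough (which it is, since the deviation is by a constant multiplicative factor, making the Chernoff exponent a constant times $\mu \log n$). I would state the clean form of the Chernoff bound I use (for independent $[0,a]$-valued variables with mean $\mu$, $\Pr[\text{sum} \le \mu/2] \le e^{-\mu/(8a)}$ and $\Pr[\text{sum} \ge 2\mu + t] \le e^{-\Omega((\mu+t)/a)}$), plug in $a = \tfrac1{\log n}$, and finish each bullet in two lines. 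Everything else is bookkeeping: separating deterministic heavy contributions from random light ones, and a union bound over $\le n$ jobs and $\le n$ machines.
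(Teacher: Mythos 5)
Your overall approach is exactly what the paper intends: the paper's own proof is the one-liner ``the rounding procedure for jobs in $J^{\mathrm{big}}$ are independent; apply Chernoff and union bound,'' and your heavy/light decomposition (deterministic contribution from coordinates with $x_{ij} \ge 1/\log n$, independent $\{0, 1/\log n\}$-valued variables for the rest, then concentrate each sum) is precisely the standard way to make that one-liner rigorous. The second bullet's numerics also check out: rescaling so the per-term cap is $1/\log n$ and the mean is at most $1$, the upper-tail Chernoff with $\delta = 8$ gives a failure probability of order $\bigl(e^{8}/9^{9}\bigr)^{\log n} \le n^{-10}$, comfortably enough to union over all machines.

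However, your claim that the first bullet's ``hidden constant is large enough'' is not correct as the parameters stand, and this is where the argument actually breaks. After subtracting at most $1/10$ of heavy mass, the light coordinates have total mean $\mu \ge 2/5$, each bounded by $1/\log n$, and you need $\Pr[\text{light sum} < 1/10] \le n^{-3}$. The multiplicative lower-tail Chernoff gives exponent $\frac{\delta^2}{2}\,\mu\,\log n$ with $\delta = 3/4$, i.e., roughly $\frac{9}{80}\log n \approx 0.11 \log n$. Even pushing the target toward $0$ (so $\delta \to 1$), the exponent is capped by $\frac{1}{2}\mu\log n \le 0.2\log n$, which is nowhere near $3\ln n$; ``absorbing constants'' cannot fix this because the constants are determined, not free. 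To actually obtain $n^{-3}$ one must shrink the rounding cap from $1/\log n$ to, say, $c/\log n$ for a sufficiently small constant $c$ (or use $1/\log^{1+\eta} n$), which multiplies the exponent by $1/c$ at the cost of a correspondingly larger constant in the degree bound $O(\log^2 n)$. This is a genuine, if cosmetic, gap in your write-up --- and, to be fair, one the paper's terse proof also leaves latent --- but the assurance ``which it is'' should be replaced by an explicit parameter adjustment.
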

		\begin{proof}
			The rounding procedure for jobs in $J^\rmbig$ are independent. Again we can apply Chernoff bound and Union bound to prove the lemma. 
		\end{proof}
	
		From now on, we assume the above events happen. Then, every machine $i$ is incident to at most $O(\log^2n)$ jobs in the support of $\bfx'$, as each positive $x'_{ij}$ has $p_{ij} >\frac{T^*}{{\log n}}$ and $x'_{ij} \geq \frac1{\log n}$. 
		
		\paragraph{Step (b2): Attempts to Assign Big Jobs} For every $j \in J^\rmbig$, we find the smallest $o \in \Z_{>0}$ such that $h^j_o \in M^\rmbig_j$ and $\theta^j_o \leq x_{h^j_oj}$, and temporarily assign $j$ to $i := h^j_o$.  We \emph{mark} a machine $i$ if it gets a load of more than $\frac{c\log\log n}{\log\log\log n} \cdot T^*$ in this step (that is, we do not count the load assigned from step (s)), for a sufficiently large constant $c$.  All the jobs in $J^\rmbig$ assigned to marked machines are said to be \emph{failed}, and we undo these assignments.\footnote{There is a slight difference between our algorithm and that of \cite{LX21} in this step. In \cite{LX21}, for a marked machine $i$, the jobs assigned to $i$ before it is marked are not failed. This is needed for their no-recourse setting. }  Let $J^\fail$ be the set of failed jobs.  If some $j \in J^\rmbig$ is assigned to an unmarked machine, i.e., $j \in J^\rmbig\setminus J^\fail$, then the assignment of $j$ is successful and final for time step $t$.

		\paragraph{Step (b3) : Assign Failed Jobs Component by Component}  We let $G' = (M \uplus J^\rmbig, E')$ be the support bipartite graph for $\bfx'$: for some $j \in J^\rmbig, i \in M$, we have $ij \in E'$ if and only if $x'_{ij} > 0$.  Then $G'[M \cup J^\fail]$ is the sub-graph of $G'$ induced by $M \cup J^\fail$.
		
		For any connected component $C$ in $G'[M \cup J^\fail]$. We run the deterministic $2$-approximation algorithm~\cite{lenstra1990approximation} to obtain an assignment of failed jobs in $C$ to $M$. We can guarantee that any machine is assigned a load of at most $200 T^*$ in this step.  This holds because $\bfx'$ is an approximate LP solution, where by Lemma~\ref{lemma:x'-good} the two constraints are each violated by a factor of $10$.
		
		This finishes the description of the algorithm. With high probability, the algorithm maintains a schedule of makespan at most $O\left(\frac{\log \log n}{\log \log \log n}\right)\cdot T^*$ at any time: Small jobs, successful and failed big jobs respectively incur a load of $O(1)\cdot T^*, O\left(\frac{\log \log n}{\log \log \log n}\right)\cdot T^*$ and $O(1)\cdot T^*$ on each machine.
		

		
	\subsection{Bounding the Recourse}
	 In this section, we bound the recourse of the algorithm case by case.  Below we fix a time step $t \geq 1$, and use $\bfx^\circ$ and $\bfx$ to denote the fractional solution at time $t-1$ and $t$ respectively.  For every job $j$, we let $\bfx_j$ denote the vector $(x_{ij})_{i \in N(j)}$, i.e., the fractional assignment of $j$ at time $t$. Define $\bfx^\circ_j$ similarly for time $t-1$.   Define $\bfx'^\circ, \bfx', \bfx'_j, \bfx'^\circ_j$ similarly for the vector $\bfx' \in [0, 1]^E$ obtained in step (b2). In all the proofs, we assume $c$ is a sufficiently large constant. 
	
	\paragraph{Type-1 recourse: recourse from switches between small and big jobs} We say $j$ incurs a type-1 recourse at time $t$ if $j$ is small at time $t-1$, and big at time $t$, or the other way around.
	\begin{lemma}
		Let $j \in [n]$.  In expectation over the randomness of $\beta$, the probability that $j$ incurs a type-1 recourse at time $t$ is at most $2\cdot \big| {\bfx}_j -  {\bfx}^\circ_j \big|$.
	\end{lemma}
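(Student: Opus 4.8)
The claim is about a single job $j$ switching its big/small status between consecutive time steps. The status of $j$ at time $t$ is determined entirely by whether $\sum_{i \in M^\rmbig_j} x_{ij}$ exceeds the random threshold $\beta$, and similarly at time $t-1$ by whether $\sum_{i \in M^\rmbig_j} x^\circ_{ij}$ exceeds $\beta$. The plan is to write $q := \sum_{i \in M^\rmbig_j} x_{ij}$ and $q^\circ := \sum_{i \in M^\rmbig_j} x^\circ_{ij}$, observe that a type-1 recourse occurs if and only if $\beta$ lies strictly between $q^\circ$ and $q$ (inclusive of one endpoint depending on the exact inequality convention in the definition of ``big''), and then bound the probability of that event.

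First I would note that $\beta$ is uniform on $[\tfrac12,\tfrac34]$, which has length $\tfrac14$; hence $\Pr_\beta[\beta \text{ between } q^\circ \text{ and } q] \le \min\{1,\, 4\,|q - q^\circ|\}$, and certainly $\le 4|q - q^\circ|$. Wait — the lemma asks for a bound of $2|\bfx_j - \bfx^\circ_j|$, so I need to be slightly more careful and relate $|q - q^\circ|$ to $|\bfx_j - \bfx^\circ_j|$. The second key step is the observation
\[
|q - q^\circ| = \Bigl| \sum_{i \in M^\rmbig_j} (x_{ij} - x^\circ_{ij}) \Bigr| \le \sum_{i \in M^\rmbig_j} |x_{ij} - x^\circ_{ij}| \le \sum_{i \in N(j)} |x_{ij} - x^\circ_{ij}| = |\bfx_j - \bfx^\circ_j|,
\]
using the triangle inequality and the fact that $M^\rmbig_j \subseteq N(j)$. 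Combining with the density bound on $\beta$ gives $\Pr_\beta[\text{type-1 recourse}] \le 4|q - q^\circ| \le 4|\bfx_j - \bfx^\circ_j|$, which is off by a factor of $2$ from the claimed bound.

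The main obstacle, then, is tightening the constant from $4$ to $2$. I expect this comes from a subtlety I glossed over: the set $M^\rmbig_j$ itself is fixed (it depends only on the $p_{ij}$ values and $T^*$, not on $\bfx$), so actually $q - q^\circ = \sum_{i \in M^\rmbig_j}(x_{ij} - x^\circ_{ij})$ has a sign-consistent structure — but more usefully, since $\sum_{i \in N(j)} x_{ij} = \sum_{i \in N(j)} x^\circ_{ij} = 1$ (both are valid fractional assignments of $j$), the change on the big edges is exactly cancelled by the change on the small edges: $\sum_{i \in M^\rmbig_j}(x_{ij} - x^\circ_{ij}) = -\sum_{i \in M^\rmsmall_j}(x_{ij} - x^\circ_{ij})$. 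Hence $2|q - q^\circ| = \bigl|\sum_{M^\rmbig_j}(\cdot)\bigr| + \bigl|\sum_{M^\rmsmall_j}(\cdot)\bigr| \le \sum_{i \in N(j)} |x_{ij} - x^\circ_{ij}| = |\bfx_j - \bfx^\circ_j|$, i.e. $|q - q^\circ| \le \tfrac12 |\bfx_j - \bfx^\circ_j|$. Plugging this into the density bound $\Pr_\beta[\text{type-1 recourse}] \le 4|q - q^\circ|$ yields exactly $2|\bfx_j - \bfx^\circ_j|$, as desired. I would present the argument in this order: (1) reduce the event to $\beta$ falling in an interval of length $|q - q^\circ|$; (2) bound that probability by $4|q-q^\circ|$ using uniformity of $\beta$ on an interval of length $\tfrac14$; (3) use the conservation identity $\sum x_{ij} = \sum x^\circ_{ij} = 1$ to show $|q - q^\circ| \le \tfrac12|\bfx_j - \bfx^\circ_j|$; (4) combine. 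The only thing to double-check is the boundary convention (strict vs. non-strict inequality in ``$> \beta$''), which affects only a measure-zero set of $\beta$ and hence not the probability.
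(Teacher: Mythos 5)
Your proof is correct and follows essentially the same route as the paper: reduce to the event that $\beta$ lies between the two big-edge masses $q^\circ,q$, bound its probability by $4|q-q^\circ|$ using uniformity of $\beta$ on an interval of length $\tfrac14$, and then use $|q-q^\circ|\le\tfrac12|\bfx_j-\bfx^\circ_j|$. The only difference is that you supply the justification for that last inequality (via the conservation identity $\sum_i x_{ij}=\sum_i x^\circ_{ij}=1$), which the paper asserts without comment.
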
 
	\begin{proof}
		Let $p^\circ_\rmbig$ be the fraction of job $j$ assigned via big edges, in the fractional solution $\bfx^\circ$; define $p_\rmbig$ similarly for $\bfx$.  Then we have $|p_\rmbig - p^\circ_\rmbig| \leq \frac12\big| {\bfx}_j -  {\bfx}^\circ_j \big|$. If there is a switch between $j$ being small and $j$ being big, then $\beta$ must be between $p_\rmbig$ and $p^\circ_\rmbig$.  This happens with probability at most $4 \cdot |p_\rmbig - p^\circ_\rmbig| \leq 2\cdot \big| {\bfx}_j -  {\bfx}^\circ_j \big|$ since $\beta$ is uniformly chosen from $\left[\frac12,\frac34\right]$.
	\end{proof}
	
	\paragraph{Type-2 recourse: recourse from small jobs} We say a recourse incurred by $j$ at time $t$ is of type-2 if $j$ is small at both time steps $t-1$ and $t$, but assigned to different machines in the two time steps.
	\begin{lemma}
		For every $j \in [n]$, the probability that $j$ incurs a type-2 recourse at time $t$  is at most $4\cdot |\bfx_j - \bfx^\circ_j|$.
	\end{lemma}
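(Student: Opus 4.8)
The plan is a coupling argument between the acceptance--rejection samplings of Step (s) at times $t-1$ and $t$, exploiting that both use the \emph{same} global seeds $(h^j_o,\theta^j_o)_{o\ge 1}$, which are independent of $\beta$. A type-2 recourse requires $j$ to be small under both $\bfx^\circ$ and $\bfx$; fixing $\beta$ and assuming this (otherwise the probability is $0$), the seeds are still i.i.d.\ uniform on $N(j)\times[0,1]$.

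Call the $o$-th seed \emph{$\circ$-accepting} if $h^j_o\in M^\rmsmall_j$ and $\theta^j_o\le x^\circ_{h^j_o j}$, and \emph{accepting} if $h^j_o\in M^\rmsmall_j$ and $\theta^j_o\le x_{h^j_o j}$. Then $j$ is assigned to $h^j_{o^\circ}$ at time $t-1$ and to $h^j_{o^*}$ at time $t$, where $o^\circ$ (resp.\ $o^*$) is the first $\circ$-accepting (resp.\ accepting) index. If $o^\circ=o^*$ the two processes select the same seed, hence the same machine; so a type-2 recourse forces $o^\circ\ne o^*$, and at the index $o=\min(o^\circ,o^*)$ exactly one of the two accepting conditions holds while for every smaller index neither holds. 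Hence the type-2 recourse probability is at most the probability that, among the seeds that are accepting for at least one of $\bfx^\circ,\bfx$, the \emph{first} such seed is accepting for exactly one of them.

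Since the seeds are i.i.d., a geometric-series computation turns this into the ratio $q/p$, where $p=\frac{1}{|N(j)|}\sum_{i\in M^\rmsmall_j}\max(x^\circ_{ij},x_{ij})$ is the per-seed probability of being accepting for at least one solution and $q=\frac{1}{|N(j)|}\sum_{i\in M^\rmsmall_j}|x^\circ_{ij}-x_{ij}|$ that of being accepting for exactly one. The $|N(j)|$ cancels, leaving $\frac{\sum_{i\in M^\rmsmall_j}|x^\circ_{ij}-x_{ij}|}{\sum_{i\in M^\rmsmall_j}\max(x^\circ_{ij},x_{ij})}$. Because $j$ is small at time $t$, $\sum_{i\in M^\rmsmall_j}x_{ij}=1-\sum_{i\in M^\rmbig_j}x_{ij}\ge 1-\beta\ge\frac14$, so the denominator is at least $\frac14$, while the numerator is at most $|\bfx_j-\bfx^\circ_j|$; this gives the claimed bound $4\,|\bfx_j-\bfx^\circ_j|$, and since it holds for every admissible $\beta$ it also holds unconditionally.

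This is not a deep argument; the points requiring care are (i) setting up the coupling so that equal stopping indices force equal machines and unequal machines force a one-sided-accepting seed no later than both stopping indices, and (ii) justifying the identity $\Pr[\text{first-accepting-for-some is accepting-for-exactly-one}]=q/p$ via independence across seeds, together with the trivial edge cases (e.g.\ $j>t-1$, which sits on the dummy machine at both steps so the bound holds vacuously, or $p=0$, which cannot occur once $j$ is small).
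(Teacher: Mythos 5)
Your proof is correct and is essentially the same argument as the paper's: condition on $\beta$ making $j$ small at both steps, couple the two acceptance–rejection samplings via the shared seed sequence, observe that a recourse forces the first at-least-one-accepting seed to be accepting for exactly one of $\bfx^\circ,\bfx$, and bound the resulting conditional probability by $\frac{\sum_{i\in M^{\mathrm{small}}_j}|x^\circ_{ij}-x_{ij}|}{(1-\beta)}\le 4\,|\bfx_j-\bfx^\circ_j|$. The only (harmless) blemish is the parenthetical edge case: for $j=t$ the job is \emph{not} on the dummy machine at both steps, but then $|\bfx_j-\bfx^\circ_j|\ge 2$ so the bound is trivial anyway.
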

	\begin{proof}
		We fix a $\beta$ for which $j$ is small in both time $t-1$ and $t$. Consider the point $(h^j_1, \theta^j_1)$ from the random sequence for $j$. We say the point  is good if it is accepted in both time step $t - 1$ and $t$; we say the point is bad if is accepted in exactly one of the two time steps. The point is neutral if it is not accepted in either time step.  The probability that $j$ incurs a type-2 recourse at time $t$ is at most $\frac{\Pr[(h^j_1, \theta^j_1)\text{ is bad}]}{\Pr[(h^j_1, \theta^j_1)\text{ is good or bad}]} \leq \frac{\big|\bfx_j - \bfx^\circ_j\big|/|N(j)|}{1/(4|N(j)|)}  = 4 \cdot \big|\bfx_j - \bfx^\circ_j\big|$. De-conditioning on $\beta$ gives the lemma. 
	\end{proof}

	\paragraph{Type-3 recourse: recourse from different target machines in step (b2)} We say a recourse incurred by $j$ at time $t$ is of type-3 if $j$ is big at both time steps $t-1$ and $t$, but it is temporarily assigned to different machines in step (b2) of the two time steps. 
	\begin{lemma}
		For every $j \in [n]$, the probability that $j$ incurs a type-3 recourse at time $t$ is at most $10\cdot |\bfx_j - \bfx^\circ_j|$. 
	\end{lemma}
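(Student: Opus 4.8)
The plan is to mirror the Type-2 argument, substituting the rounded vectors produced in step~(b1) for the original ones, and adding one more layer of conditioning for the global thresholds $\delta=(\delta_{ij})$ that step~(b1) uses. First I would fix a value of $\beta$ for which $j$ is big in both time steps (for every other $\beta$ there is no Type-3 recourse, so it suffices to bound the conditional probability for such $\beta$ and then de-condition), and also fix the seeds $\delta$. Once $\beta$ and $\delta$ are frozen, the rounded assignments $\bfx'_j$ and ${\bfx'}^\circ_j$ of $j$ at times $t$ and $t-1$ are fully determined, and the only remaining randomness is the i.i.d.\ sequence $(h^j_o,\theta^j_o)_{o\ge 1}$ consumed in step~(b2).

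With $\beta,\delta$ fixed, I would classify the point $(h^j_1,\theta^j_1)$ as \emph{good} if it is accepted in step~(b2) at \emph{both} times, \emph{bad} if at exactly one, and \emph{neutral} otherwise, exactly as in the Type-2 proof. Since the temporary step-(b2) assignment at each time is the machine of the first accepted point of the sequence, a Type-3 recourse can occur only if the first non-neutral point is bad, and conditioning on that point gives $\Pr[\text{Type-3}\mid\beta,\delta]\le\Pr[\text{bad}]/\Pr[\text{good or bad}]$. A direct calculation yields $\Pr[\text{bad}]=|\bfx'_j-{\bfx'}^\circ_j|/|N(j)|$ and $\Pr[\text{good or bad}]\ge\Pr[\text{accepted at }t]=\big(\textstyle\sum_{i\in M^\rmbig_j}x'_{ij}\big)/|N(j)|$. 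By Lemma~\ref{lemma:x'-good}, with probability at least $1-1/n^2$ over $\delta$ this last sum is at least $\tfrac1{10}$, and on that event $\Pr[\text{Type-3}\mid\beta,\delta]\le 10\,|\bfx'_j-{\bfx'}^\circ_j|$; on the complementary $\delta$-event I use the trivial bound $1$.

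It remains to average over $\delta$. The key fact is that the step-(b1) rounding preserves fluctuations in expectation: inspecting the three cases of the definition of $x'_{ij}$ (both $x_{ij},x^\circ_{ij}\ge\tfrac1{\log n}$; both $<\tfrac1{\log n}$; one of each), one checks that $\E_\delta\!\left[|x'_{ij}-{x'}^\circ_{ij}|\right]=|x_{ij}-x^\circ_{ij}|$ in each case, so $\E_\delta\!\left[|\bfx'_j-{\bfx'}^\circ_j|\right]\le|\bfx_j-\bfx^\circ_j|$. Putting the pieces together, $\Pr[\text{Type-3}\mid\beta]\le 10\,\E_\delta\!\left[|\bfx'_j-{\bfx'}^\circ_j|\right]+\pr{\delta}{\textstyle\sum_{i\in M^\rmbig_j}x'_{ij}<\tfrac1{10}}\le 10\,|\bfx_j-\bfx^\circ_j|+1/n^2$, and de-conditioning over $\beta$ preserves this. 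The additive $1/\poly(n)$ term contributes only $O(1)$ to the recourse summed over all $j\in[n]$ and all $t$, so it is harmless in the final accounting (and can simply be folded into the constant).

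The step I expect to be the main obstacle is controlling the denominator $\Pr[\text{good or bad}]$. In the Type-2 proof the corresponding quantity is bounded below \emph{deterministically} --- a small job keeps at least a $1-\beta\ge\tfrac14$ fraction of its mass on small edges by definition --- whereas here the initial rounding in step~(b1) could, in principle, round almost all of $j$'s big-mass down to $0$, making the denominator tiny and the ratio blow up. Lemma~\ref{lemma:x'-good} is exactly what precludes this (with polynomially small failure probability), which is why the clean bound comes with a $1/\poly(n)$ slack rather than being exact.
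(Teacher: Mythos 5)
Your proposal follows the paper's approach exactly: condition on $\beta$ so that $j$ is big at both times, condition on the thresholds $\delta_{ij}$, reuse the good/bad/neutral analysis from the Type-2 lemma with $\bfx'_j$ in place of $\bfx_j$, and then de-condition on $\delta$ using the identity $\E_{\delta}[|x'_{ij}-x'^\circ_{ij}|]=|x_{ij}-x^\circ_{ij}|$ (which you verify case-by-case, correctly). The one place where you go beyond the paper is worth flagging: the paper's one-line application of "the same argument from the previous lemma" needs the denominator $\sum_{i\in M^\rmbig_j}x'_{ij}\ge\tfrac1{10}$, which is \emph{not} a deterministic consequence of $j$ being big (unlike the Type-2 case where $1-\beta\ge\tfrac14$ holds by definition) but is supplied by Lemma~\ref{lemma:x'-good}, which can fail with probability $1/n^2$ over $\delta$. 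You handle this explicitly, ending with the bound $10|\bfx_j-\bfx^\circ_j|+1/n^2$ and observing that the additive term is negligible in the overall recourse accounting. Strictly speaking this makes your stated conclusion slightly weaker than the lemma as written, but you are right that it suffices for the downstream use, and it is arguably a more honest rendering of what the argument proves.
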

	\begin{proof}
		First we condition on the value of $\beta$ such that $j$ is big at both time $t-1$ and $t$. Conditioned on $\delta_{ij}$ values, the probability that $j$ incurs a type-3 recourse at time $t$ is at most $10\cdot |\bfx'_j - \bfx'^\circ_j|$, using the same argument from the previous lemma.    De-conditioning on $\delta_{ij}$ values, the probability is at most $10 \cdot |\bfx_j - \bfx^\circ_j|$; this holds since $E_{\delta_{ij}}[|x'_{ij} - x'^\circ_{ij}|] = |x_{ij} - x^\circ_{ij}|$ for some $i \in M^\rmbig_j$.   Finally, de-conditioning on $\beta$ gives the lemma. 
	\end{proof}
	
	\paragraph{Type-4 recourse: recourse from failed jobs}
	We say a recourse incurred by a job $j$ at time $t$ is of type-4, if it is not of type-1, 2 or 3.  In this case, $j$ is big in both time steps $t-1$ and $t$, it is temporarily assigned to the same machine in step (b2), and it fails in at least one of the two time steps.  Moreover, the connected component in $G'[M \cup J^\fail]$ containing $j$ is not the same in the two time steps (we assume the condition holds if $j$ does not fail in the other time step). 
	
	 Let $\calC$ be the set of connected components in $G'[M \cup J^\fail]$ in time step $t$; define $\calC^\circ$ similarly for time step $t-1$.  So we need  to count the number of jobs in the components in $\calC^\circ \,\triangle\, \calC$, where $\calC^\circ \,\triangle\, \calC$ denotes the symmetric difference between $\calC^\circ$ and $\calC$, i.e., the set of components appearing exactly one of $\calC^\circ$ and $\calC$.  Due to the symmetry, it suffices to consider the components in $\calC^\circ \setminus \calC$. 
	
	The key theorem we use is that any $C \in \calC^\circ$ is small with high probability. This is proved in \cite{LX21} and used to bound their competitive ratio; while in our case, we apply the theorem to bound the recourse. The crucial properties we use to prove this are the facts that  $G'$ has degree $O(\log^2(n))$, and that every machine is marked with probability $1/\poly\log(n)$, with a sufficiently large exponent in the $\poly\log(n)$ factor. We omit its proof here as it is identical to that in \cite{LX21}, except for minor differences in parameters.
	\begin{theorem}[Lemma 4.5 in \cite{LX21}]
		\label{thm:small-component}
		With probability at least $1 - \frac{1}{n^3}$, every $C \in \calC^\circ$ contains at most $O(\log^7 n)$ machines.  
	\end{theorem}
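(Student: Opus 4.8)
The plan is to follow the proof of Lemma~4.5 in \cite{LX21}: a Chernoff tail bound controls the probability that a machine is \emph{marked} in step (b2), and a union bound over small connected ``witness'' configurations in an auxiliary graph on the machines controls the probability that some component of $G'[M\cup J^\fail]$ is large. Throughout we condition on the (whp) event of Lemma~\ref{lemma:x'-good}; under it, standard Chernoff bounds with a union bound over jobs also give that in the support graph $G'$ every big job has $O(\log n)$ incident machines and every machine has $O(\log^2 n)$ incident big jobs. Let $\mathcal H$ be the graph on $M$ with $i\sim i'$ whenever some big job lies in the $\bfx'$-support of both; then $\mathcal H$ has maximum degree $d=O(\log^3 n)$, and two machines that are non-adjacent in $\mathcal H$ have their step-(b2) loads (hence their marking events) depending on disjoint sets of job-assignment random variables, so their marking events are independent.

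Step~1 (marking probability). The load of machine $i$ in step (b2) is $L_i=\sum_j p_{ij}\mathbf 1[j\to i]$, a sum over distinct big jobs of independent $[0,T^*]$-valued terms with $\Pr[j\to i]\le 10x'_{ij}$ (using $\sum_{i'\in M^\rmbig_j}x'_{i'j}\ge \tfrac1{10}$) and hence $\E[L_i]\le 10\sum_j p_{ij}x'_{ij}\le 100T^*$. A Chernoff bound for bounded independent summands then gives $\Pr\!\big[L_i>\tfrac{c\log\log n}{\log\log\log n}T^*\big]\le (\log n)^{-\Omega(c)}=:p$, where the exponent grows with the constant $c$.

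Step~2 (a large component yields a small, independent, all-marked witness). Fix a component $C$ of $G'[M\cup J^\fail]$, let $W=C\cap M$, and let $S\subseteq W$ be its marked machines. Consecutive machines on a path in $C$ share a failed job, so $W$ induces a connected subgraph of $\mathcal H$. Each failed job in $C$ is assigned in step (b2) to a marked machine lying in $C$ (it is in that job's $\bfx'$-support), each marked machine absorbs $O(\log^2 n)$ jobs, and each failed job touches $O(\log n)$ machines, so $|W|=O(\log^3 n)\,|S|$. Since every machine of $W$ shares a failed job with a marked machine of $W$, $S$ dominates $\mathcal H[W]$; picking a maximal $\mathcal H$-independent $S'\subseteq S$ gives $|S'|\ge |S|/(d+1)$, and $S'$ is a distance-$2$ dominating set of the connected graph $\mathcal H[W]$. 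Connecting the vertices of $S'$ through hops of $\mathcal H[W]$-length $\le 5$ and taking a spanning tree of the resulting auxiliary graph produces a connected subgraph $W'$ of $\mathcal H$ with $S'\subseteq W'\subseteq W$ and $|W'|\le 5|S'|$; thus $W'$ is a connected set containing an $\mathcal H$-independent (hence marking-independent) marked subset of size $\ge |W'|/5$. Now union-bound: the number of connected subgraphs of $\mathcal H$ of size $w$ is $\le m(ed)^w$, each has $\le 2^w$ subsets of size $\ge w/5$, and each such subset is all-marked with probability $\le p^{w/5}$; summing over $w\ge w_0$ bounds the failure probability by $\sum_{w\ge w_0}m\big(2ed\,p^{1/5}\big)^w$, which for $c$ a large enough constant (so $2ed\,p^{1/5}\le(\log n)^{-1}$) is $\le 2m(\log n)^{-w_0}\le n^{-3}$ once $w_0=\Theta(\log n/\log\log n)$. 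Hence whp no such $W'$ exists, so $|S'|=O(w_0)$, giving $|S|=O(d\,w_0)$ and $|W|=O(d^2 w_0)=O(\log^7 n)$ for every component.

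The technical heart, and the step I expect to be the main obstacle, is Step~2: one cannot take $W$ itself as the witness, because only a $1/\poly\log n$ fraction of $W$ is guaranteed to be marked, and then the $(ed)^{|W|}$ count of connected sets overwhelms the $p^{|S|}$ probability. The domination argument---extracting an $\mathcal H$-independent set that still (distance-$2$) dominates the component and re-connecting it with short hops---yields a connected witness whose marked part is a constant fraction while inflating the component size only by $\poly\log n$ (two factors of $d$), which is precisely the accounting that produces the $O(\log^7 n)$ bound. Carrying this out carefully, together with the Chernoff estimate of Step~1 tuned so that $d\cdot p^{1/5}\ll 1$, is the content of \cite{LX21}'s proof, which we invoke verbatim up to these parameter choices.
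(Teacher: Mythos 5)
The paper does not reprove this statement: it cites Lemma~4.5 of \cite{LX21} verbatim and only records the two ingredients it relies on, namely that the support graph restricted to big jobs has $\poly\log n$ degree and that each machine is marked with probability $1/\poly\log n$ for a sufficiently large exponent. Your reconstruction uses exactly these ingredients, and your Step~2 (extract an $\mathcal H$-independent, all-marked, distance-2 dominating set $S'$ from the component, reconnect it by short hops into a connected witness in which a constant fraction is independent and marked, then union-bound over connected witnesses) is precisely the witness-tree/contact-graph argument behind the \cite{LX21} lemma; the accounting $d=O(\log^3 n)$, $p=(\log n)^{-\Omega(c)}$, $w_0=\Theta(\log n/\log\log n)$, $|W|=O(d^2w_0)=O(\log^7 n)$ reproduces the stated bound. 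Two minor notes: the formal description of Step~(b2) in the paper writes $\theta^j_o\le x_{h^j_o j}$, but the sampling weights are meant to be $x'$ (as both the overview and the Type-3 recourse analysis indicate), and your proof correctly assumes this; and you implicitly enlarge the ``good event'' of Lemma~\ref{lemma:x'-good} to include the bound that each big job has $O(\log n)$ machines in the $\bfx'$-support, which is not stated there but follows from the same Chernoff-plus-union argument and is harmless to add. Overall this is a faithful reconstruction of the argument the paper defers to.
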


	With the theorem, we can now focus on bounding $|\calC^\circ \setminus \calC|$.  The change of components from $\calC^\circ$ to $\calC$ are caused by the following two types of events:
	\begin{itemize}
		\item Some job $j$ has changed its target machine in step (b2) from time $t-1$ to $t$, and in at least one of the two time steps, the target machine is marked; we also say this event happens if $j$ is small in the other time step. This event may add/remove $j$ to/from $J^\fail$, switch up to 2 machines between marked and unmarked from time $t-1$ to $t$. We say this is a type-a event incurred by job $j$. 
		\item For some $j$ that fails in both time step $t-1$ and $t$, and some $i \in M^\rmbig_j$, we have $x'_{ij} \neq x'^\circ_{ij}$. We say this is a type-b event incurred by the pair $ij$.  This will add/remove the edge $ij$ to/from the graph $G'$. 
	\end{itemize}
	Notice that if no events of the two types occur, then $\calC^\circ = \calC$, as marked machines, failed jobs, and $x'$ values incident to failed jobs do not change from time $t-1$ to time $t$. 
	
	\begin{lemma}
		Fix a job $j \in [n]$. The probability that $j$ incurs a type-a event is at most $O\left(\frac{1}{\log^{14}n}\right) \cdot |\bfx_j - \bfx^\circ_j|$.
	\end{lemma}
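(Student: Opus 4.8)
The plan is to condition throughout on the global threshold $\beta$ — which pins down whether $j$ is big or small at time $t-1$ and at time $t$ — and to de-condition on $\beta$ only at the very end. (If $\bfx_j=\bfx^\circ_j$ there is nothing to prove, so assume otherwise; also note that a type-a event requires $j$ to be big at at least one of the two time steps, so the $\beta$'s making $j$ small at both times contribute nothing.) The structural fact driving the whole argument is an independence statement: conditioned on $\beta$, the load a fixed machine $i$ receives in step (b2) from jobs \emph{other than} $j$, call it $L_{-j}(i)$, is a function only of the random seeds of those other jobs, hence is independent of the seeds $(h^j_o,\theta^j_o)_o$ and $(\delta_{ij})_i$ that determine which machine $j$ is temporarily assigned to. The snag is that ``$i$ is marked'' is defined through the very step-(b2) assignment in which $j$ itself participates, so we must first strip $j$'s own contribution off it: if $j\to i$ in step (b2) then $x_{ij}>0$, hence $p_{ij}\le T^*$, so if $i$ is additionally marked then $L_{-j}(i) > \big(\tfrac{c\log\log n}{\log\log\log n}-1\big)T^*$.

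\emph{Case A ($\beta$ makes $j$ big at both time steps).} A type-a event then requires that $j$'s step-(b2) targets at $t-1$ and $t$ differ and that at least one of them be marked at the corresponding time; by a union bound it suffices to bound $\sum_i \Pr[\,j\to i \text{ at }t,\ j\text{'s target at }t-1\text{ differs from }i,\ i\text{ marked at }t \mid \beta\,]$, the time-$(t-1)$ term being symmetric. Replacing ``$i$ marked at $t$'' by ``$L_{-j}(i)$ large'' and invoking the independence above, each summand factors; the factor $\Pr[\,L_{-j}(i) > (\tfrac{c\log\log n}{\log\log\log n}-1)T^* \mid \beta\,]$ is at most $1/\log^K n$ for $K$ as large as we wish once $c$ is large — this is precisely the Chernoff estimate behind ``a machine is marked with probability $1/\poly\log n$'' that feeds Theorem~\ref{thm:small-component}, and the $-T^*$ slack in the threshold costs only a lower-order term in the exponent. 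Pulling this factor out, the remaining sum $\sum_i \Pr[\,j\to i\text{ at }t,\ j\text{'s target at }t-1\text{ differs from }i \mid \beta\,]$ equals $\Pr[\,j\text{ has different step-(b2) targets at }t-1 \text{ and } t \mid \beta\,]$, which — conditioning first on the $\delta$-seeds, running the accept/reject computation of the type-3 lemma, then de-conditioning via $\E_\delta[\,|\bfx'_j-{\bfx'}^\circ_j|\,] = |\bfx_j-\bfx^\circ_j|$ — is at most $10\,|\bfx_j-\bfx^\circ_j|$. Adding the symmetric term, $\Pr[\,\text{type-a}\mid\beta\,] \le \tfrac{20}{\log^K n}\,|\bfx_j-\bfx^\circ_j|$ for every such $\beta$.

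\emph{Case B ($\beta$ makes $j$ big at exactly one time step),} say big at $t$ and small at $t-1$ (the reverse subcase is identical). Now ``targets differ'' is replaced by ``$j$ is small in the other time step'', so a type-a event only needs $j$'s step-(b2) target $i$ at time $t$ to be marked; the same factoring as in Case A, but with no constraint on time $t-1$, gives $\Pr[\,\text{type-a}\mid\beta\,] \le \sum_i \Pr[\,j\to i\text{ at }t\mid\beta\,]\cdot\tfrac{1}{\log^K n} \le \tfrac{1}{\log^K n}$. To finish I de-condition on $\beta$: the set of $\beta$ falling in Case B is exactly $\{\beta\text{ strictly between }p^\circ_\rmbig\text{ and }p_\rmbig\}$, where $p_\rmbig, p^\circ_\rmbig$ are the big-fractions of $j$ at the two times; since $|p_\rmbig-p^\circ_\rmbig|\le\tfrac12|\bfx_j-\bfx^\circ_j|$ and $\beta$ is uniform on $[\tfrac12,\tfrac34]$, this set has measure at most $2\,|\bfx_j-\bfx^\circ_j|$, exactly as in the type-1 lemma.

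Putting the cases together, $\Pr[\,\text{type-a event for }j\,] \le \tfrac{20}{\log^K n}\,|\bfx_j-\bfx^\circ_j| + \tfrac{1}{\log^K n}\cdot 2\,|\bfx_j-\bfx^\circ_j| = O\!\big(\tfrac{1}{\log^{14}n}\big)\,|\bfx_j-\bfx^\circ_j|$ as soon as $c$ is chosen large enough that $K\ge 14$. The step I expect to need the most care is the independence bookkeeping of Case A: one must collapse the $\sum_i$ to ``$\Pr[\,j\text{ changes target}\,]$'' only \emph{after} extracting the $1/\log^K n$ factor and \emph{after} the accept/reject estimate, so that the two de-conditionings (first on $\delta$, then on $\beta$) land on the right quantities. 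The passage from ``$i$ marked'' to ``$L_{-j}(i)$ large'' — which is what buys the independence in the first place — and the verification that the marking-probability Chernoff bound still holds with the $-T^*$ slack are, by contrast, routine.
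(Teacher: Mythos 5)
Your proof is correct and follows the same two-step decomposition as the paper's terse argument: bound the probability that $j$'s step-(b2) target differs across the two time steps (via the type-3 lemma machinery), then multiply by a $1/\mathrm{poly}\log n$ bound on the conditional probability that the target is marked. You do more careful bookkeeping than the paper does — in particular, passing from ``$i$ is marked'' to ``$L_{-j}(i)$ is large'' in order to decouple the marking event from $j$'s own assignment and thereby obtain the needed independence, and handling the big/small switch case explicitly through the measure of the relevant $\beta$ — but the underlying idea and the quantities used are the same as the paper's.
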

	\begin{proof}
		The probability that $j$ is assigned to two different target machines in step (b2) in time steps $t-1$ and $t$ is $O(1) \cdot |\bfx_j - \bfx^\circ_j|$. Conditioned on the event, the probability that the target machine is marked in the correspondent time step is at most $O\left(\frac{1}{\log^{14}n}\right)$, when $c$ is big enough. 
	\end{proof}

	\begin{lemma}
		Fix a big edge $ij \in E$. The probability that $ij$ incurs a type-b event is at most $O\left(\frac{1}{\log^{14}n}\right) \cdot |x_{ij} - x^\circ_{ij}|$.
	\end{lemma}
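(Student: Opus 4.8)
The plan is to isolate the single random variable that controls whether the edge $ij$ is present in $G'$ — namely the threshold $\delta_{ij}$ — and to argue about it essentially independently of the tiny-probability event that $j$ fails. First I would condition on the global seed $\beta$. For any $\beta$ under which $j$ is not big at time $t-1$ or not big at time $t$, the job $j$ cannot fail at both times, so no type-$b$ event for $ij$ occurs; hence it suffices to bound the conditional probability for those $\beta$ making $j\in J^\rmbig$ at both times and then de-condition on $\beta$. Moreover such an event alters the component structure of $G'[M\cup J^\fail]$ only when the edge $ij$ is actually added to or removed from $G'$, i.e. exactly one of $x'_{ij}$ and $x'^\circ_{ij}$ is positive; so it is enough to bound the probability of this refined event.

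Fixing such a $\beta$, I would next observe that, given $\beta$, whether $ij\in E'$ at time $t$ (resp.\ $t-1$) is a deterministic function of $\delta_{ij}$, and carry out a short case analysis from the definition of $x'_{ij}$ in Step (b1). If $x_{ij}\ge \nicefrac1{\log n}$ and $x^\circ_{ij}\ge \nicefrac1{\log n}$, then $x'_{ij}=x_{ij}>0$ and $x'^\circ_{ij}=x^\circ_{ij}>0$, so $ij$ is present at both times and nothing changes. Otherwise at least one of $x_{ij},x^\circ_{ij}$ lies below $\nicefrac1{\log n}$, and since $\delta_{ij}$ is uniform on $[0,\nicefrac1{\log n}]$, the events ``$x'_{ij}>0$'' and ``$x'^\circ_{ij}>0$'' take the form ``$\delta_{ij}\le x_{ij}$'' and ``$\delta_{ij}\le x^\circ_{ij}$'' (with the side already $\ge \nicefrac1{\log n}$ always ``present''); a direct computation then gives
\[
\Pr\nolimits_{\delta_{ij}}\big[\,ij\text{ changes presence in }G'\text{ between }t-1\text{ and }t \,\big|\, \beta\,\big]\;\le\; \log n\cdot\big|x_{ij}-x^\circ_{ij}\big|.
\]

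Third, I would invoke the marking estimate underlying Theorem~\ref{thm:small-component}: for the constant $c$ large enough, every machine is marked with probability $\log^{-\omega(1)}n$, and since $j$ fails only when its Step-(b2) target machine is marked and $j$ is assigned to some machine with total probability $1$, we get $\Pr[j\text{ fails at }t\mid\beta,\delta_{ij}=d]\le \log^{-15}n$ for every value $d$ of $\delta_{ij}$ — the point being that this is a concentration statement over the seeds $(h^j_o,\theta^j_o)$ and the remaining $\delta$-variables, hence insensitive to pinning the single coordinate $\delta_{ij}$ (which merely fixes one entry of $\bfx'$). Since the edge-flip event depends only on $\delta_{ij}$ given $\beta$, while the failure probability is uniformly $\le\log^{-15}n$ in the value of $\delta_{ij}$, we may multiply:
\begin{align*}
\Pr[\,ij\text{ incurs a type-}b\text{ event}\mid\beta\,]
&\le \Pr\big[\,ij\text{ changes presence in }G'\ \wedge\ j\text{ fails at }t \,\big|\, \beta\,\big]\\
&\le \Big(\max_{d}\Pr[j\text{ fails at }t\mid\beta,\delta_{ij}=d]\Big)\cdot\Pr\nolimits_{\delta_{ij}}\big[\,ij\text{ changes presence in }G'\mid\beta\,\big]\\
&\le \frac{1}{\log^{15}n}\cdot\log n\cdot\big|x_{ij}-x^\circ_{ij}\big| \;=\; \frac{|x_{ij}-x^\circ_{ij}|}{\log^{14}n}.
\end{align*}
De-conditioning on $\beta$ (the bound being $0$ for the remaining $\beta$'s) proves the lemma. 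The main obstacle is the third step — checking that the $\log^{-\omega(1)}n$ marking bound survives conditioning on the single variable $\delta_{ij}$ (and on the event $j\to i'$ implicit in ``$j$ fails''); but this is precisely the kind of robustness already exploited in \cite{LX21}, since fixing $\delta_{ij}$ only fixes one coordinate of $\bfx'$ and the marking concentration runs over the independent acceptance–rejection seeds.
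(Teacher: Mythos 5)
Your argument is correct and structurally the same as the paper's proof, which is a terse two-sentence chain: (i) $\Pr[x'_{ij}\neq x'^\circ_{ij}\wedge j\text{ big both times}]\le O(\log n)\,|x_{ij}-x^\circ_{ij}|$, (ii) conditioned on that, $\Pr[j\text{ fails at }t]\le O(1/\log^{15}n)$, and (iii) multiply.

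The one place you go genuinely further is in the refinement of the event being bounded, and it is worth noting why. Taken literally, the definition of a type-b event asks only that $x'_{ij}\neq x'^\circ_{ij}$, and this can happen deterministically: if both $x_{ij},x^\circ_{ij}\ge \nicefrac1{\log n}$ and $x_{ij}\neq x^\circ_{ij}$, then $x'_{ij}=x_{ij}$ and $x'^\circ_{ij}=x^\circ_{ij}$ with probability one, so the paper's step (i) fails as stated whenever $|x_{ij}-x^\circ_{ij}|$ is small but both are above the threshold. Your observation that only an \emph{edge-presence flip} can change the component structure of $G'[M\cup J^\fail]$ both fixes that case (there the flip never occurs, so the contribution is zero) and is exactly what feeds into the downstream count ``each type-b event changes at most $2$ components.'' So you proved what is actually used, whereas the paper proves the nominal statement with a gap in the both-above-threshold regime. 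You also make explicit the conditioning hierarchy ($\beta$, then $\delta_{ij}$, then the remaining seeds) and flag the one real subtlety --- that the $\log^{-\omega(1)}n$ marking bound must survive pinning a single $\delta_{ij}$ --- which the paper silently assumes. That robustness is indeed inherited from the concentration arguments of \cite{LX21} since a single job moves a machine's load by at most $T^*$, negligible against the $c\frac{\log\log n}{\log\log\log n}T^*$ marking threshold, so your proof is complete modulo that (legitimate) citation.
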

	\begin{proof}
		With probability at most $O(\log n) \cdot |x_{ij} - x^\circ_{ij}|$, the $x'_{ij}$ value is different in time step $t-1$ and $t$, and $j$ is big in both time steps.  Under the condition that this happens, the probability that $j$ fails at time step $t$ is at most $O\left(\frac{1}{\log^{15}n}\right)$. The lemma then follows.
	\end{proof}

	Each type-a event can change at most $O(\log^2n)$ components in $\calC^\circ$: it may change the marking status of two machines, and a machine is incident to $O(\log^2n)$ jobs in the support of $x'^\circ$.   Each type-b event can change at most at most 2 components in $\calC^\circ$.  Therefore, the expected number of components in $\calC^\circ \setminus \calC$ is at most $\sum_{j \in [n]}O\left(\frac1{\log^{14}n}\right)\big|\bfx_j - \bfx^\circ_j\big| \cdot O(\log^2n) = O\left(\frac{1}{\log^{12}n}\right)\cdot\sum_{j \in [n]}|\bfx_j - \bfx^\circ_j|$. By Theorem~\ref{thm:small-component}, in expectation, the type-4 recourse is at most $O\left(\frac{1}{\log^{12}n}\right) \sum_{j \in [n]}|\bfx_j - \bfx^\circ_j|  \cdot O(\log^7 n) \cdot O(\log^2 n)= O(\frac{1}{\log^3n})\sum_{j \in [n]}|\bfx_j - \bfx^\circ_j|$.

	Therefore, we have proved that the expected recourse at time $t$ is at most $O(1) \sum_{j \in [n]}|\bfx_j - \bfx^\circ_j| = |\bfx - \bfx^\circ|$. Summing up the bound over all time steps $t'$ from $1$ to $t$, we obtain that the recourse by time $t$ is at most $\sum_{t' = 1}^t |\bfx^{(t')} - \bfx^{(t'-1)}| = O(1) \cdot t$. This finishes the proof of Theorem~\ref{thm:O(1)-recourse-main}.






{\small
\bibliographystyle{alpha}
\bibliography{bib,refs}
}

\appendix
\section{Missing Proofs}
\label{appendix:missing-proofs}

\thmbmat*
\begin{proof}
    We reduce \obmp to the online generalized network flow problem (indeed, the ordinary network flow problem suffices as we all gains are defined as $1$).  We maintain a digraph $G' = (L \cup R \cup \{\tau\}, E')$. We have edges from $L$ to $R$ in $G'$ that are the same as those in $G$ (except that the edges in $G'$ are directed), and edges from each $v \in R$ to $\tau$. An edge $uv \in E'$ with $u \in L, v \in R$ has cost $c_u$, capacity $\infty$ and gain $1$. An edge $v\tau$ for $v \in R$ has cost $0$, capacity $\lceil(1+\epsilon)b_v\rceil$ and gain $1$.  The sources in the network are $L$. The graph $G'$ is constructed online: when a new vertex in $L$ arrives, we add it and its outgoing edges to $G'$.
    
    Theorem~\ref{thm:flow} gives an online algorithm that maintains a network flow where every $u \in L$ sends 1 units of flow. As the edges have gain parameters $1$, $\tau$ receives $|L|$ units of flow. As the algorithm are based on augmenting paths, the flow is integral. Thus the algorithm maintains a matching where every $v \in R$ is matched at most $\lceil(1+\epsilon)b_v\rceil$ times.  The reassignment cost of the algorithm is equal to the cost incurred by the network flow algorithm, which is at most $\frac{1+\epsilon}{\epsilon}$ times the optimum cost $C^*$ of the network flow instance with capacities scaled by $\frac1{1+\epsilon}$. After scaling down the capacities, every edge $v\tau \in E'$ has capacity at least $b_v$. Thus $C^* \leq \sum_{u \in L} c_u$. 
\end{proof}

\lemmanonfractional*
\begin{proof}
	We consider any solution $f$ to the linear system defined by the constraints in~\Cref{def:fractiona-augmenting}, with objective $\cost(f)$. As all edge costs are non-negative, we can assume the optimum solution is achieved when all values in $f$ are bounded. Therefore, it must be achieved at a vertex point  defined by some tight inequalities in the linear system. Let $G' = (V', E')$ be the subgraph of $G^\bfx$ containing the support of $f$, and the vertices incident to these edges. We can assume $G'$ contains $s$ and is connected.  Every $v \in V' \setminus \{s, \tau\}$ has at least one incoming edge and one outgoing edge; $s$ has at least one outgoing edge. As $f$ is a vertex solution, we have $|E'| \leq |V' \setminus \{\tau\}|$. We now consider two cases depending on whether $\tau \in V'$ or not. 
	
	First assume $\tau \in V'$. Then we have $|E'| \leq |V'| - 1$.  Therefore, we have $|E'| = |V'| - 1$ and $G'$ is a path from $s$ to $\tau$ in $G^\bfx$ (case \ref{item:AP-path}).  It remains to consider the case $\tau \notin V'$, which implies $|E'| \leq |V'|$. It can only happen that $|E'| = |V'|$. Since every vertex in $V'$ except for $s$ has at least one incoming and one outgoing edge. Two sub-cases may happen in this case.  It may be that $s$ also has an incoming edge in $G'$, in which case $G'$ is a cycle containing $s$ (case \ref{item:AP-cycle}). It may also be that $s$ does not have an incoming edge in $G'$, but some other vertex in $G'$ has 2 incoming edges. In this case, $G'$ contains a cycle and a path connecting $s$ to the cycle (case \ref{item:AP-union}).  This finishes the proof of the lemma.
\end{proof}

\lemmaheightincreases*

\begin{proof}
	Let $s = s_t$ be the source arrived at time $t$.  Notice that adding $s$ and its outgoing edges to $G$ does not decrease the cost of cheapest augmenting path from $v$ in $G^\bfx$, since $s$ does not have incoming edges. Let $\bfx$ be the flow at the beginning of some iteration of the while loop, at time $t$. Let $f$ be the cheapest augmenting path from $s$ in the residual graph $G^\bfx$, as in Step~\ref{step:find-path}.   Let $\bar \bfx$ be the $\bfx$ obtained at the end of the iteration, i.e., after it is augmented using $f$. Let $f^1$ and $f^2$ be the cheapest augmenting paths from $v$ in the residual graph $G^\bfx$ and $G^{\bar \bfx}$ respectively. Let $C_1 = \cost(f^1)$ and $C_2 = \cost(f^2)$.  It suffices for us to prove $C_2 \geq C_1$. 
	
	
	We first consider the case where $v$ is in the support graph $G'$ of the flow $f$, which falls in one of the three cases in Definition~\ref{def:augmenting}. Towards contradiction we assume $C_2 < C_1$.  By Claim~\ref{claim:break}, we can find an augmenting path $f'$ from $v$ in $G_x$, whose support is a sub-graph of $G'$.  As $f'$ is an augmenting path from $v$ in $G_x$, we have $\cost(f') \geq \cost(f^1) = C_1$ by the choice of $f^1$. 
	
	Let $\theta > 0$ be a sufficiently small real number. We define $f^*:=f - \theta f' + \theta f^2$, extending the domain of the three vectors by adding 0-coordinates if necessary.  When $\theta$ is small enough, all entries in $f^*$ are non-negative. So, the cost of $f^*$ is strictly smaller than that of $f$ as $\cost(f')  \geq C_1 > C_2 = \cost(f^2)$.  Clearly, $f^*$ satisfies the balance constraints: the net flow sent by $s$ in $f^*$ is 1, and the net flow sent by any vertex in $V \setminus \{s, \tau\}$ is 0.  
	
	However, some edges in the domain for $f^2$ may be outside $E^\bfx$. This issue can be handled as follows. Suppose some $e = vu$ with $f^2_e > 0$ is not in $E^\bfx$.  Then it must be the case that $uv \in E^\bfx$ and $f_{uv} > 0$.  Then in $f^*$, we update $f^*_{uv} \gets f^*_{uv} - \theta f^2_{vu}/\gamma_{uv}$ and $f^*_{vu} \gets 0$, and discard the coordinate.  When $\theta > 0$ is sufficiently small, we guarantee that all entries in $f^*$ are non-negative.  Moreover this update operation can only decrease the cost of $f^*$, and thus we still have $\cost(f^*) < \cost(f)$.  As the updated $f^*$ is a fractional augmenting path from $s$ in $G^\bfx$, this leads to a contradiction to the choice of $f$.  \medskip
	
	It remains to consider the case where $v$ is not in the support graph $G'$ of $f$.  We consider first vertex $v'$ along the path $f^2$ that is in $G'$. (In case $f^2$ belongs to case \ref{item:AP-cycle} or \ref{item:AP-union}, the path can be obtained by starting from $v$ and following out-going edges in the support of $f^2$.) Using Claim~\ref{claim:break},  we can break $f^2$ into a flow path $\hat f$ sending $1$ units flow from $v$ to $v'$, and a scaled augmenting path $\hat f'$ from $v'$ in $G^{\bar \bfx}$.  We already proved that the cost of the cheapest augmenting path from $v'$ in $G^\bfx$ is at most that in $G^{\bar \bfx}$.  So, we can replace $\hat f'$ with the cheapest augmenting path from $v$ in $G^\bfx$, scaled by the same factor.  Notice that $\hat f$ is also a path in $G^\bfx$ as $f$ does not involve any vertex before $v'$.  Therefore, we obtained an augmenting path from $v$ in $G^\bfx$ with cost no larger than that of $f^2$.  Therefore, we have $C_2 \geq C_1$. 
	%
	%
\end{proof}

{\renewcommand\footnote[1]{}\lemmalbmaintainfractional*}

\begin{proof}
	Notice that a valid generalized flow for the instance at time $t$ corresponds to a fractional solution $\bfx^{(t)}$ for jobs $[t]$, with makespan $(1+\epsilon)T^*$.  The recourse incurred by the fractional solutions at any time is precisely 2 times the cost made by the online algorithm for the generalized flow problem.  Applying Theorem~\ref{thm:flow}, the recourse by time $t$ can be bounded by $O\left(\frac1\epsilon\right)$ times the cost of the offline generalized flow instance at $t$, with capacities scaled by $\frac1{1+\epsilon}$. As the capacities of incoming edges of $\tau$ are $(1+\epsilon)T^*$, scaling by $\frac1{1+\epsilon}$ reverts the capacities back to $T^*$.  The offline generalized network flow instance at time $t$ has a solution of cost $t$.  So, the recourse incurred by the fractional solution at time $t$ is at most $O\big(\frac1\epsilon\big) \cdot t$. 
\end{proof}

    \lemmashortestaugmenting*

\begin{proof}
	Let $\vec{H}$ be the residual graph of $H$ w.r.t $F$: $\vec H$ is a directed graph over $L \cup R$, for every edge $uv \in E_H$, we have $uv \in \vec H$, and for every $uv \in F$, we have $vu \in \vec H$.   We say a vertex in $L$ is free if it is unmatched in $F$.  For every integer $d \in [0, D]$, define $L^d$ ($R^d$ resp.) to be the set of vertices in $L$ ($R$, resp.) to which there exists a path in $\vec{H}$ of length \emph{at most} $2d$ ($2d+1$, resp.) from a free vertex. So, we have $L^0 \subseteq L^1 \subseteq L^2 \subseteq \cdots \subseteq L^D$ and $R^0 \subseteq R^1 \subseteq R^2 \subseteq \cdots \subseteq R^D$. Also notice that $R^{d} = N_H(L^{d})$ for every $d \in [0, D]$. 
	
	For every $d \in [0, D]$, we have $\alpha|L^{d}| \leq |R^{d}|$ by the condition of the lemma. All vertices in $R^D$ are saturated by our assumption that there are no augmenting paths of length at most $2D + 1$.  So for every $d \in [0, L-1]$, we have $|R^{d}| \leq |L^{d+1}|$ as all vertices in $R^d$ are matched by vertices in $L^{d+1}$.
	
	Combining the two statements gives us $\alpha|L^{d}| \leq |L^{d+1}|$ for every $d \in [0, L-1]$. Thus $|L^D| \geq \alpha^L |L^0|$, which contradicts the definition of $D$ and that  $|L^0| \geq 1, |L^D| \leq |L|$. 
\end{proof}

\section{Online Rounding of Fractional Solutions for Load Balancing: $(2+O(\epsilon))$-Approximation}
\label{sec:2-approx-rounding}

In this section, we describe the rounding algorithm that converts the online fractional solutions $\bfx^{(1)}, \bfx^{(2)}, \cdots, \bfx^{(n)}$ into integral ones, with a $(2+O(\epsilon))$-approximation ratio. This shall finish the proof of Theorem~\ref{thm:unrelated-constant-approx}.

To do this, we create an online bipartite matching instance $H = (L \uplus R, E_H)$ with vertex updates on the right side, as well as a fractional matching $\bfy$ in $H$. $L$ corresponds to the jobs, and thus vertices in $L$ arrive one by one. Due to the recourse on the fractional solutions, we may also insert/delete vertices from $R$.  Then we use the algorithm in Section~\ref{subsec:b-matching-changing} to maintain an integral matching in $H$ covering $L$, which naturally leads to an assignment of jobs to machines.  In the analysis, we show that the assignments have small makespan, and the number of times we reassign jobs is bounded. 

For most part in this section, we fix a job $i \in M$,  and describe how to maintain the part of $H$ and the fractional matching $\bfy$ correspondent to the machine $i$. Many notations in the section depend on $i$, but for simplicity we do not include $i$ in these notations. 

So, we fix $i \in M$ and a time step $t$. Let $\bfx = \bfx^{(t)}$ be the fractional solution at $t$. The part of $H$ and $\bfy$ for $i$ can be derived from a 2-level partition of an interval for the jobs. We sort all the jobs $j \in N(i)$ in non-increasing order of $p_{ij}$ values: Let $N(i) = \{j_1, j_2, \cdots, j_{|N(i)|}\}$ with $p_{ij_1} \geq p_{ij_2} \geq \cdots \geq p_{ij_{|N(i)|}}$.  We then map each job $j_o$ to a \emph{job interval} $I_o$ of length $x_{ij_o}$: $j_1$ is mapped to $I_1 := [0, x_{ij_1}]$,  $j_2$ is mapped to $I_2 := [x_{ij_1}, x_{ij_1} + x_{ij_2}]$, $j_3$ is mapped to $I_3: = [x_{ij_1} + x_{ij_2}, x_{ij_1} + x_{ij_2} + x_{ij_3}]$, and so on.  Let $X = \sum_{j \in N(i)} x_{ij}$ be the fractional number of jobs assigned to $i$; that is, the total length of all job intervals. When $\bfx$ gets updated, $X$ and the job intervals also change accordingly. 

Then we define the 2-level partition of the interval $[0, X]$.  In the first level, we partition $[0, X]$ into a set of intervals (we distinguish between a normal interval and a job interval), that we call \emph{buckets}.  We guarantee that all buckets have length between $\frac1\epsilon$ and $\frac4\epsilon$. The only exception is that when $X < \frac{1}{\epsilon}$, in which case the whole $[0, X]$ is just one bucket.  In the second level, each bucket is partitioned into intervals, that we call \emph{segments}.  All the segments except the last one in a bucket have length between $1-3\epsilon$ and $1-\epsilon$. The last bucket has length at most $1-\epsilon$. 

We now describe how to maintain the 2-level partition of the interval $[0, X]$, as the fractional solution $\bfx$ changes. Initially, $X = 0$ and there is one bucket and segment of $0$-length at $0$. We show how to handle two operations: increasing some $x_{ij_o}$ value, and decreasing some $x_{ij_o}$ value. The two operations are sufficient for our algorithm: when a new job $j$ with $ij \in E$ arrives, we create a $0$-length job interval for the job at the appropriate position, and then we increase $x_{ij}$ from $0$ to the desired number. The changes to $\bfx$ can also be handled using the two operations.

\newcommand{\seg}{{\mathsf{seg}}}
\newcommand{\buc}{{\mathsf{buc}}}

First, consider the case where we need to increase $x_{ij_o}$ for some $o$. We find the first segment $\seg$ (from left to right) that internally intersects the job interval $I_o$.  We then increase $x_{ij_o}$ and the lengths  of $I_o$ and $\seg$ continuously at the same rate.  As a result, the length of the bucket $\buc$ containing $\seg$ also increases. Also, the segments after $\seg$ and the buckets after $\buc$ will be shifted to the right continuously. We run the procedure until  $x_{ij_o}$ is increased enough, or one of the following events happens.
\begin{itemize}
	\item The length of $\seg$ reaches $1-\epsilon$. In this case, we re-divide the bucket $\buc$ into segments: all the segments except the last segment in $\buc$ have length exactly $1-2\epsilon$, and the last bucket have length at most $1-2\epsilon$. 
	\item The length of $\buc$ reaches $\frac4\epsilon$. In this case, we divide $\buc$ into two buckets of length $\frac2\epsilon$ each. Then we re-divide each of the two buckets into segments of length $1 - 2\epsilon$ as in the previous case. 
\end{itemize}
If we have not increased $x_{ij_o}$ enough after handling the event, we continue running the procedure.

Now suppose we need to decrease $x_{ij_o}$ by some amount. We find the first segment $\seg$ that intersects the job interval $I_o$ internally. Let $\buc$ be the bucket containing $\seg$. Similarly, we decrease $x_{ij_o}$ and the lengths  of $I_o$, $\seg$ and $\buc$ continuously at the same rate, until we decreased $x_{ij_o}$ enough, or $\seg$ does not intersect internally with $I_o$ anymore, or one of the following two events happen.  As before, the segments after $\seg$ and the buckets after $\buc$ will be shifted to the left continuously.
\begin{itemize}
	\item The length of $\seg$ drops to $1-3\epsilon$ and $\seg$ is not the last segment of $\buc$. In this case, again we re-divide $\buc$ into segments of length $1-2\epsilon$ as before. 
	\item The length of $\buc$ drops to $\frac1\epsilon$ and $\buc$ is not the only bucket.  In this case, we merge $\buc$ with either its previous or the next bucket, depending on which one exists. If the merged bucket has length at most $\frac3\epsilon$, we then keep it. Otherwise, we divide the bucket into 2 equal-length buckets, each of which has length between $\frac{1.5}\epsilon$ and $\frac{2.5}{\epsilon}$. In any case, we divide each of the newly created buckets into segments of length $1-2\epsilon$. 
\end{itemize}
Again, if we have not decreased $x_{ij_o}$ enough after handling the event, we continue the decreasing operation. 

The 2-level partition of $X$ for $i$ determines the portion of the graph $H$ and the fractional matching vector $\bfy$.  For every segment $\seg$, we have a vertex $v_\seg \in R$ for the segment. For every job $j_o \in N(i)$ such that $\seg \cap I_o\neq \emptyset$, we have $j_ov_\seg \in E_H$, and $y_{j_ov_\seg}$ is the length of the intersection between $\seg$ and the job interval $I_o$.  We highlight some technicality here: All the intervals we defined are closed interval. So it is possible that the intersection of $\seg$ and $I_o$ is only one point. In this case we have $j_ov_\seg \in E_H$ and $y_{j_ov_\seg} = 0$. It is allowed for the bipartite matching algorithm to match $j_o$ to $v_\seg$. 

\begin{claim}
    \label{claim:2-approx-y-equal-x}
	At any time $t$, for every $ij \in E$, we have $\sum_{v \text{ is segment for } i} y_{jv} = x_{ij}$.
\end{claim}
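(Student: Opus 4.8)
The plan is to treat this as a bookkeeping identity that falls out of two structural facts: (i) the job interval $I_o$ associated with a job $j = j_o \in N(i)$ has length exactly $x_{ij_o}$ by construction, and (ii) at every point in time the segments (for machine $i$) form a partition of the interval $[0, X]$ into sub-intervals overlapping only at endpoints, where $X = \sum_{j \in N(i)} x_{ij}$. Given these, the computation is immediate.

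First I would record the partition invariant. The buckets partition $[0,X]$ and, within each bucket, the segments partition that bucket; hence the segments partition $[0,X]$. This is preserved inductively across the two update operations used to maintain the $2$-level partition. Increasing (resp.\ decreasing) some $x_{ij_o}$ grows (resp.\ shrinks) the length of exactly one segment $\seg$ and its enclosing bucket $\buc$ continuously, while all later segments and buckets are merely translated; and every re-division event (re-slicing a bucket into segments of length $1-2\epsilon$, splitting an overlong bucket into two, or merging an underfull bucket with a neighbor) replaces a sub-interval of $[0,X]$ by a partition of the \emph{same} sub-interval. The insertion of a newly arrived job contributes a $0$-length job interval at the appropriate position, which does not disturb the tiling either. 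Thus the segments tile $[0,X]$ at all times.

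Next, by the definition of the job intervals, $I_o = \big[\sum_{o' < o} x_{ij_{o'}},\ \sum_{o' \le o} x_{ij_{o'}}\big] \subseteq [0, X]$ has length $x_{ij_o}$. Since the segments partition $[0,X]$, the family $\{\seg \cap I_o : \seg \text{ a segment for } i\}$ partitions $I_o$, and by the definition of $\bfy$ we have $y_{j_o v_\seg} = \mathrm{length}(\seg \cap I_o)$ whenever $\seg \cap I_o \neq \emptyset$ (with $y_{j_o v_\seg} = 0$ in the degenerate case where the intersection is a single point, $j_o v_\seg$ still being an edge of $E_H$). Therefore
\[
\sum_{v \text{ is segment for } i} y_{jv} \;=\; \sum_{\seg} \mathrm{length}(\seg \cap I_o) \;=\; \mathrm{length}(I_o) \;=\; x_{ij_o} \;=\; x_{ij},
\]
where the single-point intersections contribute $0$ and so are harmless.

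The only point requiring any care — and it is very mild — is making the partition invariant precise, i.e.\ checking that none of the re-division / merge / split events in the maintenance procedure ever create a gap or a genuine overlap among the segments, and that a job's $0$-length interval is inserted consistently. There is no real mathematical obstacle; the claim is essentially definitional once the invariant is stated.
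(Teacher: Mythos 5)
Your proposal is correct and follows the same approach as the paper's one-line proof, which simply observes that the segments partition $[0,X]$ and $I_o \subseteq [0,X]$ has length $x_{ij_o}$. You have merely spelled out the partition invariant and the length computation in more detail, which adds rigor but no new idea.
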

\begin{proof}
	This holds since all the segments form a partition of $[0, X]$, and the job interval for $j$ is inside $[0, X]$. 
\end{proof}
Therefore, if we consider the whole graph $H$, $\bfy$ defines a fractional matching between $L = [t]$ and $R$ at time $t$: every $j \in L$ is covered to an extent of $1$ and every $v \in R$ is matched by an extent of at most $1-\epsilon$. This implies the following claim:

\begin{claim}
    \label{claim:lb-H-expand}
	At any time,  we have $|N_H(A)| \geq \frac{|A|}{1-\epsilon}$ for every $A \subseteq L$. 
\end{claim}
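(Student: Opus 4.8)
The plan is to derive the vertex-expansion property of $H$ directly from the fractional matching $\bfy$, which we have just shown (Claim~\ref{claim:2-approx-y-equal-x}, together with the property that every segment has length at most $1-\epsilon$) is a fractional $b$-matching that covers every job fully and saturates every segment vertex to an extent of at most $1-\epsilon$. The argument is a one-line counting estimate of the standard Hall-type flavor.

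\textbf{Key steps.} First, fix any $A \subseteq L$ at some time $t$. By Claim~\ref{claim:2-approx-y-equal-x}, $\sum_{v \in R} y_{jv} = \sum_{i} x_{ij} = 1$ for every $j \in A$ (summing the per-machine identity over all machines $i$, recalling that $\bfx^{(t)}$ is a feasible fractional schedule so $\sum_i x_{ij} = 1$). Hence the total $\bfy$-mass incident to $A$ is exactly $|A|$; moreover all of this mass lands on $N_H(A)$, since $y_{jv} > 0$ (or even $y_{jv} = 0$ on a degenerate point-intersection) only for $v \in N_H(v)$, so $\sum_{j \in A}\sum_{v \in N_H(A)} y_{jv} = |A|$. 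Second, every segment vertex $v$ is saturated to an extent of at most $1-\epsilon$: by construction each segment has length at most $1-\epsilon$, and $\sum_{j} y_{jv}$ equals the total length of the parts of job intervals covered by the segment of $v$, which is at most its own length, hence $\sum_j y_{jv} \le 1-\epsilon$. Therefore $|A| = \sum_{v \in N_H(A)}\sum_{j \in A} y_{jv} \le \sum_{v \in N_H(A)} (1-\epsilon) = (1-\epsilon)\,|N_H(A)|$, which rearranges to $|N_H(A)| \ge \frac{|A|}{1-\epsilon}$, as claimed.

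\textbf{Main obstacle.} There is essentially no difficulty here — the statement is an immediate corollary of the two preceding claims plus the invariant on segment lengths maintained by the update procedure. The only point requiring a little care is the bookkeeping around degenerate (single-point) intersections of a segment with a job interval, which contribute an edge to $E_H$ with weight $y_{jv} = 0$; these do not affect any of the mass counts above, so they are harmless, but one should note explicitly that such edges are allowed so that the matching algorithm of Section~\ref{subsec:b-matching-changing} (invoked with $\alpha = \frac{1}{1-\epsilon}$) is entitled to use them. One should also remark that the bound is stated for the whole graph $H$ even though the 2-level partition was described per machine $i$: since the segment-vertex sets for distinct machines are disjoint and every job interval for $j$ on machine $i$ has total length $x_{ij}$, summing the per-machine identities is exactly what yields $\sum_{v \in R} y_{jv} = 1$, so the global statement follows with no extra work.
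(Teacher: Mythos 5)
Your argument is correct and follows exactly the route the paper takes: the paper observes that $\bfy$ is a fractional matching covering every $j \in L$ to extent $1$ and matching every $v \in R$ to extent at most $1-\epsilon$ (since each segment has length at most $1-\epsilon$), and leaves the Hall-type counting as immediate. You have merely written out the counting step explicitly, which is fine.
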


We then finish the description of the rounding algorithm. We maintain the dynamically-changing bipartite graph $H$, and use the bipartite-matching algorithm described in~\Cref{subsec:b-matching-changing} and Theorem~\ref{thm:online-matching-changing} to maintain a bipartite matching. If a job $j$ is assigned to a segment for machine $i$, we then assign job $j$ to machine $i$ in the load-balancing instance. 

\subsection{Analysis}
Now we analyze the recourse and the competitive ratio of the algorithm respectively. 
\paragraph{Analysis of Recourse} First we analyze the recourse made by the rounding algorithm. 
\begin{lemma}
    \label{lemma:lb-number-vertex-updates}
	For any $t \in [n]$, the number of vertex updates on segments for $i$ by any time $t$ is at most $O\left(\frac{1}{\epsilon^2}\right) \cdot \sum_{t' = 1}^t \sum_{j \in N(i)} |x^{(t')}_{ij} - x^{(t'-1)}_{ij}|$.
\end{lemma}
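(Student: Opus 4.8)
The plan is to charge all segment-vertex updates to the total fractional movement $D:=\sum_{t'=1}^{t}\sum_{j\in N(i)}\big|x^{(t')}_{ij}-x^{(t'-1)}_{ij}\big|$, which is the quantity on the right-hand side. The first thing I would establish is that the vertex set (and the incidence structure) of the part of $H$ belonging to $i$ changes \emph{only} when some bucket is re-partitioned into segments — i.e.\ at one of the four event thresholds, or at a bucket split/merge. Indeed, during a single increase (resp.\ decrease) of some $x_{ij_o}$ strictly between consecutive events, the affected segment $\seg$ keeps its left endpoint fixed while its right endpoint moves in lockstep with the right endpoint of $I_o$, the later job intervals and the later segments all translate rigidly by the same amount, and so every overlap length $y_{j v_\seg}$ either stays constant or varies continuously without any overlap appearing or vanishing; hence no $v_\seg$ is inserted/removed and no edge is added/deleted there. (Creating the length-$0$ job interval for a newly arrived job likewise only adds one zero-weight edge, not a vertex.) Consequently it suffices to count re-partition events and multiply by the number of segment vertices each one touches. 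Since every bucket has length $\le 4/\epsilon$ and every non-terminal segment has length $\ge 1-3\epsilon\ge 1/2$ (taking $\epsilon$ small), a bucket carries $O(1/\epsilon)$ segments, so re-partitioning one bucket — and a split/merge re-partitions $O(1)$ buckets — produces $O(1/\epsilon)$ segment-vertex updates.

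Next I would bound the number of re-partition events, splitting them into \emph{segment events} (a segment grows to length $1-\epsilon$, or a non-terminal segment shrinks to $1-3\epsilon$) and \emph{bucket events} (a bucket grows to $4/\epsilon$ or shrinks to $1/\epsilon$), charging each to a disjoint portion of $D$. View the process as a stream of infinitesimal increase/decrease steps; each step moves exactly one ``current'' segment and the ``current'' bucket containing it by the same infinitesimal amount, and the total absolute segment-movement over the whole run equals $D$. A segment that triggers a segment event was created, at the previous re-partition of its bucket, with length exactly $1-2\epsilon$ (it is not the terminal segment in the shrink case), so its length changed by at least $\epsilon$ before it triggered the event, consuming at least $\epsilon$ units of segment-movement that no later segment event can reuse (a segment lives on a contiguous time interval and is destroyed by the event it triggers). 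Hence there are at most $D/\epsilon$ segment events, contributing $O(1/\epsilon)\cdot D/\epsilon=O(D/\epsilon^2)$ vertex updates. Likewise, a bucket that triggers a bucket event was created, by a split or a merge, with length bounded away from both $1/\epsilon$ and $4/\epsilon$ by $\Omega(1/\epsilon)$ (inspecting the split/merge rules gives creation lengths in $[1.5/\epsilon,\,3/\epsilon]$; the initial singleton bucket has length $0$ but can only ever trigger the growth threshold, at movement $4/\epsilon$), so it moved by $\Omega(1/\epsilon)$ between creation and the triggering event; partitioning $D$ by which bucket is current shows there are $O(\epsilon D)$ bucket events, contributing $O(1/\epsilon)\cdot O(\epsilon D)=O(D)$ vertex updates. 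Summing, the total is $O(D/\epsilon^2)$, as claimed.

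The main obstacle I expect is the very first step: making the ``only re-partitions matter'' claim airtight, i.e.\ verifying across all the increase/decrease sub-cases that a pure stretch-and-shift never creates or destroys a segment vertex or an incidence (so in particular that the current segment can only change between consecutive events by handing off to a neighbouring segment, with no structural change), and that the current-bucket length changes only while that bucket is current. A secondary, more mechanical point is confirming the claimed creation lengths — $1-2\epsilon$ for non-terminal segments and the interval $[1.5/\epsilon,3/\epsilon]$ for buckets — by checking each of the four event-handling cases, and confirming the disjointness of the charges (that a segment, resp.\ bucket, occupies a single time interval and is consumed only by the event that destroys it). Once these structural facts are in place the counting is routine.
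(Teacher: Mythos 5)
Your proof is correct and takes essentially the same approach as the paper: both decompose vertex updates into bucket-level creation/split/merge events and segment-level re-division events, charge each bucket event to $\Omega(1/\epsilon)$ units of fractional movement and each re-division to $\Omega(\epsilon)$ units, and note that each event produces $O(1/\epsilon)$ vertex updates, giving the $O(D/\epsilon^2)$ total. Your additional care about the initial bucket, the hand-off when an overlap drops to zero, and the disjointness of the charges are the same minor bookkeeping points the paper addresses more tersely.
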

\begin{proof}
	We fix the machine $i$ in this proof, and consider two different causes for vertex updates for machine $i$.
	
	First, vertex updates may happen when a new bucket is created. When a new bucket is created, its length is between $\frac{1.5}{\epsilon}$ and $\frac{3}{\epsilon}$: When a bucket is created due to an increasing operation, then the new buckets have length $2/\epsilon$.  Consider a decreasing operation. The merged bucket will have length between $\frac2\epsilon$ and $\frac5\epsilon$. If it has length at most $3\epsilon$, then no splitting happens and its length is between $\frac2\epsilon$ and $\frac3\epsilon$. Otherwise, the two new buckets will have length between $\frac{1.5}{\epsilon}$ and $\frac{2.5}{\epsilon}$.  So, it takes $0.5/\epsilon$ fractional recourse on $(x_{ij})_j$ to create a new bucket. Moreover, when a new bucket is created, $O(1/\epsilon)$ segments will be inserted and deleted. Therefore, creation of new buckets incur at most $O(1) \cdot \sum_{t' = 1}^t \sum_{j \in N(i)} |x^{(t')}_{ij} - x^{(t'-1)}_{ij}|$.
	
	Then we consider vertex updates due to re-division of a bucket into segments. Whenever a re-division happens, all segments have lengths exactly $1 - 2\epsilon$ except for the last one, which has length at most $1-2\epsilon$. The next re-division happens if the length of some segment increases to $1-\epsilon$, or the length of some segment that is not the last one decreases to $1-3\epsilon$. So, it takes $\epsilon$ fractional recourse on $(x_{ij})_j$ for the next re-division to happen. When a re-division happens, we make $O(1/\epsilon)$ vertex updates.  Therefore, re-divisions incur at most $O\big(\frac1{\epsilon^2}\big) \cdot \sum_{t' = 1}^t \sum_{j \in N(i)} |x^{(t')}_{ij} - x^{(t'-1)}_{ij}|$ vertex updates.
	
	Notice that these are the only case where a vertex update happens. We need to mention that when the length of  the intersection of $\seg$ and $I_o$ decreases to $0$,  no vertex update happens, since still there is an edge between $j_o$ and $v_\seg$ in $H$. 
\end{proof}

Now we can complete the analysis of the recourse of the online rounding algorithm. Note that the total recourse made by the load balancing algorithm is at most the number of reassignments made by the bipartite-matching algorithm. By Theorem~\ref{thm:online-matching-changing} and Claim~\ref{claim:lb-H-expand}, the recourse of the latter by time $t$ is at most $O\big(\frac{\log n}{\epsilon}\big) \big(t + \#(\text{vertex updates by }t)\big)$. By Lemma~\ref{lemma:lb-number-vertex-updates} and summing up the bounds over all machines $i$, the number of vertex updates by time $t$ is at most $O\big(\frac {1}{\epsilon^2}\big)\sum_{t' =1}^{t} \left| \bfx^{(t')} - \bfx^{(t'-1)} \right|$, which is at most $O\big(\frac{t}{\epsilon^3}\big)$ by Lemma~\ref{lemma:lb-maintain-fractional}.  Therefore the total recourse by time $t$ made by the load balancing algorithm is at most $O\left(\frac{t\log n}{\epsilon^4}\right)$. Recall that the extra $O\left(\frac{\log(1/\epsilon)}{\epsilon}\right)$-factor comes from making the assumption that $T^*$ is known. 

\paragraph{Analysis of Competitive Ratio} It remains for us to analyze the competitive ratio of the algorithm. Again we fix a machine $i$ and prove the following:
\begin{lemma}
	The load of machine $i$ at any time is at most $(2+O(\epsilon))T^*$.
\end{lemma}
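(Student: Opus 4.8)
The plan is to run the Shmoys--Tardos-style bounding argument on the segment structure. Fix the machine $i$ and a time $t$, write $\bfx=\bfx^{(t)}$, and let $g:[0,X]\to\R_{\ge 0}$ be the step function equal to $p_{ij_o}$ on the job interval $I_o$. Since the jobs of $N(i)$ are listed in non-increasing order of $p_{ij}$ and the intervals $I_o$ are laid out left to right, $g$ is non-increasing on $[0,X]$, and $\int_0^X g(y)\,dy=\sum_{j}p_{ij}x_{ij}\le(1+\epsilon)T^*$ by Lemma~\ref{lemma:lb-maintain-fractional}; also $g(y)\le T^*$ everywhere, since (by the definition of $T^*$, which we may build into the flow instance) the fractional solution assigns $j$ to $i$ only when $p_{ij}\le T^*$. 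Because the matching in $H$ matches each segment-vertex $v_\seg$ to at most one job, and a job $j_o$ has an edge to $v_\seg$ only if $I_o\cap\seg\neq\emptyset$, the load of $i$ is at most $\sum_{\seg}q_\seg$, summed over the segments for $i$, where $q_\seg:=\max\bigl(\{0\}\cup\{\,p_{ij_o}:I_o\cap\seg\neq\emptyset\,\}\bigr)$.

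The first step is a ``charge to the left'' inequality. For a segment $\seg$ with left endpoint $\ell$, every job $j_o$ meeting $\seg$ lies (in the layout) weakly to the right of $\ell$, so by the sorting $p_{ij_o}\le g(y)$ for every $y<\ell$ at which $g$ is defined; hence $q_\seg\le g(y)$ for every $y$ in the interior of the interval immediately to the left of $\seg$. I would apply this in two ways: (i) if $\seg$ is not the first segment of its bucket $\buc$, let $\seg'$ be the preceding segment in $\buc$; then $\seg'$ is not the last segment of $\buc$, so $\mathrm{len}(\seg')\ge 1-3\epsilon$ and $q_\seg\le\frac{1}{1-3\epsilon}\int_{\seg'}g$; (ii) if $\seg$ is the first segment of a bucket $\buc$ other than the first bucket, let $\buc^-$ be the preceding bucket; then $\mathrm{len}(\buc^-)\ge\frac1\epsilon$ and $q_\seg\le\epsilon\int_{\buc^-}g$. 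If $\seg$ is the first segment of the first bucket, simply $q_\seg\le T^*$. (Zero-length job intervals sitting on a segment boundary, and transiently zero-length last segments, cause no trouble: such values are still dominated by $g$ on the interior of the positive-length segment/bucket to their left, or by $T^*$ for the leftmost nonempty interval, and the object we charge against always has length $\ge 1-3\epsilon$ or $\ge 1/\epsilon$.)

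Now I would sum. Inside a bucket $\buc$ with segments $\seg_1,\dots,\seg_r$ in left-to-right order,
\[
  \sum_{\ell=1}^r q_{\seg_\ell}\;\le\; q_{\seg_1}+\frac{1}{1-3\epsilon}\sum_{\ell=1}^{r-1}\int_{\seg_\ell}g\;\le\;q_{\seg_1}+\frac{1}{1-3\epsilon}\int_{\buc}g,
\]
where $\int_\buc g$ is the fractional load from the portion of $[0,X]$ occupied by $\buc$. Summing over all buckets, $\sum_\buc\int_\buc g=\int_0^X g\le(1+\epsilon)T^*$, while $\sum_\buc q_{\seg_1(\buc)}\le T^*+\epsilon\sum_\buc\int_\buc g\le T^*+\epsilon(1+\epsilon)T^*$ (first bucket $\le T^*$; each later bucket $\le\epsilon\int_{\buc^-}g$, with the $\buc^-$ ranging over a subset of all buckets). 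Hence the load of $i$ is at most
\[
  T^*+\epsilon(1+\epsilon)T^*+\tfrac{1+\epsilon}{1-3\epsilon}T^*\;=\;2T^*+O(\epsilon)T^*,
\]
as claimed; the exceptional case $X<\frac1\epsilon$ (one bucket) is covered by the same in-bucket bound with $q_{\seg_1}\le T^*$.

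The only delicate point is the ``charge to the left'' inequality in the presence of degenerate (zero-length) job intervals and transiently zero-length segments, together with the fact that the last segment of a bucket can have length below $1-3\epsilon$. The resolution, which I would make explicit, is that we never need a length lower bound on a bucket's last segment: the job occupying the first segment of the next bucket is charged against the \emph{entire} preceding bucket (length $\ge 1/\epsilon$), not against that short last segment.
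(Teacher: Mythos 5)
Your proof is correct and follows essentially the same Shmoys--Tardos-style charging as the paper's: you bound the matched job on each segment by the $\bfy$-weighted average size over the preceding segment (factor $\nicefrac{1}{1-3\epsilon}$), handle the first segment of the first bucket by $T^*$, and charge the first segment of each later bucket against $\epsilon$ times the preceding bucket's fractional load, then sum using $\sum_j x_{ij}p_{ij}\le(1+\epsilon)T^*$. The only cosmetic difference is that you phrase the charging via the non-increasing step function $g$ and segment maxima $q_\seg$ rather than via the explicit $y_{j,v_\seg}$-weighted sums, but these are the same quantities; your explicit treatment of the degenerate zero-length cases and of the last segment of each bucket is also exactly the paper's intent.
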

\begin{proof}
		Fix a time $t$ and $\bfx = \bfx^{(t)}$.  We now use $v_{ik}$ to denote the $k$-th segment for $i$ from left to right, over all the buckets.  Let $J_{ik}$ be the set of jobs in $N(i)$ whose job intervals internally intersect the segment $v_{ik}$ at the time.  We use $\pi_{jik}$ to indicate if $j$ is assigned to $v_{ik}$ in the solution for bipartite matching instance at time $t$.  
		
		Notice that each all segments except for last segments of buckets have length between $1-3\epsilon$ and $1-\epsilon$.  The last segment of a bucket has length at most $1-\epsilon$.   All buckets have length between $1/\epsilon$ and $4/\epsilon$; the only exception is when there is only one bucket, whose length might be smaller than $1/\epsilon$.
		
		The total load $i$ receives at time $t$ is then
	\begin{align*}
		\sum_{k} \sum_{j \in J_{ik}} \pi_{jik} p_{ij} & \leq \frac{1}{1-3\epsilon}\sum_{k} \sum_{j \in J_{ik}} y_{j, v_{ik}} p_{ij} + T^* + \epsilon\cdot \sum_{k} \sum_{j \in J_{ik}} y_{j, v_{ik}}p_{ij} = (1+O(\epsilon)) \sum_{k} \sum_{j \in J_{ik}} y_{j, v_{ik}} p_{ij} + T^* \\
		& = (1+O(\epsilon)) \sum_{j \in N(i)} x_{ij} p_{ij} + T^* \leq (1+O(\epsilon)) T^* + T^* = (2+O(\epsilon)) T^*.
	\end{align*}
	We need to elaborate more on the first inequality. For every $j \in J_{ik}$ with $\pi_{jik} = 1$, we can try to upper bound $p_{ij}$ by $\frac{1}{1-3\epsilon}\sum_{j' \in J_{i(k-1)}}y_{j', v_{i(k-1)}}p_{ij'}$.  Notice that every $j'$ in the summation has $p_{ij'} \geq p_{ij}$.  If $k$ is not the first segment of any bucket, then the upper bound holds as the total $\bfy$ value for the previous segment is at least $1-3\epsilon$.  The $p_{ij}$ for the first segment of the first bucket can be upper bounded by $T^*$; the $p_{ij}$ for the first segment of other buckets can be bounded by $\epsilon$ times the budget from the previous bucket, as the bucket has length at least $1/\epsilon$. The last equality is by Claim~\ref{claim:2-approx-y-equal-x}. The last inequality comes from that the total fractional load on machine $i$ is at most $(1+\epsilon)T^*$.
\end{proof}
 	\section{Lower Bound on Recourse for Load Balancing in the Fully Dynamic Model} \label{appendix:FD-LB}
	In this section we consider the online load balancing problem with recourse in the fully dynamic model. Again, we have a set $M$ of machines and a set $J$ of $n$ jobs. Jobs can arrive and depart, and at any time, we are guaranteed that there is a schedule of makespan at most $T^*$, for a given $T^*$. To achieve an $\alpha$-competitive ratio, our algorithm needs to maintain a solution of makespan at most $\alpha T^*$.  An algorithm with an amortized recourse of $\beta$ can make at most $\beta t$ reassignments for the first $t$ arrival/departure events, for any $t$. By making copies of jobs, we assume every job arrives and departs exactly once; so there are $2n$ events in the sequence. Our negative result holds even for the restricted assignment setting, and when we only need to maintain a fractional schedule.  Indeed, the lower bound holds in the offline setting: the best algorithm that knows everything upfront must incur a large recourse.   

	\begin{theorem}
		\label{thm:lb}
		Let $\alpha(n)  = o(\log n)$ be a monotone non-decreasing function of $n$. There is an instance for the above problem such that the following holds. Any offline algorithm that maintains a fractional schedule of $\alpha(n) T^*$ needs to incur an amortized recourse of $n^{\Omega(1/\alpha(n))}$. 
	\end{theorem}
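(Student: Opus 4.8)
The plan is to exhibit a hard instance in the restricted-assignment, fully-dynamic setting where a carefully chosen sequence of arrivals and departures forces any low-makespan fractional schedule to repeatedly move a constant fraction of load. The natural candidate is a recursive/layered construction reminiscent of the lower-bound instances for online load balancing (and for online bipartite matching with recourse), tuned so that the recursion depth is $\Theta(\alpha(n))$ and the branching factor is $n^{\Theta(1/\alpha(n))}$. Concretely, I would build a tree-like hierarchy of machines: at the top there is one "root" group of machines, and each node of the tree at depth $d$ has a block of jobs that can only be scheduled on the machines in its subtree. The job sizes are scaled by powers of the branching factor $k := n^{\Theta(1/\alpha(n))}$ across levels, so that a job at level $d$ has size roughly $T^*/k^{\,\ell-d}$ where $\ell = \Theta(\alpha(n))$ is the number of levels, and every leaf machine can hold exactly one unit of "top-level" load. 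The point of the scaling is that an $\alpha(n)$-competitive schedule is forced to spread each level-$d$ job almost uniformly over the $k$ children, because concentrating it on fewer children would overload some machine by more than an $\alpha(n)$ factor once the lower levels are also loaded.

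The adversary's sequence is then the standard "peel and re-insert'' trick applied recursively. First, all jobs at all levels arrive; since $T^*$ is the optimum makespan, the algorithm must (by the spreading property above) have assigned, for each internal node, roughly a $1/k$ fraction of each of its jobs into each child subtree. Now the adversary deletes all jobs in one child subtree of the root — or more precisely, deletes the jobs that "block'' all but one child of the root — so that the root's jobs must now be re-packed into fewer subtrees; but feasibility at makespan $T^*$ still holds for the surviving instance, and any $\alpha(n)$-competitive algorithm is again forced to re-spread the root jobs over the surviving children, which means a constant fraction of the root-job load has moved. Then the deleted jobs are re-inserted, forcing another re-spread, and this is iterated. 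Crucially, the same argument runs in parallel inside every subtree at every level, and by choosing the deletion pattern so that each of the $k$ branches at each level gets "isolated'' $\Omega(k)$ times over the course of the sequence, we force $\Omega(1)$ amortized movement per event at each of the $\ell$ levels, and the per-level movements add up (or, more carefully, we get a total movement of $n^{\Omega(1/\alpha(n))}$ per event from the combinatorial blow-up of the recursion). The exact bookkeeping — making sure the total number of events is $\Theta(n)$, that the instance is feasible at makespan $T^*$ after every event, and that the forced re-spreads at different levels are counted without double-counting — is the routine but delicate part.

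The key steps, in order: (1) define the branching factor $k = n^{c/\alpha(n)}$ and number of levels $\ell \approx \alpha(n)/c'$ so that $k^\ell \approx n$ and the total job count is $\Theta(n)$; (2) define the job sizes by level and verify that there is an assignment of makespan $T^*$ after any prefix of the event sequence (feasibility of the offline optimum); (3) prove the "spreading lemma'': in any fractional schedule of makespan $\le \alpha(n) T^*$, for every internal node the total fractional load of its jobs routed into each child is within a $(1 \pm o(1))$ factor of the uniform split — this follows by a counting argument comparing the number of machines in a subtree against the total job mass that must be placed there; (4) describe the adversary's deletion/re-insertion schedule and show each "isolation'' event forces $\Omega(1)$ amortized fractional recourse relative to the jobs at that node; (5) sum over all nodes and all isolation events to get the $n^{\Omega(1/\alpha(n))}$ amortized bound, and check the number of events is $\Theta(n)$.

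The main obstacle I expect is step~(3) combined with step~(4): proving that the makespan constraint genuinely forces near-uniform spreading even when $\alpha(n)$ is allowed to grow (just slower than $\log n$), and then showing that a bounded-makespan \emph{re}-schedule after a deletion cannot "reuse'' the old assignment — i.e., that the algorithm is forced into a genuinely different near-uniform configuration, so the movement is $\Omega(1)$ and not $o(1)$. This requires the gap between consecutive level scales to be large enough (a $k = n^{\Theta(1/\alpha(n))}$ factor) that slack of size $\alpha(n)$ at a machine is still too small to avoid re-spreading at the level above; getting the two parameters ($k$ and $\ell$) to simultaneously satisfy $k^\ell = \Theta(n)$, $\ell = \Theta(\alpha(n))$, and "$\alpha(n)$ slack $\ll k$'' is exactly where the $o(\log n)$ hypothesis on $\alpha$ is used, and making that quantitative is the crux of the argument. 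The rest — feasibility of the optimum and summing the recourse — should be a routine if somewhat lengthy calculation.
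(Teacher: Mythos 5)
Your construction and the paper's share the same high-level intuition — a hierarchy of machines, jobs that can only go into a fixed subtree, and a sequence that repeatedly forces re-spreading — but the paper's actual route is genuinely different from yours, and it is different precisely at the point you flag as the crux.

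First, the parameterization. You put the blow-up in the \emph{branching factor} of the tree ($k = n^{\Theta(1/\alpha)}$, $\ell \approx \alpha$ levels). The paper instead uses a \emph{binary} tree of depth $L = 3\alpha - 1$, so $2^L$ leaves with $2$ machines each; the $n^{\Omega(1/\alpha)}$ blow-up comes not from the branching factor but from a repetition parameter $P$: the recursive adversary \texttt{construct-instance}$(v)$ repeats its recursion on $v$'s two children $P$ times, so a job $j_v$ at level $\ell(v)$ is brought back $P^{L-\ell(v)+1}$ times. Second, you vary the \emph{sizes} of jobs across levels; the paper keeps all jobs of size $1$ (so the optimum makespan is $1$ at every \instant) and instead varies the \emph{reassignment cost} $c_{j_v} = P^{\ell(v)}$, splitting jobs into unit-cost ones only at the very end to recover the unweighted recourse bound. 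These two modelling choices are what make the subsequent accounting elementary.

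The more substantive difference is your step (3). Your ``spreading lemma'' — that in any $\alpha(n)T^*$-makespan schedule, each internal node's job mass is routed to its children within a $(1\pm o(1))$ factor of uniform — is not true as stated, and I don't see a local fix. A schedule with makespan $\alpha(n) T^*$ is entirely free to pile an $\alpha(n)$-fold excess onto a few machines, so concentration by a factor $\alpha(n)$ (or even more, if it shorts other nodes) is permitted; no per-node uniformity follows. You correctly identify this as ``the main obstacle,'' but the obstacle is not a bookkeeping issue — it is that the claimed lemma is false. The paper sidesteps it with a purely \emph{global} volume argument: for each leaf $u$, the sum $\sum_{v\ \text{ancestor of}\ u} f_v(u)$ is at most twice the makespan (two machines per leaf); summing this over all $2^L$ leaves, and comparing against the total job mass $\sum_v 2^{\ell(v)} = (L+1)2^L$ that would have to be placed, shows a constant-fraction shortfall whenever the makespan is $\le (L+1)/3$. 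That shortfall must be removed and re-inserted $P^{L-\ell(v)+1}$ times, each removal costing $c_{j_v} = P^{\ell(v)}$, and the arithmetic falls out. No claim about how any individual node distributes its jobs is needed, and the paper even relaxes the algorithm's task (it may leave fractions of jobs unscheduled, paying recourse per unit removed, and may assume a single reusable assignment $f_v$ per vertex) — making the shortfall bound clean to state.

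So: same spirit, but your proposal chooses parameters differently, uses sized jobs rather than reassignment costs, and, most importantly, rests on a per-node uniformity lemma that the makespan constraint does not give you. Replacing that lemma by the paper's sum-over-leaves counting argument is what actually closes the gap; absent that, the proof as outlined has a genuine hole at exactly the step you singled out.
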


	To see why an algorithm needs to incur a large recourse in the fully dynamic model, consider the following simple instance. There are 2 machines, and $n'$ small jobs of size $1$ each arrive at the beginning of the algorithm. Each of them can be assigned to the two machines.  Big jobs have size $n'$ and they are 2 types of them: type-1 big jobs can only be assigned to machine 1, and type-2 big jobs can only be assigned to machine 2. Consider the following online updates: a type-1 big job arrives and departs, a type-2 big job arrives and departs, then a type-1 big job arrives and departs, and so on. Then at any time, there is a schedule of makespan $n'$ for active jobs. If the sequence is long enough, then any $(1.5-c)$-competitive algorithm must incur an amortized recourse of $\Omega(cn')$.  The constant 1.5 can be made arbitrarily close to 2, if we introduce more machines. However, to go beyond 2, we need to use a recursive construction, using the basic instance as a building block.  \medskip
	
	Since our algorithm only needs to maintain fractional schedules, we can make the problem more general by allowing each job $j$ to have a reassignment cost $c_j \in \Z_{\geq 0}$: Reassigning $x$ fraction of a job $j$ incurs a recourse of $xc_j$.  To make the reassignment costs uniform without changing the instance, one can break a job of size $p_j$ with reassignment cost $c_j$ into $c_j$ jobs of size $p/c_j$, each with reassignment cost $1$.  This will change the number of jobs in the instance, and we take care of the issue at the end of this section.  We compare the total recourse of the algorithm against $\sum_{j \in J} c_j$.  (Recall that every job arrives and departs only once).
	
	Now describe the instance with reassignment costs. Let $L \geq 1, P \geq 2^L$ be two integers. We construct a perfect binary tree $\bfT$ of with $L$ levels of edges; so there are $2^L$ leaves in the tree. The level of a vertex is $L$ minus its distance to the root in the tree: The root has level $L$ and the laves have level $0$.  Let $\bfr, \bfV, \bfV^\circ$ be the root, vertex set and leaf set of the tree $\bfT$ respectively.  For every $v \in \bfV \setminus \bfV^\circ$, let $\rmleft(v)$ and $\rmright(v)$ be the left and right child of $v$ respectively. For every $v \in \bfV$, let $\ell(v)$ be the level of $v$ and $\Lambda^\circ(v)$ be the set of leaves that are descendants of $v$.
	
	There are 2 machines at each leaf and so there are in total $m : = 2^{L+1}$ jobs.  There is a job $j_v$ of size $1$ for every $v \in \bfV$, and the job $j_v$ can be assigned to any machine in $\Lambda^\circ(v)$.  The reassignment cost $c_{j_v}$ of the job $j_v$  is defined as
	\begin{align*}
		c_{j_v} := P^{\ell(v)}.
	\end{align*} To guarantee that every job arrives and departs once, we make copies of these jobs so that each copy is used only once. 
		
	\begin{algorithm}
		\caption{construct-instance$(v)$ \hfill //$v \in \bfV$}
		\label{alg:construct-instance}
		\begin{algorithmic}[1]
			\State let $2^{\ell(v)}$ copies of the job $j_v$ arrive
			\If{$\ell > 0$}
			\State{\textbf{repeat} $P$ times:}
			\State \hspace{\algorithmicindent} construct-instance$(\ell-1, \rmleft(v))$
			\State \hspace{\algorithmicindent} construct-instance$(\ell-1, \rmright(v))$
			\EndIf
			\State let the $2^\ell$ copies of the job $j_v$ created at step $1$ depart
		\end{algorithmic}
	\end{algorithm}

	The instance is constructed recursively by calling the procedure $\text{construct-instance}(\bfr)$, defined in Algorithm~\ref{alg:construct-instance}. After Step 1 of the procedure construct-instance$(v)$ for some leaf $u$, we have $2^L + 2^{L-1} + 2^{L-2} + \cdots +  2^0 = 2^{L+1} - 1$ active jobs:  For any ancestor $v$ of $u$, we have $2^{\ell(v)}$ active copies of $j_v$. It is easy to see that the active jobs can be scheduled on the machines so that each machine takes at most 1 job: For every strict ancestor $v$ of $u$ at level $\ell$, we schedule the $2^\ell$ copies of $j_v$ on $\Lambda^\circ(v')$, where $v'$ is the child of $v$ that is not an ancestor of $u$; the 1 copy of $j_u$ is scheduled on one machine at $u$ (the other machine is idle). Therefore, the optimum makespan at any time is at most $1$. 
	
	We show that any algorithm that maintains a fractional solution of makespan at most $\frac{L+1}{3}$ need to incur a large recourse.  First, we make our task easier, and we lower bound the recourse needed for this easier task.  In the task, the algorithm is allowed to \emph{remove} fractions of jobs, where the recourse for removing $x$ fraction of a job $j$ is $xc_j$. This means that a job does not need to be fully scheduled. The only requirement needed is the following: At the beginning of each iteration of Loop 3 in an execution of construct-instance$(v)$, the $2^{\ell(v)}$ copies of $j_v$ created must be brought back to machines (at $\Lambda^\circ(v)$). At the time, the algorithm is allowed to distribute the $2^\ell$ jobs fractionally on the machines at $\Lambda^\circ(v)$ in any manner, without incurring any recourse. 
	
	Clearly, the new task is not harder as any algorithm for the original task is an algorithm for the new one with the same recourse.  For the new task, the following assumptions on the algorithm can be made.
	\begin{itemize}
		\item First, we only need the removing operations, not the moving operations.  We can change the operation of moving a faction of a job from machine $i$ to $i'$ to the operation of removing the fractional job from $i$. This does not change the recourse of the operation, and can only decrease the loads of machines.  Also, bringing back jobs back later does not incur any recourse. 
		\item Then, focus on one iteration of Loop 3 in an execution of construct-instance$(v)$. We can assume all the removing operations for the $2^{\ell(v)}$ copies of $j_v$ in this operation are done immediately after they were brought back to the schedule.  Earlier removal can only decrease the machine loads. As a result, we can assume the following: at the beginning of the iteration, 
		 some fraction of the $2^{\ell(v)}$ jobs are put on the machines, and the recourse incurred is $c_{j_v}$ times the fractional number of jobs that are not put on the machines.  No removing is allowed anymore during the iteration. 
		\item 	Finally, for an execution of construct-instance$(v)$, we can assume the algorithm uses the same procedure for the $P$ iterations of Loop 3, as there are identical. 
	\end{itemize}

	Therefore, the algorithm for the modified task can be specified by one fractional assignment $f_v: \Lambda^\circ(v) \to \R_{\geq 0}$ for every vertex $v \in \bfV$: $f_v(u)$ the fractional number of copies of $j_v$ we assign to the two machines on $u$ in an execution of construct-instance$(v)$. It is guaranteed that $\sum_{u \in \Lambda^\circ(v)} f_v(u) \leq 2^{\ell(v)}$.
	
	Suppose at any time, the fractional schedule has makespan at most $\frac{L+1}{3}$. Then we have
	\begin{align}
		\sum_{v\text{ ancestor of }u} f_v(u) \leq 2 \times \frac{L+1}{3} = \frac{2(L+1)}{3}, \forall u \in \bfV^\circ.
		\label{inequ:lb-cong}
	\end{align}
	Summing up \eqref{inequ:lb-cong} over all leaves $u$ gives
	\begin{align*}
		\sum_{v \in \bfV, u \in \Lambda^\circ(v)} f_v(u) \leq \frac{2(L+1)}{3}\cdot 2^L.
	\end{align*}
	
	The recourse of the algorithm is then
	\begin{align*}
		&\quad \sum_{v \in \bfV}P^{L - \ell(v) + 1} \cdot c_{j_v} \cdot \left(2^{\ell(v)} - \sum_{u \in \Lambda^\circ(v)}f_v(u)\right)  =  P^{L + 1} \sum_{v \in \bfV}  \left(2^{\ell(v)} - \sum_{u \in \Lambda^\circ(v)}f_v(u)\right) \\
		&\geq P^{L+1}\left((L+1) \cdot 2^L- \frac{2(L+1)}{3}\cdot 2^L\right) = \frac{P^{L+1}\cdot(L+1)\cdot 2^L}{3}.
	\end{align*}
	In the above sequence, $P^{L - \ell(v) + 1}$ is the number of iterations of Loop 3 in all executions of construct-instance$(v)$, $c_{j_v} = P^{\ell(v)}$ is the per-unit recourse for copies of job $j_v$, and $\left(2^{\ell(v)} - \sum_{u \in \Lambda^\circ(v)}f_v(u)\right)$ is the fractional number of copies of $j_v$ that we did not schedule in each iteration of Loop 3. 

	The sum of $c_j$ over all arrived jobs is
	\begin{align*}
		\sum_{v \in \bfV} \big(2^{\ell(v)} \cdot P^{L - \ell(v)}\big) \cdot P^{\ell(v)} = P^L\sum_{v \in \bfV} 2^{\ell(v)} = P^L \cdot (L+1) \cdot 2^L.
	\end{align*}
	Above, $2^{\ell(v)}$ is the number of copies we create in each execution of construct-instance$(v)$, $P^{L-\ell(v)}$ is the number of times we run construct-instance$(v)$, and $P^{\ell(v)} = c_{j_v}$ is the recourse of a copy of $j_v$.

	Therefore, we proved that the total recourse is at least $\Omega(P)$ times $\sum_j c_j$.  Splitting jobs to make all reassignment costs equaling 1, we obtain an instance with $n = (L+1)\cdot (2P)^L$ jobs, with the amortized recourse $\Omega(P)$.  Therefore, for any integer $\alpha$, we can set $L = 3 \alpha - 1$, and $P = \floor{\left({\frac{n}{L+1}}\right)^{1/L}/2} = n^{\Omega(1/\alpha)}$.  This finishes the proof of Theorem~\ref{thm:lb}.

\end{document}